\newcommand{\be}{\begin{eqnarray*}}
\newcommand{\ee}{\end{eqnarray*}}
\newcommand{\ben}{\begin{eqnarray}}
\newcommand{\een}{\end{eqnarray}}
\newcommand{\ba}{\left(\begin{array} }
\newcommand{\ea}{\end{array}\right)}
\newcommand{\ban}{\begin{array} }
\newcommand{\ean}{\end{array}}
\theoremstyle{plain}
\newtheorem{thm}{Theorem}
\newtheorem{lem}{Lemma}
\newtheorem{def1}{Definition}
\newcommand{\blds}{\boldsymbol}
\newcommand{\hi}{{_\textup{hi}}}
\newcommand{\lo}{{_\textup{lo}}}
\newcommand{\tD}{{\textup{D}}}
\newcommand{\xhi}{\mf{x}{\hi}}
\newcommand{\ylo}{\mf{y}{\lo}}
\newcommand{\mf}{\mathbf}
\let\hat\widehat
\let\tilde\widetilde
\newcommand{\bl}[1]{\textcolor[rgb]{0,0,0}{#1}}
\title{Super-resolution with Binary Priors: Theory and Algorithms}
\author{Pulak Sarangi, Ryoma Hattori, Takaki Komiyama and Piya Pal}
\begin{document}

\maketitle
\begin{abstract}
The problem of super-resolution is concerned with the reconstruction of temporally/spatially localized events (or spikes) from samples of their convolution with a low-pass filter. 
Distinct from prior works which exploit sparsity in appropriate domains in order to solve the resulting ill-posed problem, this paper explores the role of binary priors in super-resolution, where the spike (or source) amplitudes are assumed to be binary-valued. Our study is inspired by the problem of neural spike deconvolution, but also applies to other applications such as symbol detection in hybrid millimeter wave communication systems. 
This paper makes several theoretical and algorithmic contributions to enable binary super-resolution with very few measurements. 
Our results show that binary constraints offer much stronger identifiability guarantees than sparsity, allowing us to operate in “extreme compression" regimes, where the number of measurements can be significantly smaller than the sparsity level of the spikes. To ensure exact recovery in this "extreme compression" regime, it becomes necessary to design algorithms that exactly enforce binary constraints without relaxation. 
In order to overcome the ensuing computational challenges, we consider a first order auto-regressive filter (which appears in neural spike deconvolution), and exploit its special structure. 
This results in a novel formulation of the super-resolution binary spike recovery in terms of binary search in one dimension. 
We perform numerical experiments that validate our theory and also show the benefits of binary constraints in neural spike deconvolution from real calcium imaging datasets.
\end{abstract}
\begin{IEEEkeywords} 
Binary compressed sensing, super-resolution, spike deconvolution, sparsity, binary search, beta-expansions
\end{IEEEkeywords}
\vspace{-0.4cm}
\section{Introduction}
The problem of recovering localized events (spikes) from their convolution with a blurring kernel, arises in a wide range of scientific and engineering applications such as fluorescence microscopy \cite{small2014fluorophore}, neural spike deconvolution \cite{brette2012handbook,vogelstein2009spike,deneux2016accurate}, 
hybrid millimeter wave (mmWave) communication \cite{yang2015fifty}, to name a few. 
Consider $K$ temporal spikes, which can be represented as:
\[
   \smash{x{\hi}(t)=\sum_{k=1}^{K} c_k \delta(t-n_k T_{\hi})} \]
Here, the high-rate spikes are supported on a fine temporal grid with spacing $T_{\hi}$, $n_k\in \mathbb{Z}$ is an integer corresponding to the time index of the $k^{\text{th}}$ spike and $c_k$ denotes its amplitude. The convolution of spikes with a filter $h(t)$ is typically uniformly (down)sampled at a (low) rate $T_{\lo}=\tD T_{\hi}$ ($\tD> 1$), yielding measurements: 
\begin{equation}
     \smash{y[n]=x{\hi}(t)\star h(t)\vert_{t=nT{\lo}}
     =\sum_{k=1}^{K} c_k h(nT_{\lo}-n_kT_{\hi})} \label{eqn:m1}
\end{equation}
The goal of super-resolution is to recover the spike locations $n_k$ and amplitudes $c_k$, $k=1,2,\cdots,K$ from a limited number ($M$) of low-rate samples $\{y[n] \}_{n=0}^{M-1}$.
The problem is typically ill-posed due to systematic attenuation of high-frequency contents of the spikes by the low-pass filter $h(t)$. In order to make the problem well-posed, it becomes necessary to exploit priors such as sparsity \cite{donoho1992superresolution,candes2014towards,li2020super,batenkov2021super} and/or non-negativity \cite{schiebinger2017superresolution,bendory2017robust}. In recent times, there has been a substantial progress towards developing efficient algorithms for provably solving the super-resolution problem  \cite{li2020super,batenkov2021super,candes2014towards,liao2016music,qiao2019guaranteed,qiao2019non,qiao2020super,shahsavari2021fundamental,qiao2018modulus,chi2020harnessing,bhaskar2013atomic,schiebinger2017superresolution,bendory2017robust}.


In this paper, we investigate the problem of \emph{binary super-resolution}, where the amplitudes of the spikes are known apriori to be $c_k=A$, but their number ($K$) and locations ($n_k$) are unknown. 
Motivated by the problem of neural spike deconvolution in two-photon calcium imaging \cite{grewe2010high,brette2012handbook}, we will focus on a blurring kernel that can be represented as a stable first order auto-regressive (AR(1)) filter. 
Each neural spike results in a sharp rise in Ca$^{2+}$ concentration followed by a slow exponential decay (modeled as the impulse response of an AR(1) filter), which results in an overlap of the responses from nearby spiking events, leading to poor temporal resolution  \cite{pnevmatikakis2016simultaneous,brette2012handbook}. 

\vspace{-0.4cm}
\subsection{Related Works}
Early works on super-resolution date back to algebraic/subspace-based techniques such as Prony's method, MUSIC \cite{schmidt1986multiple,liao2016music}, ESPRIT \cite{roy1989esprit,li2020super} and matrix pencil \cite{hua1990matrix,batenkov2021super}. Following the seminal work in \cite{donoho1992superresolution}, substantial progress has been made in understanding the role of sparsity as a prior for super-resolution \cite{candes2014towards,bernstein2019deconvolution,koulouri2020adaptive}. In recent times, convex optimization-based techniques have been developed that employ Total Variational (TV) norm and atomic norm regularizers, in order to promote sparsity \cite{candes2014towards,bernstein2019deconvolution,koulouri2020adaptive,bhaskar2013atomic,chi2020harnessing} and/or non-negativity \cite{bendory2017robust,schiebinger2017superresolution,morgenshtern2016super}. These techniques primarily employ sampling in the Fourier/frequency domain by assuming the kernel $h(t)$ to be  (approximately) bandlimited. However, selecting the appropriate cut-off frequency is crucial for super-resolution and needs careful consideration  \cite{batenkov2019rethinking,bernstein2019deconvolution}.
Unlike subspace-based methods, theoretical guarantees for these convex algorithms rely on a minimum separation between the spikes, which is also shown to be necessary even in absence of noise \cite{da2018tight}. 
The finite rate of innovation (FRI) framework \cite{blu2008sparse,uriguen2013fri,onativia2013finite,tur2011innovation,rudresh2017finite} also considers the recovery of spikes from measurements acquired using 
an exponentially decaying kernel, which includes the AR(1) filter considered in this paper. 
In the absence of noise, FRI enables the exact recovery of $K$ spikes with {\em arbitrary amplitudes} from $M=\Omega(K)$\footnote{\bl{This notation essentially means that there exists a positive constant c such that $M\geq c K$.}} measurements, without any separation condition \cite{onativia2013finite}. It is to be noted that all of the above methods require $M>K$ measurements for resolving $K$ spikes. In contrast, we will show that it is possible to recover $K$ spikes from $M \ll K$  measurements by exploiting the {\em binary nature of the spiking signal}. The above algorithms are designed to handle {\em arbitrary real-valued amplitudes} and as such, they are oblivious to binary priors. Therefore, they cannot successfully recover spikes in the regime $M<K$, which is henceforth referred to as the  \emph{extreme compression regime}.

The problem of recovering binary signals from underdetermined linear measurements 
(with more unknowns than equations/measurements) has been recently studied under the parlance of Binary Compressed Sensing (BCS)\cite{stojnic2010recovery,keiper2017compressed,flinth2019recovery,fosson2019recovery,tian2009detection,sarangi2021no,sarangi2022bin,razavikia2019reconstruction}. 
In BCS, the undersampling operation employs random (and typically dense) sampling matrices, whereas we consider a deterministic and structured measurement matrix derived from a filter, followed by uniform downsampling. Moreover, existing theoretical guarantees for BCS crucially rely on sparsity assumptions that will be shown to be inadequate for our problem (discussed in Section II-C). Most importantly, in order to achieve computational tractability, BCS relaxes the binary constraints and solves continuous-valued optimization problems. Consequently, their theoretical guarantees do not apply in the extreme compression regime $M<K$.

As mentioned earlier, our study is motivated by the problem of neural spike deconvolution arising in calcium imaging \cite{friedrich2017fast,deneux2016accurate,grewe2010high,vogelstein2009spike,jewell2020fast,onativia2013finite,sarangi2020effect}. A majority of the existing spike deconvolution techniques\cite{friedrich2017fast,deneux2016accurate,jewell2020fast} infer the spiking activity at \emph{the same (low) rate at which the fluorescence signal is sampled}, and a single estimate such as spike counts or rates are obtained over a temporal bin equal to the resolution of the imaging rate. 
Although sequential Monte-Carlo based techniques have been proposed that generate spikes at a rate higher than the calcium frame rate\cite{vogelstein2009spike}, no theoretical guarantees are available that prove that these methods can indeed uniquely identify the high-rate spiking activity. 
Algorithms that rely on sparsity and non-negativity \cite{friedrich2017fast,jewell2020fast} alone are ineffective for inferring the neural spiking activity that occurs at a much higher rate than the calcium sampling rate. On the other hand, at the high-rate, the spiking activity is often assumed to be binary since the probability of two or more spikes occurring within two time instants on the fine temporal grid is negligible\cite{brette2012handbook,rupasinghe2020robust}. Therefore, we propose to exploit the inherent binary nature of the neural spikes and provide the first theoretical guarantees that it is indeed possible to resolve the high-rate binary neural spikes from calcium fluorescence signal acquired at a much lower rate.
\vspace{-0.2cm}
\subsection{Our Contributions}
We make both theoretical and algorithmic contributions to the problem of binary super-resolution \bl{in the setting when the spikes lie on a fine grid}. 
We theoretically establish that at very low sampling rates, sparsity and non-negativity are inadequate for the exact reconstruction of binary spikes (Lemma 2). However, by exploiting the binary nature of the spiking activity, much stronger identifiability results  can be obtained compared to classical sparsity-based results (Theorem 1). In the absence of noise, we show that it is possible to uniquely recover $K$ binary spikes from only $M=\Omega(1)$ low-rate measurements. The analysis also provides interesting insights into the interplay between binary priors and the ``infinite memory" of the AR(1) filter. 

Although it is possible to uniquely identify binary spikes in the extreme compression regime ($M\ll K$), the combinatorial nature of binary constraints introduce computational hurdles in exactly enforcing them. Our second contribution is to leverage the special structure of the AR(1) measurements to overcome this computational challenge in the extreme compression regime $M<K$ (Section III-A).
Our formulation reveals an interesting and novel connection between binary super-resolution, and finding the generalized radix representation of real numbers, known as $\beta$-expansion\cite{sidorov2003almost,glendinning2001unique,renyi1957representations} (Section III). In order to circumvent the problem of exhaustive search, we pre-construct and store (in memory) a binary tree that is completely determined by the model parameters (filter and undersampling factor). 
When the low-rate measurements are acquired, we can {\em efficiently perform a binary search} to traverse the tree and find the desired binary solution. This ability to trade-off memory for computational efficiency is made possible by the unique structure of the measurement model governed by the AR(1) filter. The algorithm guarantees exact super-resolution even when the measurements are corrupted by a small bounded (adversarial) noise, the strength of which depends on the AR filter parameter and the undersampling factor. When the measurements are corrupted by additive Gaussian noise, we characterize the probability of erroneous decoding (Theorem 3) in the extreme compression regime $M<K$ and indicate the trade-off among the filter parameter, SNR and the extent of compression. Finally, we also demonstrate how binary priors can improve the performance of a popularly used spike deconvolution algorithm (OASIS \cite{friedrich2017fast}) on real calcium imaging datasets.
\vspace{-0.2cm}
\section{Fundamental Sample Complexity of Binary Super-resolution}
Let $y{\hi}[n]$ be the output of a stable first-order Autoregressive AR(1) filter with parameter $\alpha$, $0< \alpha <1$, driven by an unknown binary-valued input signal $x{\hi}[n]\in \{0,A\}$, $A>0$:
\begin{align}
y{\hi}[n]=\alpha y{\hi}[n-1]+ x{\hi}[n]\label{eqn:AR11}
\end{align}
In this paper, we consider a super-resolution setting where we do not directly observe $y{\hi}[n]$, and instead acquire $M$ measurements $\{y{\lo}[n]\}_{n=0}^{M-1}$ at a lower-rate by uniformly subsampling $y{\hi}[n]$ by a factor of $\tD$:
\begin{align}
 y{\lo}[n]=y{\hi}[\tD n], \quad n=0,1,\cdots,M-1, \label{eqn:low_AR}
\end{align}
The signal $y{\lo}[n]$ corresponds to a filtered and downsampled version of the signal $x{\hi}[n]$ where the filter is an infinite impulse response (IIR) filter with a single pole at $\alpha$. Let $\mathbf{y}{\lo} \in \mathbb{R}^{M}$ be a vector obtained by stacking the low-rate measurements $\{y{\lo}[n]\}_{n=0}^{M-1}$:
\begin{align*}
    \mathbf{y}{\lo}=[y{\lo}[0],y{\lo}[1],\cdots,y{\lo}[M-1]]^\top
\end{align*}
Since $\eqref{eqn:AR11}$ represents a causal filtering operation, the low rate signal $\mf{y}{\lo}$ only depends on the present and past high-rate binary signal. Denote $L:=(M-1)\tD+1$. The $M$ low-rate measurements in $\mf{y}{\lo}$ are a function of $L$ samples of the high rate binary input signal $\{x{\hi}[n]\}_{n=0}^{L-1}$. These $L$ samples are given by the following vector $\xhi\in \{0,A\}^L$:
\begin{align*}
    \xhi:=[x{\hi}[0],x{\hi}[1],\cdots,x{\hi}[L-1]]^\top.
\end{align*}
Assuming the system to be initially at rest, i.e., $y{\hi}[n]=0, n< 0$, we can represent the $M$ samples from \eqref{eqn:low_AR} in a compact matrix-vector form as:
\begin{equation}
    \smash{\mathbf{y}{\lo}:=\mathbf{S}_{\tD}\mf{y}{\hi}=\mathbf{S}_{\tD}\mathbf{G}_{\alpha}\xhi} \label{eqn:AR_mat}
\end{equation}
where $\mathbf{G}_\alpha\in \mathbb{R}^{L\times L}$ is a Toeplitz matrix given by:
\begin{equation}
\mathbf{G}_\alpha=\begin{bmatrix}
1 & 0 &\cdots &0\\
\alpha & 1 &\cdots &0\\
\vdots & \vdots & \ddots & \vdots\\
\alpha ^{L-1} & \alpha ^{L-2} &\cdots &1
\end{bmatrix} \label{eqn:G_alpha}
\end{equation}
and $\mathbf{S}_{\tD}\in \mathbb{R}^{M\times L}$ is defined as:
\begin{equation*}
\smash{[\mathbf{S}_{\tD}]_{i,j}=\begin{cases}
1, \quad j=(i-1)\tD+1\\
0, \text{ else }
\end{cases}.}
\end{equation*}
The matrix $\mathbf{S}_{\tD}$ represents 
the $\tD-$fold downsampling operation. Our goal is to infer the unknown high-rate binary input signal $x{\hi}[n]$ from the low-rate measurements $y{\lo}[n]$. This is essentially a ``super-resolution" problem because the AR(1) filter first attenuates the high-frequency components of $x{\hi}[n]$, and the uniform downsampling operation systematically discards measurements. As a result, it may seem that the spiking activity $\{ x{\hi}[(n-1)\tD+k]\}_{k=1}^{\tD}$ occurring ``in-between" two low-rate measurements $y{\lo}[n-1]$ and $y{\lo}[n]$ is apparently lost. One can potentially interpolate arbitrarily, making the problem hopeless. In the next section, we will show that surprisingly, $\xhi$ still remains identifiable from $\mf{y}{\lo}$ in the absence of noise, due to the binary nature of $\xhi$ and ``infinite memory" of the AR(1) filter.
\vspace{-0.2cm}
\subsection{Identifiability Conditions for Binary super-resolution}
Consider the following partition of $\xhi$ into $M$ disjoint blocks, where the first block is a scalar and the remaining $M-1$ blocks are of length $\tD$,
$
\xhi=[x{\hi}^{(0)},\mf{x}{\hi}^{(1)\top},\dots,\mf{x}{\hi}^{(M-1)\top}]^{\top}
$.
Here, $x{\hi}^{(0)}=x{\hi}[0]$ and $\mathbf{x}{\hi}^{(n)}\in \{0,A\}^{\tD}$ is given by:
\begin{align}
 [\mathbf{x}{\hi}^{(n)}]_k=x{\hi}[(n-1)\tD+k],\quad1\leq n \leq M-1 \label{eqn:block_x}
\end{align} 
The sub-vectors $\mathbf{x}{\hi}^{(n)},\text{ and }\mathbf{x}{\hi}^{(n-1)}$ ($n\geq 1$) represent consecutive and disjoint blocks (of length $\tD$) of the high-rate binary spike signal. In order to study the identifiability of $\xhi$ from $\ylo$, we first introduce an alternative (but equivalent) representation for \eqref{eqn:AR_mat}, by constructing a sequence $c[n]$ as follows $c[0]=y{\lo}[0],$
\begin{align}
    c[n]=y{\lo}[n]-\alpha^\tD y{\lo}[n-1], \ 1\leq n \leq M-1 \label{eqn:c_comp}
\end{align}
Given the high rate AR(1) model defined in \eqref{eqn:AR11}, it is possible to recursively represent $y{\hi}[\tD n]$ in terms of $y{\hi}[\tD n-1]$, which in turn, can be represented in terms of $y{\hi}[\tD n-2]$, and so on. By this recursive relation, we can represent $y{\hi}[\tD n-1]$ in terms of $y{\hi}[\tD n-\tD]$ and $\{x{\hi}[\tD n-i]\}_{i=0}^{\tD-1}$ and re-write $y{\lo}[n]$ as
\vspace{-0.2cm}
\begin{align}
    y{\lo}[n]&=y{\hi}[\tD n]=\alpha y{\hi}[\tD n-1]+x{\hi}[\tD n] \notag\\
    &=\alpha^{\tD} y{\hi}[\tD n-\tD]+\alpha^{\tD-1} x{\hi}[\tD(n-1)+1]+\cdots \notag\\
    &\qquad \qquad +\alpha x{\hi}[\tD n-1]+x{\hi}[\tD n], \notag\\
    y{\lo}[n]&-\alpha^{\tD} y{\lo}[n-1]=\alpha^{\tD-1} x{\hi}[\tD(n-1)+1]+\cdots \notag\\
    &\qquad \qquad +\alpha x{\hi}[\tD n-1]+x{\hi}[\tD n] \label{eqn:per_m}
\end{align}
The last equality holds due to the fact that $y{\lo}[n-1]=y{\hi}[\tD n-\tD]$. Combining \eqref{eqn:c_comp} and \eqref{eqn:per_m}, the sequence $c[n]$ can be re-written as $c[0]=y{\lo}[0]=x{\hi}^{(0)}$, and for $1\leq n \leq M-1$
\begin{align}
    c[n]=\sum_{i=1}^{\tD} \alpha^{\tD-i}x{\hi}[(n-1)\tD+i]=\mathbf{h}_{\alpha}^T\mathbf{x}{\hi}^{(n)} \label{eqn:scal1}
\end{align}
where $\mathbf{h}_{\alpha}=[\alpha^{\tD-1}, \alpha^{\tD-2},\dots,\alpha,1]^T \in \mathbb{R}^{\tD}$. This implies that $c[n]$ depends only on the block $\mathbf{x}{\hi}^{(n)}$. Denote  $\mathbf{c} := [c[0],c[1],\dots,c[M-1]]^{\top}\in \mathbb{R}^{M}$. For any $\tD$, \eqref{eqn:scal1} can be compactly represented as:
\vspace{-0.2cm}
\begin{align}
    \mathbf{c}=\mathbf{H}_{\tD}(\alpha)\xhi \label{eqn:binary_eq}
\end{align}
where $\mathbf{H}_{\tD}(\alpha)\in \mathbb{R}^{M\times L}$ is given by:
\begin{align*}
\mathbf{H}_{\tD}(\alpha)=\begin{bmatrix}
1 & \mf{0}^{\top} & \mf{0}^{\top} & \cdots& \mf{0}^{\top}\\
0 & \mathbf{h}_{\alpha}^{\top}& \mf{0}^{\top}& \cdots & \mf{0}^{\top}\\
0 &  \mf{0}^{\top}& \mathbf{h}_{\alpha}^{\top}&\cdots & \mf{0}^{\top}\\
\vdots & \vdots & \vdots & \ddots & \vdots\\
0 &\mf{0}^{\top} & \mf{0}^{\top}& \cdots &\mathbf{h}_{\alpha}^{\top}
\end{bmatrix}
\end{align*}
The following Lemma establishes the equivalence between \eqref{eqn:AR_mat} and \eqref{eqn:binary_eq}.
\begin{lem}\label{lem:equival}
Given $\mf{y}_{\textup{lo}}$, construct $\mathbf{c}$ following \eqref{eqn:c_comp}. Then, there is a unique binary $\mathbf{x}_{\textup{hi}}\in \{0,A\}^{L}$ satisfying \eqref{eqn:AR_mat} if and only if $\mathbf{x}_{\textup{hi}}$ is a unique binary vector satisfying \eqref{eqn:binary_eq}.
\end{lem}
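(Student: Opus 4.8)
The plan is to show that the map from $\mf{y}\lo$ to $\mathbf{c}$ defined by \eqref{eqn:c_comp} is an invertible linear transformation, and that under this transformation the system \eqref{eqn:AR_mat} is converted exactly into the system \eqref{eqn:binary_eq}. Once that is established, the two systems have identical solution sets (as subsets of $\{0,A\}^L$), and in particular one has a unique binary solution if and only if the other does.

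First I would make the invertibility of the $\mf{y}\lo \mapsto \mathbf{c}$ map explicit. From \eqref{eqn:c_comp}, the vector $\mathbf{c}$ is obtained from $\mf{y}\lo$ by applying the $M\times M$ lower-bidiagonal Toeplitz matrix with $1$'s on the diagonal and $-\alpha^{\tD}$ on the first subdiagonal; call it $\mathbf{B}$. Since $\mathbf{B}$ is lower triangular with unit diagonal, it is invertible (indeed $\det \mathbf{B}=1$), so $\mathbf{c}=\mathbf{B}\,\mf{y}\lo$ is a bijection between $\mathbb{R}^M$ and $\mathbb{R}^M$, with inverse given by the recursion $y\lo[0]=c[0]$, $y\lo[n]=c[n]+\alpha^{\tD} y\lo[n-1]$. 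Thus, given $\mf{y}\lo$, the constructed $\mathbf{c}$ is uniquely determined, and conversely $\mathbf{c}$ determines $\mf{y}\lo$.

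Next I would verify that $\mathbf{B}\,\mathbf{S}_{\tD}\mathbf{G}_{\alpha} = \mathbf{H}_{\tD}(\alpha)$ as $M\times L$ matrices. This is precisely the content of the computation in \eqref{eqn:per_m}–\eqref{eqn:scal1}: row $0$ of both sides picks out $x\hi[0]$, and for $1\le n\le M-1$, applying the AR(1) recursion $\tD$ times expresses $y\hi[\tD n]$ in terms of $y\hi[\tD n - \tD]=y\lo[n-1]$ and the intervening inputs, so that $c[n]=y\lo[n]-\alpha^{\tD}y\lo[n-1]=\mathbf{h}_{\alpha}^{\top}\mathbf{x}\hi^{(n)}$. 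Hence for any $\xhi\in\{0,A\}^L$, equation \eqref{eqn:AR_mat} holds ($\mf{y}\lo=\mathbf{S}_{\tD}\mathbf{G}_{\alpha}\xhi$) if and only if, after left-multiplication by the invertible $\mathbf{B}$, $\mathbf{c}=\mathbf{B}\mf{y}\lo=\mathbf{B}\mathbf{S}_{\tD}\mathbf{G}_{\alpha}\xhi=\mathbf{H}_{\tD}(\alpha)\xhi$, which is \eqref{eqn:binary_eq}. Therefore the set of binary vectors satisfying \eqref{eqn:AR_mat} equals the set satisfying \eqref{eqn:binary_eq}, and uniqueness of a binary solution transfers between the two formulations.

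I do not anticipate a serious obstacle here — the result is essentially a change of variables — but the step requiring the most care is the bookkeeping in the matrix identity $\mathbf{B}\,\mathbf{S}_{\tD}\mathbf{G}_{\alpha}=\mathbf{H}_{\tD}(\alpha)$: one must track the index offsets between the high-rate grid (length $L$, with $L=(M-1)\tD+1$), the downsampled rows, and the block partition $\xhi=[x\hi^{(0)},\mathbf{x}\hi^{(1)\top},\dots,\mathbf{x}\hi^{(M-1)\top}]^{\top}$, and confirm that the "telescoping" of the $y\lo$ terms exactly cancels the tail contributions from blocks $\mathbf{x}\hi^{(0)},\dots,\mathbf{x}\hi^{(n-1)}$, leaving only $\mathbf{h}_{\alpha}^{\top}\mathbf{x}\hi^{(n)}$. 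Since the excerpt already carries out this computation in \eqref{eqn:per_m}–\eqref{eqn:scal1}, the proof of the Lemma can simply invoke it together with the invertibility of $\mathbf{B}$.
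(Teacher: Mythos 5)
Your proposal is correct and rests on the same computation as the paper's own proof: the telescoping identity $c[n]=\mathbf{h}_{\alpha}^{\top}\mathbf{x}\hi^{(n)}$ from \eqref{eqn:per_m}--\eqref{eqn:scal1} together with the inverse recursion $y\lo[n]=c[n]+\alpha^{\tD}y\lo[n-1]$, which the paper deploys as a two-way proof by contradiction rather than as a single invertible change of variables. Your packaging via the unit lower-bidiagonal matrix $\mathbf{B}$ is a slightly cleaner presentation of the same argument, since it directly exhibits the two systems as having identical solution sets.
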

\begin{proof}
First suppose that there is a unique binary $\xhi\in \{0,A\}^{L}$ satisfying \eqref{eqn:AR_mat} but \eqref{eqn:binary_eq} has a non-unique binary solution, i.e., there exists $\xhi^{\prime}\in \{0,A\}^{L}$, $\xhi^{\prime}\neq \xhi$, such that 
\ben
\smash{\mf{c}=\mf{H}_{\tD}(\alpha)\xhi=\mf{H}_{\tD}(\alpha)\xhi^{\prime}} \label{eqn:TSol}
\een
Define $\mf{y}{\hi}^{\prime}:=\mf{G}_{\alpha}\xhi^{\prime}$ whose entries are given by:
\begin{align}
    y{\hi}^{\prime}[n]=\sum_{k=0}^{n}\alpha^{n-k}x{\hi}^{\prime}[k], \quad 0\leq n \leq L-1 \label{eqn:y_prim}
\end{align}
Notice that \eqref{eqn:c_comp} can be re-written as 
\vspace{-0.2cm}
\begin{align*}
&y{\lo}[0]=c[0]=x{\hi}[0],y{\lo}[1]=c[1]+\alpha^{\tD}y{\lo}[0]=c[1]+\alpha^{\tD}c[0]\\
&y{\lo}[2]=c[2]+\alpha^{\tD}y{\lo}[1]=c[2]+\alpha^{\tD}c[1]+\alpha^{2\tD}c[0] \\
&\smash{\vdots}
\end{align*}
Following this recursive relation, and using \eqref{eqn:scal1} and \eqref{eqn:TSol}, 
we can further re-write $y{\lo}[n]$ as:
\begin{align}
    y{\lo}[n]&=\sum_{i=0}^{n} \alpha^{(n-i)\tD} c[i]=\alpha^{n\tD}x^{\prime}{\hi}^{(0)}+\sum_{i=1}^{n}\alpha^{(n-i)\tD} \mf{h}_{\alpha}^{\top} \xhi^{\prime(i)}\notag\\
    &=\alpha^{n\tD}x^{\prime}{\hi}^{(0)}+\sum_{i=1}^{n}\sum_{j=1}^{\tD} \alpha^{n\tD-(i-1)\tD-j} x^{\prime}{\hi}[(i-1)\tD+j]\notag\\
    &\overset{(a)}{=}
    \sum_{k=0}^{n\tD}\alpha^{n\tD-k}x^{\prime}{\hi}[k]\overset{(b)}{=}y^{\prime}{\hi}[n\tD]\label{eqn:scal2mat}
\end{align}
The equality $(a)$ follows by a re-indexing of the summation into a single sum, and $(b)$ follows from \eqref{eqn:y_prim}. By arranging \eqref{eqn:scal2mat} in a matrix form we obtain the following relation:
\be
\mf{y}{\lo}=\mf{S}_{\tD}\mf{G}_{\alpha}\xhi^{\prime}
\ee
However from \eqref{eqn:AR_mat}, we have $\mf{y}{\lo}=\mf{S}_{\tD}\mf{G}_{\alpha}\xhi$. This contradicts the supposition that \eqref{eqn:AR_mat} has a unique binary solution.

Next, suppose that \eqref{eqn:binary_eq} has a unique binary solution but the binary solution to \eqref{eqn:AR_mat} is non-unique, i.e., there exists $\xhi^{\prime} \in \{0,A\}^{L}$,  $\xhi^{\prime}\neq \xhi$ such that 
\be
\ylo=\mf{S}_{\tD}\mf{G}_{\alpha}\xhi^{\prime}=\mf{S}_{\tD}\mf{G}_{\alpha}\xhi
\ee
By following \eqref{eqn:c_comp} and \eqref{eqn:binary_eq}, we also have $\mf{c}=\mathbf{H}_{\tD}(\alpha)\xhi^{\prime}=\mathbf{H}_{\tD}(\alpha)\xhi$ which contradicts the assumption that solution of \eqref{eqn:binary_eq} is unique. 
\end{proof}
Lemma \ref{lem:equival} assures that a binary $\xhi$ is uniquely identifiable from measurements $\mf{y}{\lo}$ if and only if there is a unique binary solution $\xhi\in \{0,A\}^L$ to \eqref{eqn:binary_eq}. From \eqref{eqn:scal1}, it can be seen that $c[n]$ and $c[n-1]$ have contributions from only disjoint blocks of high rate spikes $\xhi^{(n)},\text{ and }\xhi^{(n-1)}$.
Hence effectively, we only have a \emph{single scalar measurement} $c[n]$ to decode an entire block $\mathbf{x}{\hi}^{(n)}$ of length $\tD$, regardless of how sparse it is. The task of decoding $\xhi^{(n)}$ from a single measurement seems like a hopelessly ``ill-posed" problem, caused by the uniform downsampling operation. But this is precisely where the binary nature of $\xhi$ can be used as a powerful prior to make the problem well-posed. Theorem \ref{thm:ident} specifies conditions under which it is possible to do so.
\begin{thm}(Identifiability)\label{thm:ident}
For any $\alpha \in (0,1)$, with the possible exception of $\alpha$ belonging to \bl{a set} of Lebesgue measure zero, 
there is a unique $\mathbf{x}_{\textup{hi}}\in \{0,A\}^{L}$ that satisfies (\ref{eqn:binary_eq}) for every $\tD\geq 1$.
\end{thm}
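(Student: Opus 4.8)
The plan is to invoke Lemma~\ref{lem:equival} to reduce the claim to uniqueness of the binary solution of \eqref{eqn:binary_eq}, and then to exploit the block structure of $\mf{H}_{\tD}(\alpha)$. Since $c[0]=x{\hi}[0]$ and, by \eqref{eqn:scal1}, the scalar $c[n]$ depends only on the block $\xhi^{(n)}$ through $c[n]=\mf{h}_\alpha^\top\xhi^{(n)}$ for $1\le n\le M-1$, the vector $\xhi\in\{0,A\}^L$ is the unique binary solution of \eqref{eqn:binary_eq} \emph{if and only if}, for each $n$, the block $\xhi^{(n)}$ is the unique element of $\{0,A\}^{\tD}$ producing the observed $c[n]$ (if some block admitted two preimages, swapping it while fixing the others would produce a distinct global binary solution). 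The first block is trivial because $c[0]\in\{0,A\}$ determines $x{\hi}[0]$. Hence identifiability for a given $\tD$ is equivalent to injectivity of the linear map $\mf{v}\mapsto\mf{h}_\alpha^\top\mf{v}$ on the finite set $\{0,A\}^{\tD}$.

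Next I would translate this injectivity into an algebraic condition on $\alpha$. Writing $\mf{v}=A\mf{b}$ with $\mf{b}\in\{0,1\}^{\tD}$ and reindexing the exponents of $\mf{h}_\alpha=[\alpha^{\tD-1},\dots,\alpha,1]^\top$, one gets $\mf{h}_\alpha^\top\mf{v}=A\sum_{j=0}^{\tD-1}\tilde b_j\,\alpha^{j}$ for a relabelled $\tilde{\mf b}\in\{0,1\}^{\tD}$, so injectivity fails precisely when there exist distinct $\mf{b},\mf{b}'\in\{0,1\}^{\tD}$ with $\sum_{j=0}^{\tD-1}(\tilde b_j-\tilde b_j')\,\alpha^{j}=0$; equivalently, when $\alpha$ is a real root of some nonzero polynomial $p(z)=\sum_{j=0}^{\tD-1}\epsilon_j z^{j}$ with coefficients $\epsilon_j\in\{-1,0,1\}$. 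Let $E_{\tD}$ denote the set of all real roots of all such polynomials of degree at most $\tD-1$. There are at most $3^{\tD}$ such polynomials and each nonzero one has at most $\tD-1$ roots, so $E_{\tD}$ is finite and in particular has Lebesgue measure zero. For every $\alpha\in(0,1)\setminus E_{\tD}$ the above map is injective, hence $\xhi$ is uniquely identifiable at subsampling factor $\tD$.

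Finally, to get a single exceptional set valid for \emph{all} $\tD\ge1$ simultaneously, I would set $E=\bigcup_{\tD\ge1}E_{\tD}$: a countable union of finite sets, hence countable, hence of Lebesgue measure zero. For any $\alpha\in(0,1)\setminus E$, identifiability holds for every $\tD\ge1$, which is exactly the statement. There is no substantial obstacle; the only point requiring care is the order of quantifiers — the exceptional set must be chosen uniformly in $\tD$ — which is precisely what the countable-union argument delivers, together with the observation that for each fixed $\tD$ the ``bad'' parameters are the zero set of finitely many fixed $\{-1,0,1\}$-coefficient polynomials.
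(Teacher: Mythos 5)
Your proposal is correct and follows essentially the same route as the paper: reduce uniqueness to blockwise injectivity of $\mf{v}\mapsto\mf{h}_\alpha^\top\mf{v}$ on $\{0,A\}^{\tD}$, observe that a collision forces $\alpha$ to be a root of a nonzero polynomial with coefficients in $\{-1,0,1\}$ and degree at most $\tD-1$, and conclude that the exceptional set is finite (at most $(\tD-1)(3^{\tD}-1)$ points), hence of measure zero. Your explicit countable union over $\tD$ to make the exceptional set uniform in $\tD$ is a small but worthwhile tightening of the quantifier order that the paper's proof leaves implicit.
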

\begin{proof}
In Appendix A.
\end{proof}

Using Lemma \ref{lem:equival} and Theorem \ref{thm:ident}, we can conclude that $\xhi$ is uniquely identifiable from $\ylo$ for almost all $\alpha \in (0,1)$. \bl{It can be verified that for $\alpha=1$ the mapping is non-injective.}
Theorem \ref{thm:ident} establishes that it is fundamentally possible to decode each block $\xhi^{(n)}$ of length D, from effectively a single measurement $c[n]$. Since $\xhi^{(n)}$ can take $2^{\tD}$ possible values, in principle, one can always perform an exhaustive search over these $2^\tD$ possible binary sequences and by Theorem \ref{thm:ident}, only one of them will satisfy $c[n]=\mathbf{h}_{\alpha}^{\top}\xhi^{(n)}$. Since exhaustive search is computationally prohibitive, this leads to the natural question regarding alternative solutions. In Section III, we will develop an alternative algorithm that leverages the trade-off between memory and computation to achieve a significantly lower run-time decoding complexity.

\vspace{-0.3cm}
\subsection{Comparison with Finite Rate of Innovation Approach}
In a related line of work \cite{blu2008sparse,uriguen2013fri,onativia2013finite,rudresh2017finite}, the FRI framework has been developed to reconstruct spikes from the measurement model considered here. However, in the general FRI framework, there is no assumption on the amplitude of the spikes, and there are a total of $2\tD$ real valued unknowns corresponding to the locations and amplitudes of $\tD$ spikes. In \cite{onativia2013finite}, it was shown that by leveraging the property of exponentially reproducing kernels, it is possible to recover arbitrary amplitudes and spike locations using Prony-type algorithms, provided at least $2\tD+1 (>\tD)$ low-rate measurements are available. However, since we exploit the binary nature of spiking activity, we can operate at a much smaller sample complexity than FRI. In fact, Theorem $1$ shows that when we exploit the fact that the spikes occur on a high-resolution grid with binary amplitudes, $M=\Omega(1)$ measurements suffice to identify $\tD$ spikes regardless of how large $\tD$ is. A direct application of the FRI approach cannot succeed in this regime, since the number of spikes is larger than the number of measurements. That being said, with enough measurements, FRI techniques are powerful, and they can also identify off-grid spikes. In future, it would be interesting to combine the two approaches by incorporating binary priors to FRI based techniques and remove the grid assumptions. 
\vspace{-0.2cm}
\subsection{Curse of Uniform Downsampling: Inadequacy of sparsity and non-negativity}\label{subsec:spar}
By virtue of being a binary signal, $\xhi$ is naturally sparse and non-negative. Therefore, one may ask if sparsity and/or non-negativity are sufficient to uniquely identify $\xhi$ from $\mf{c}$, without the need for imposing any binary constraints. In particular, we would like to understand if the solution to the following problem that seeks the sparsest non-negative vector in $\mathbb{R}^{L}$ satisfying \eqref{eqn:binary_eq} indeed coincides with the true $\xhi\in \{0,A\}^{L}$
\begin{align*}
    \min_{\mathbf{x}\in \mathbb{R}^L}\quad \Vert \mathbf{x} \Vert_0 \quad\text{subject to } \mathbf{c}=\mathbf{H}_{\tD}(\alpha) \mathbf{x}, \quad \mathbf{x} \geq \mathbf{0} \tag{P0}\label{eqn:P0}
\end{align*}
\begin{lem}\label{lem:sp_l0}
For every $\xhi\in \{0,A\}^{L}$ (except $\xhi=A \mf{e}_1$), $\text{ and }\mathbf{c}\in \mathbb{R}^{M}$ satisfying \eqref{eqn:binary_eq}, the following are true
\begin{enumerate}[(i)]
    \item  There exists a solution \bl{$\mathbf{x}^{\star} \neq \mf{x}_{\textup{hi}}$} to \eqref{eqn:P0} satisfying 
\begin{align}
    \Vert \mathbf{x}^{\star} \Vert_0 \leq \Vert\mf{x}_{\textup{hi}} \Vert_0    \label{eqn:sparse}
\end{align}
\item The inequality in \eqref{eqn:sparse} is strict as long as there exists an integer $n_0 \geq 1$ such that the block $\mf{x}_{\textup{hi}}^{(n_0)}$ of $\mf{x}_{\textup{hi}}$ (defined in \eqref{eqn:block_x}) satisfies $
     \Vert \mf{x}_{\textup{hi}}^{(n_0)} \Vert_0 \geq 2$.
\end{enumerate}
\end{lem}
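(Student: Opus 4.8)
The plan is to exploit the block structure of \eqref{eqn:binary_eq} recorded in \eqref{eqn:scal1}: the scalar $c[0]$ equals $x{\hi}[0]$, and for each $n\geq 1$ the scalar $c[n]=\mathbf{h}_{\alpha}^{\top}\xhi^{(n)}$ involves only the $n$-th block $\xhi^{(n)}$, with every entry of $\mathbf{h}_{\alpha}=[\alpha^{\tD-1},\dots,\alpha,1]^{\top}$ strictly positive. I would construct an explicit competitor $\mathbf{x}^{\star}$ block by block. The scalar block is forced, so put $x^{\star}[0]=x{\hi}[0]=c[0]$. For every $n\geq 1$ with $\xhi^{(n)}=\mathbf{0}$, set the $n$-th block of $\mathbf{x}^{\star}$ to $\mathbf{0}$. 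For every $n\geq 1$ with $\xhi^{(n)}\neq\mathbf{0}$, one has $c[n]=\mathbf{h}_{\alpha}^{\top}\xhi^{(n)}>0$; replace the whole block by a single scaled coordinate vector, i.e.\ choose $j_n\in\{1,\dots,\tD\}$ and set that block equal to $(c[n]/\alpha^{\tD-j_n})\,\mathbf{e}_{j_n}$, which reproduces $c[n]$ because the $j_n$-th entry of $\mathbf{h}_{\alpha}$ is $\alpha^{\tD-j_n}$, and which is nonnegative because $c[n]\geq 0$. Then $\mathbf{x}^{\star}$ is feasible for \eqref{eqn:P0}; in fact it is \emph{optimal}, since $\mathbf{h}_{\alpha}>\mathbf{0}$ forces any feasible point to have at least one nonzero in every block $n$ with $c[n]\neq 0$ and to satisfy $x[0]\neq 0$ when $c[0]\neq 0$, so the optimal value of \eqref{eqn:P0} is exactly $\ind\{c[0]\neq 0\}+\#\{n\geq 1:c[n]\neq 0\}$ and $\mathbf{x}^{\star}$ attains it. (The nonzero entries $c[n]/\alpha^{\tD-j_n}$ generally fall outside $\{0,A\}$ — which is the whole point of why \eqref{eqn:P0} does not return $\xhi$.)

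Next I would compare supports. Since $c[0]=x{\hi}[0]$ and $c[n]\neq 0\iff\xhi^{(n)}\neq\mathbf{0}$, we get $\Vert\mathbf{x}^{\star}\Vert_0=\ind\{x{\hi}[0]\neq 0\}+\#\{n\geq 1:\xhi^{(n)}\neq\mathbf{0}\}$, whereas $\Vert\xhi\Vert_0=\ind\{x{\hi}[0]\neq 0\}+\sum_{n\geq 1}\Vert\xhi^{(n)}\Vert_0$. As each nonzero block contributes exactly $1$ to the former and at least $1$ to the latter, \eqref{eqn:sparse} follows, with gap $\Vert\xhi\Vert_0-\Vert\mathbf{x}^{\star}\Vert_0=\sum_{n\geq 1:\,\xhi^{(n)}\neq\mathbf{0}}(\Vert\xhi^{(n)}\Vert_0-1)$; this is strictly positive exactly when some block satisfies $\Vert\xhi^{(n_0)}\Vert_0\geq 2$, which is part (ii).

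The one delicate step — where the excluded case $\xhi=A\mathbf{e}_1$ and the nontrivial regime $\tD\geq 2$ come in — is checking $\mathbf{x}^{\star}\neq\xhi$. I would argue on a single nonzero block $\xhi^{(n_1)}$, which exists whenever $\xhi\notin\{\mathbf{0},A\mathbf{e}_1\}$ (if $\xhi=\mathbf{0}$ there is nothing to prove; if the only nonzero entry sat in the scalar block then $\xhi=A\mathbf{e}_1$). If $\Vert\xhi^{(n_1)}\Vert_0\geq 2$, then the $n_1$-th block of $\mathbf{x}^{\star}$ has at most one nonzero and hence differs from $\xhi^{(n_1)}$. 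If instead $\xhi^{(n_1)}=A\mathbf{e}_{j}$ for some $j$, then $c[n_1]=A\alpha^{\tD-j}$, so choosing $j_{n_1}\neq j$ (possible because $\tD\geq 2$) makes that block of $\mathbf{x}^{\star}$ equal $A\alpha^{\,j_{n_1}-j}\mathbf{e}_{j_{n_1}}$, and $\alpha^{\,j_{n_1}-j}\neq 1$ forces inequality. Either way $\mathbf{x}^{\star}\neq\xhi$, completing both parts. I expect this last case split to be the only real content of the argument; everything else is immediate from \eqref{eqn:scal1} and the positivity of $\mathbf{h}_{\alpha}$.
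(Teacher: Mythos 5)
Your construction is essentially the paper's own proof: both build a competitor block by block with a single scaled nonzero entry per nonzero block of $\xhi$ (the paper fixes that coordinate at position $\tD$, switching to $\tD-1$ when the lone true spike sits at position $\tD$, while you leave the index $j_n$ free and pin it down only where needed to force $\mathbf{x}^{\star}\neq\xhi$), and both obtain (i) and (ii) by counting one nonzero per nonzero block against $\Vert\xhi^{(n)}\Vert_0\geq 1$. Your extra observation that the competitor is exactly optimal for \eqref{eqn:P0} is a mild strengthening the paper does not state, but the substance of the argument is the same.
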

\begin{proof}
The proof is in Appendix B.
\end{proof}

Lemma \ref{lem:sp_l0} shows there exist other non-binary solution(s) to \eqref{eqn:binary_eq} (different from $\xhi$) that have the same or smaller sparsity as the binary signal $\xhi \in \{0,A\}^{L}$. Furthermore, there exist problem instances where the sparsest solution to \eqref{eqn:P0} is strictly sparser than $\xhi$. Hence, sparsity and/or non-negativity are inadequate to identify the ground truth $\xhi$ uniquely.

\bl{\textbf{Implicit Bias of Relaxation:}
The optimization problem \eqref{eqn:P0} is non-convex and the binary constraints are not enforced. In binary compressed sensing \cite{stojnic2010recovery,keiper2017compressed}, it is common to relax the binary constraints using box-constraint and $l_0$ norm is relaxed to $l_1$ norm in the following manner:
\begin{align*}
    \min_{\mathbf{x}\in \mathbb{R}^L}\ \Vert \mathbf{x} \Vert_1 \quad\text{subject to } \mathbf{c}=\mathbf{H}_{\tD}(\alpha) \mathbf{x}, \ \mf{0}\leq \mathbf{x} \leq A\mathbf{1} \tag{P1-B}\label{eqn:P1_box}
\end{align*}
In the following Lemma, we show that there is an implicit bias introduced to the solution of \eqref{eqn:P1_box}. 
\begin{lem}\label{lem:sp_l1_box}
For every $\xhi\in \{0,A\}^{L}$, $\text{and }\mathbf{c}\in \mathbb{R}^{M}$ satisfying \eqref{eqn:binary_eq}. There exists a solution $\mathbf{x}^{\star}$ to \eqref{eqn:P1_box} satisfying 
\begin{equation}
    \smash{\Vert \mathbf{x}^{\star} \Vert_1 \leq \Vert\mf{x}_{\textup{hi}} \Vert_1.}    \label{eqn:relax_l1}
\end{equation}
Moreover, for all $n\geq 1$, the blocks $\mf{x}^{(n)\star}\! \in\! \mathbb{R}^{\tD}$ of $\mathbf{x}^{\star}$ satisfy:
\begin{align}
    \text{supp}(\mf{x}^{(n)\star})=\{\tD,\tD-1,\cdots,\tD-j_n\}, \text{ if } c[n]\neq 0 \label{eqn:supp_l1_box}
\end{align}
for some $0\leq j_n \leq \tD-1$ and $\mathbf{x}^{(n)\star}=\mf{0}$ if $c[n]=0$, irrespective of the support of $\xhi$.
\end{lem}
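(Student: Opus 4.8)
The plan is to exploit the block-diagonal structure of $\mathbf{H}_{\tD}(\alpha)$, which decouples the constraint $\mathbf{c} = \mathbf{H}_{\tD}(\alpha)\mathbf{x}$ into $M$ independent scalar equations $c[0] = x^{(0)}$ and $c[n] = \mathbf{h}_\alpha^\top \mathbf{x}^{(n)}$ for $1 \le n \le M-1$, together with the box constraint $\mathbf{0} \le \mathbf{x} \le A\mathbf{1}$ which is itself separable across blocks. Since both the objective $\Vert \mathbf{x}\Vert_1 = \sum_n \Vert \mathbf{x}^{(n)}\Vert_1$ and the feasible set split across blocks, \eqref{eqn:P1_box} decomposes into $M-1$ independent per-block problems (plus the trivial $x^{(0)}=c[0]$): minimize $\Vert \mathbf{x}^{(n)}\Vert_1 = \mathbf{1}^\top \mathbf{x}^{(n)}$ subject to $\mathbf{h}_\alpha^\top \mathbf{x}^{(n)} = c[n]$ and $\mathbf{0} \le \mathbf{x}^{(n)} \le A\mathbf{1}$. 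So it suffices to solve this single-block linear program and read off the support structure.

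First I would analyze the per-block LP. Because $\mathbf{x}^{(n)} \ge \mathbf{0}$, the objective is just $\mathbf{1}^\top \mathbf{x}^{(n)}$, and the equality constraint is a weighted sum $\sum_{k=1}^{\tD} \alpha^{\tD-k} [\mathbf{x}^{(n)}]_k = c[n]$ with strictly positive, strictly increasing weights $\alpha^{\tD-1} < \alpha^{\tD-2} < \cdots < \alpha^0 = 1$ (using $0<\alpha<1$, so the \emph{last} coordinate has the largest weight). To achieve a target positive value $c[n]$ with minimum $\ell_1$ cost, one should put as much "mass" as possible on the coordinates with the largest weight-per-unit-cost ratio, i.e. weight $\alpha^{\tD-k}$ per unit of $[\mathbf{x}^{(n)}]_k$; since cost per coordinate is $1$, the best coordinate is $k=\tD$ (weight $1$), then $k=\tD-1$ (weight $\alpha$), and so on. This is a classic continuous-knapsack / greedy argument: saturate coordinate $\tD$ at $A$, then coordinate $\tD-1$ at $A$, etc., until the residual target is met by a partial fill of the next coordinate. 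The resulting optimal solution has $[\mathbf{x}^{(n)\star}]_\tD = [\mathbf{x}^{(n)\star}]_{\tD-1} = \cdots = [\mathbf{x}^{(n)\star}]_{\tD-j_n+1} = A$ and one partially-filled coordinate $\tD - j_n$, giving exactly $\text{supp}(\mathbf{x}^{(n)\star}) = \{\tD, \tD-1, \dots, \tD-j_n\}$ for some $0 \le j_n \le \tD-1$; and when $c[n]=0$ the only nonnegative feasible point with that weighted sum is $\mathbf{0}$. I would make the greedy optimality rigorous either by an explicit exchange argument (given any feasible $\mathbf{z}$ not of this form, shifting mass from a lower-weight active coordinate to a higher-weight unsaturated one strictly decreases $\mathbf{1}^\top \mathbf{z}$ — wait, it keeps the constraint only if done carefully, so one moves $\epsilon$ off coordinate $k$ and $\epsilon \alpha^{\tD-k}/\alpha^{\tD-k'}$ onto a higher coordinate $k'$, which reduces the $\ell_1$ norm since $\alpha^{\tD-k}/\alpha^{\tD-k'} < 1$) or by LP duality / basic-feasible-solution reasoning.

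The inequality \eqref{eqn:relax_l1} then follows immediately: $\xhi$ itself is feasible for \eqref{eqn:P1_box} (it is binary, hence in the box, and satisfies \eqref{eqn:binary_eq}), so the optimum $\mathbf{x}^\star$ has $\Vert \mathbf{x}^\star \Vert_1 \le \Vert \xhi \Vert_1$ by definition of minimization. The main obstacle I anticipate is not any deep difficulty but rather making the greedy/exchange argument airtight about the boundary case — exactly characterizing when the "partially filled" coordinate is genuinely partial versus when $c[n]$ happens to be an exact multiple $jA\cdot(\text{sum of some }\alpha\text{ powers})$, and confirming the claimed support set holds with the closed interval of indices $\{\tD, \dots, \tD - j_n\}$ including the endpoint. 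I would also need to double-check the indexing convention in the paper: $\mathbf{h}_\alpha = [\alpha^{\tD-1}, \dots, \alpha, 1]^\top$ means coordinate index $\tD$ carries weight $1$, which is consistent with the support being an upper set $\{\tD, \tD-1, \dots\}$, so the statement's support description matches the greedy-on-largest-weight conclusion. Finally I would note that this per-block optimal $\mathbf{x}^{(n)\star}$ is in general \emph{not} equal to $\xhi^{(n)}$ (e.g. when $\xhi^{(n)}$ has an isolated spike in an early coordinate), which is precisely the "implicit bias" the lemma is asserting.
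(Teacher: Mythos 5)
Your proposal is correct, and it arrives at the same optimal point as the paper — the greedy "saturate from coordinate $\tD$ downward" vector, which is exactly the paper's explicit construction (their $k_n$ saturated entries of value $A$ followed by one partial entry $(c[n]-\mu[n])/\alpha^{k_n}$) — but you certify optimality by a genuinely different and more elementary route. The paper parameterizes every feasible point of the per-block LP as $\mathbf{v}^{(n)}+\sum_t\beta_t^{(n)}\mathbf{w}_t$ using an explicit basis $\mathbf{w}_t=\mathbf{e}_t-\alpha\mathbf{e}_{t+1}$ for the null space of $\mathbf{h}_\alpha^{\top}$, then runs a somewhat lengthy case analysis on the box constraints to force $\beta_t^{(n)}\geq 0$, and concludes via the identity $\Vert\mathbf{v}_f\Vert_1=\Vert\mathbf{v}\Vert_1+(1-\alpha)\sum_t\beta_t^{(n)}$. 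Your local exchange argument — move $\epsilon$ off a positive low-weight coordinate $k$ and add $\epsilon\,\alpha^{k'-k}<\epsilon$ to an unsaturated higher-weight coordinate $k'$, strictly decreasing $\mathbf{1}^{\top}\mathbf{x}^{(n)}$ while preserving $\mathbf{h}_\alpha^{\top}\mathbf{x}^{(n)}=c[n]$ — avoids the null-space bookkeeping entirely and directly shows that \emph{any} optimum must have the staircase support $\{\tD,\dots,\tD-j_n\}$, from which uniqueness also follows since the equality constraint then pins down the entries. What the paper's approach buys is an exact quantitative gap, $\Vert\mathbf{v}_f\Vert_1-\Vert\mathbf{v}\Vert_1=(1-\alpha)\sum_t\beta_t^{(n)}$, for every feasible competitor; what yours buys is brevity and a transparent explanation of the bias (mass migrates to the highest-weight coordinates). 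To make your version fully airtight you should add the two small points you already flagged: existence of a minimizer (the feasible set is compact and nonempty since $\xhi^{(n)}$ is feasible), and the convention that the "partial" coordinate is dropped from the support when the residual is exactly zero, which is accommodated by the quantifier "for some $0\leq j_n\leq\tD-1$" in \eqref{eqn:supp_l1_box}.
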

\begin{proof}
The proof is in Appendix B.
\end{proof}}
\bl{Lemma \ref{lem:sp_l1_box} shows that even in the noiseless setting, introducing the box-constraint as a means of relaxing the binary constraint introduces a bias {in} the support of the recovered spikes. The optimal solution always results in spikes with support clustered towards the end of {each} block of length $\tD$, irrespective of the ground truth spiking pattern $\xhi$ that generated the measurements. This bias is a consequence of the {nature} of relaxation, as well as the specific structure of the measurement matrix $\mathbf{H}_{\tD}(\alpha)$ arising in the problem. 
}
\vspace{-0.3cm}
\subsection{Role of Memory in Super-resolution: IIR vs. FIR filters}\label{ssec:FIR}
The ability to identify the high-rate binary signal  $\xhi \in \{0,A\}^{L}$ from $\tD-$fold undersampled measurements $\mathbf{y}{\lo}$ (for arbitrarily large $\tD$) in the absence of noise, is in parts also due to the ``infinite memory" or infinite impulse response of the AR(1) filter. Indeed, for an Finite Impulse Response (FIR) filter, there is a limit to downsampling without losing identifiability. This was recently studied in our earlier work \cite{sarangi2021no} where we showed that the undersampling limit is determined by the length of the FIR filter. To see this, consider the convolution of a binary valued signal $\xhi$ with a FIR filter $\mathbf{u}=[u[0],u[1],\cdots,u[r-1]]^T\in \mathbb{R}^r$ of length $r$:
$
    z_f[n]=\sum_{i=0}^{r-1} u[r-1-i]x{\hi}[n+i].
$
These samples are represented in the vector form as $\mf{z}_f:=\mf{u}\star \xhi \in \mathbb{R}^{L}$ (by suitable zero padding). Suppose, as before, we only observe a ${\tD}-$fold downsampling of the output $z_\tD[n]=z_f[\tD n]$. Two consecutive samples $z_\tD[p],z_\tD[p+1]$ of the low-rate observation are given by:
\begin{align*}
    &\smash{z_\tD[p]=\sum_{i=0}^{r-1} u[r-1-i]x{\hi}[\tD p+i]}, \\
    &z_\tD[p+1]=\sum_{i=0}^{r-1} u[r-1-i]x{\hi}[\tD(p+1)+i]
\end{align*}
If ${\tD}>r$, notice that none of the measurements is a function of the samples $x{\hi}[\tD p+r],x{\hi}[\tD p+r+1],\cdots,x{\hi}[\tD(p+1)-1]$. Hence, it is possible to assign them arbitrary binary values and yet be consistent with the low-rate measurements $z_{\tD}[n]$. This makes it impossible to exactly recover $\xhi$ (even if it is known to be binary valued) if the decimation is larger than the filter length ($\tD>r$). The following lemma summarizes this result.
\begin{lem}\label{lem:fir}
For every FIR filter $\mathbf{u} \in \mathbb{R}^{r}$, if the undersampling factor exceeds the filter length, i.e. ${\textup{D}}>r$, there exist $\mathbf{x}_0,\mf{x}_1 \in \{0,A\}^L$, $\mf{x}_0\neq \mf{x}_1$ such that $\mathbf{S}_{\textup{D}}(\mathbf{u}\star\mathbf{x}_0)=\mathbf{S}_{{\textup{D}}}(\mathbf{u}\star\mathbf{x}_1)$.
\end{lem}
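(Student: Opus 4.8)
The plan is to turn the informal argument already sketched in the paragraph preceding the lemma into a clean construction of the two distinct binary signals $\mf{x}_0, \mf{x}_1$. The key observation is that when $\tD > r$, there is a ``gap'' of indices inside each downsampling window that no low-rate sample can see, so we are free to toggle a coordinate there. First I would fix notation: write the low-rate output explicitly as $z_\tD[p] = \sum_{i=0}^{r-1} u[r-1-i]\, x\hi[\tD p + i]$, and note that this sample depends only on the coordinates $x\hi[\tD p], x\hi[\tD p + 1], \ldots, x\hi[\tD p + r - 1]$ of $\xhi$. Since $\tD > r$, i.e. $\tD \geq r+1$, the index $\tD p + r$ satisfies $\tD p + r \leq \tD(p+1) - 1 < \tD(p+1)$, so it lies strictly between consecutive ``sampled'' offsets and is therefore not among the coordinates that $z_\tD[p]$ depends on, nor among those that $z_\tD[p+1], z_\tD[p+2], \ldots$ depend on (those start at offset $\tD(p+1) \geq \tD p + r + 1 > \tD p + r$). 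Likewise it is not seen by $z_\tD[p']$ for any $p' < p$.

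Next I would make the explicit choice: pick any index in the ``blind'' range, say $m_0 := r$ (taking $p = 0$), which by the above is not referenced by any $z_\tD[p]$, $p \geq 0$; one just needs $m_0 \leq L-1$, which holds for $L$ large enough (and if $L$ is too small the statement is vacuous, so we may assume $L \geq r+1$). Set $\mf{x}_0 := \mf{0}$ and let $\mf{x}_1 := A\,\mf{e}_{m_0+1}$ be the vector with a single spike of amplitude $A$ at coordinate $m_0$ (using $1$-indexing for the standard basis vector). Then $\mf{x}_0 \neq \mf{x}_1$, both lie in $\{0,A\}^L$, and since coordinate $m_0$ of $\xhi$ is never referenced by the downsampled convolution, every low-rate sample of $\mf{u}\star\mf{x}_0$ equals the corresponding low-rate sample of $\mf{u}\star\mf{x}_1$ — indeed both are identically zero. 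Hence $\mathbf{S}_\tD(\mf{u}\star\mf{x}_0) = \mathbf{S}_\tD(\mf{u}\star\mf{x}_1)$, which is the claim.

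The only genuine point requiring care — and the place where I would be most careful in writing up — is the bookkeeping showing that coordinate $m_0 = r$ is missed by \emph{all} low-rate samples simultaneously, not just by $z_\tD[0]$: one must check that for every $p \geq 0$ the window of coordinates $\{\tD p, \ldots, \tD p + r - 1\}$ that $z_\tD[p]$ depends on excludes $r$. For $p = 0$ this is $\{0,\ldots,r-1\}$, which excludes $r$; for $p \geq 1$ the window starts at $\tD p \geq \tD > r$, so it excludes $r$ as well. This is a one-line inequality chase once set up correctly, so there is no real obstacle — the content of the lemma is entirely in the zero-measure-of-information pigeonhole observation, and the proof is essentially a formalization of the paragraph that precedes it. I would also remark (to contrast with the AR(1) case) that the failure is precisely because an FIR filter has no memory of inputs more than $r$ steps in the past, whereas the $\mf{G}_\alpha$ of \eqref{eqn:G_alpha} is full lower-triangular.
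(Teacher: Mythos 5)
Your proposal is correct and is essentially the same argument as the paper's, which proves Lemma \ref{lem:fir} only via the informal paragraph preceding it (the observation that for $\tD>r$ the coordinates $x\hi[\tD p+r],\dots,x\hi[\tD(p+1)-1]$ are invisible to every low-rate sample); you simply formalize it by toggling the single unseen coordinate $r$ between $\mf{x}_0=\mf{0}$ and $\mf{x}_1=A\,\mf{e}_{r+1}$, and your index bookkeeping checks out.
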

This shows that the identifiability result presented in Theorem 1 is not merely a consequence of binary priors but the infinite memory of the autoregressive process is also critical in allowing arbitrary undersampling $\tD>1$ in absence of noise. For such IIR filters, the memory of all past (binary) spiking activity is encoded (with suitable weighting) into every measurement captured after the spike, which would not be the case for a finite impulse response filter.



\vspace{-0.2cm}
\section{Efficient Binary Super-Resolution Using Binary Search with Structured Measurements}
By Theorem \ref{thm:ident}, we already know that it is possible to uniquely identify $\xhi$ from $\mf{c}$ (or equivalently, each block $\mf{x}{\hi}^{(n)}$ from a single measurement $c[n]$) by exhaustive search. We now demonstrate how this exhaustive search can be avoided by formulating the decoding problem in terms of ``binary search" over an appropriate set, and thereby attaining computational efficiency.  
We begin by introducing some notations and definitions. Given a non-negative integer $k, 0 \leq k \leq 2^{\tD}-1$, let $(b_1(k), b_2(k), \cdots, b_\tD(k))$ be the unique $\tD$-bit binary representation of $k$:
$
    k=\sum_{d=1}^{\tD} 2^{\tD-d} b_d(k), \quad b_d(k)\in \{0,1\} \ \forall \ 1\leq d \leq \tD.
$
Here $b_1(k)$ is the most significant bit and $b_{\tD}(k)$ is the least significant bit. Using this notation, we define the following set: \begin{align}
    \mathcal{S}_{\text{all}}:=\{ \mathbf{v}_0,\mathbf{v}_1,\mathbf{v}_2,\cdots,\mathbf{v}_{2^\tD-1}\}, \label{eqn:S_all}
\end{align}
where each $\mathbf{v}_k\in \{0,A\}^{\tD}$ is a binary vector given by
\begin{align}
    [\mf{v}_k]_{d}= Ab_{d}(k). \quad 1\leq d \leq \tD
\end{align}
In other words, the binary vector $\frac{1}{A}\mf{v}_k$ is the $\tD$-bit binary representation of its index $k$. Using this convention, $\mathbf{v}_0=\mathbf{0}$ (i.e., a binary sequence of all $0'$s) and $\mathbf{v}_{2^{\tD}-1}=A\mathbf{1}$ (i.e., a binary sequence of all $A'$s). Recall the partition of $\xhi$ defined in \eqref{eqn:block_x}, where each block $\xhi^{(n)}$ ($n\geq 1$) is a binary vector of length $\tD$ and $x{\hi}^{(0)}\in \{0,A\}$ is a scalar. It is easy to see that \eqref{eqn:S_all} comprises of all possible values that each block $\xhi^{(n)}$ can assume. 
According to \eqref{eqn:scal1} each scalar measurement $c[n]$ can be written as:
$
    c[0]=x^{(0)}, \quad c[n]=\mathbf{h}_{\alpha}{^\top} \xhi^{(n)},  \ 1\leq n \leq M-1. 
$
For every $\alpha$, we define the following set:
\begin{align}
    \Theta_{\alpha}:=\{\theta_0,\theta_1,\cdots,\theta_{2^{\tD}-1}\}, \text{ where }\theta_k:=\mf{h}_{\alpha}^{\top}\mf{v}_k \label{eqn:Talpha}
\end{align}
Observe that every measurement $c[n]=\sum_{i=1}^{\tD} \alpha^{\tD-i}x{\hi}[(n-1)\tD+i]$ takes values from this set $\Theta_{\alpha}$, depending on the value taken by the underlying block of spiking pattern from $ \mathcal{S}_{\text{all}}$. Our goal is to recover the spikes $\{x{\hi}[(n-1)\tD+i]\}_{i=1}^{\tD}$ from $c[n]$. 

In the following, we show that this problem is equivalent to finding the representation of a real number over an arbitrary radix, which is known as ``\emph{$\beta$}-expansion" \cite{renyi1957representations}. Given a real (potentially non-integer) number $\beta>1$, the representation of another real number $p \geq 0$ of the form:
\begin{align}
    p=\sum_{n=1}^{\infty} a_n \beta^{-n}, \text{ where } 0\leq a_n < \lfloor \beta \rfloor \label{eqn:beta_exp}
\end{align}
is referred to as a $\beta$-expansion of $p$. The coefficients $0\leq a_n < \lfloor \beta \rfloor$ are integers. This is a generalization of the representation of numbers beyond integer-radix to a system where the radix can be chosen as an arbitrary real number. This notion of representation over arbitrary radix was first introduced by Renyi in \cite{renyi1957representations}, and since then has been extensively studied \cite{glendinning2001unique,frougny1992finite,sidorov2003almost}. There is a direct connection between $\beta$-expansion and the binary super-resolution problem considered here. In the problem at hand, any element $\theta_k\in \Theta_{\alpha}$ can be written as:
\begin{align*}
   \smash{ \theta_k=\mathbf{h}_{\alpha}^{\top}\mf{v}_k=\sum_{i=1}^{\tD}\alpha^{\tD-i} [\mathbf{v}_k]_i}
\end{align*}
When $1/2< \alpha <1$, by letting $\beta=1/\alpha$, we see that the coefficients in \eqref{eqn:beta_exp} must satisfy $0 \leq a_n <\lfloor 1/\alpha \rfloor <2$, i.e., they are restricted to be binary valued $a_n\in \{0,1\}$. \emph{Therefore, decoding the spikes $\mf{v}_k$ from the observation $\theta_k$ is equivalent to finding a $\tD-$bit representation for the number $\theta_k/A$ over the non-integer radix $\beta=1/\alpha$.} Questions regarding the existence of $\beta$-expansion, and finding the coefficients of a finite $\beta-$expansion (whenever it exists) has been an active topic of research \cite{glendinning2001unique,frougny1992finite,sidorov2003almost,komornik2011expansions}. When $\beta \geq 2$ (equivalently, $0<\alpha \leq 1/2$), it is possible to find the coefficients using a greedy algorithm which proceeds in a fashion similar to finding the $\tD$-bit binary representation of an integer \cite{komornik2011expansions,sidorov2003almost}. However, the regime $\beta \in (1,2)$ (equivalently $1/2<\alpha <1$), is significantly more complicated and is of continued research interest \cite{glendinning2001unique,frougny1992finite,sidorov2003almost}. 
{To the best of our knowledge, there are no known computationally efficient ways to find the finite $\beta$-expansion when $1/2 < \alpha <1$ (if it exists) [N. Sidorov, personal communication, May 24, 2022]. In practice, we encounter filter values $\alpha \ (=1/\beta)$ that are much closer to $1$, and hence, we need an alternative approach to find this finite $\beta$-radix representation for $\theta_k$. In the next section, we show that by performing a suitable preprocessing, finite $\beta$-radix representation can be formulated as a binary search problem which is guaranteed to succeed for all values of $\beta$ that permit unique finite $\beta-$expansions.}
\vspace{-0.5cm}
\subsection{Formulation as a Binary Search Problem}
Before describing the algorithm, we first introduce the notion of a \emph{collision-free} set.
\begin{def1}[Collision Free set]
Given an undersampling factor $\textup{D}$, define a class of ``collision free" AR(1) filters as:
\begin{align*}
    \mathcal{G}_{\tD}=\{ \alpha \in (0,1) \text{ s.t. } \mathbf{h}_\alpha^{\top} \mathbf{v}_i \neq \mathbf{h}_{\alpha}^{\top} \mathbf{v}_j \ \forall\  i \neq j, \mathbf{v}_i, \mathbf{v}_j \in \mathcal{S}_{\textup{all}} \}
\end{align*}
\end{def1}
The set $\mathcal{G}_{\tD}$ denotes permissible values of the AR(1) filter parameter $\alpha$ such that each of the $2^\tD$ binary sequences in $\mathcal{S}_{\text{all}}$ maps to a unique element in the set $\Theta_{\alpha}$. In other words, every $\theta_k \in \Theta_{\alpha}$ has a unique $\tD-$bit expansion for all $\alpha \in \mathcal{G}_{\tD}$. This naturally raises the question ``How large is the set $\mathcal{G}_{\tD}$?". Theorem 1 already provided the answer to this question, where the identifiability result implies that for every $\tD$, almost all $\alpha \in (0,1)$ belong to this set $\mathcal{G}_{\tD}$ (with the possible exception of a measure zero set). Hence, Theorem \ref{thm:ident} ensures that there are infinite choices for collision-free filter parameters. 
\begin{lem} \label{lem:bij}
For every $\alpha\in \mathcal{G}_{\tD}$, the mapping $\blds{\Phi}_{\alpha}(.):\mathcal{S}_{\textup{all}} \rightarrow \Theta_{\alpha}$, $\blds{\Phi}_{\alpha}(\mathbf{v})=\mathbf{h}_{\alpha}^{\top} \mathbf{v}$ forms a bijection between $\mathcal{S}_{\textup{all}}$ and $\Theta_{\alpha}$.
\end{lem}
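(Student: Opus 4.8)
The plan is to observe that this lemma is essentially a repackaging of the definitions of $\Theta_{\alpha}$ and $\mathcal{G}_{\tD}$, so the proof amounts to verifying surjectivity and injectivity separately. First I would note that surjectivity is immediate: by the very definition of $\Theta_{\alpha}$ in \eqref{eqn:Talpha}, every element of $\Theta_{\alpha}$ is of the form $\theta_k = \mathbf{h}_{\alpha}^{\top}\mathbf{v}_k = \blds{\Phi}_{\alpha}(\mathbf{v}_k)$ for some $\mathbf{v}_k \in \mathcal{S}_{\textup{all}}$, so $\Theta_{\alpha}$ is precisely the image $\blds{\Phi}_{\alpha}(\mathcal{S}_{\textup{all}})$. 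Hence $\blds{\Phi}_{\alpha}$ maps $\mathcal{S}_{\textup{all}}$ onto $\Theta_{\alpha}$.

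Next I would establish injectivity, which is exactly the content of the collision-free condition. If $\mathbf{v}_i, \mathbf{v}_j \in \mathcal{S}_{\textup{all}}$ with $\mathbf{v}_i \neq \mathbf{v}_j$, then (since distinct elements of $\mathcal{S}_{\textup{all}}$ carry distinct indices $i \neq j$) the defining property of $\mathcal{G}_{\tD}$ gives $\blds{\Phi}_{\alpha}(\mathbf{v}_i) = \mathbf{h}_{\alpha}^{\top}\mathbf{v}_i \neq \mathbf{h}_{\alpha}^{\top}\mathbf{v}_j = \blds{\Phi}_{\alpha}(\mathbf{v}_j)$, so $\blds{\Phi}_{\alpha}$ is injective on $\mathcal{S}_{\textup{all}}$ for every $\alpha \in \mathcal{G}_{\tD}$. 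Combining the two, $\blds{\Phi}_{\alpha}$ is a bijection between $\mathcal{S}_{\textup{all}}$ and $\Theta_{\alpha}$; as a byproduct, $|\Theta_{\alpha}| = |\mathcal{S}_{\textup{all}}| = 2^{\tD}$ (whereas for a general $\alpha \notin \mathcal{G}_{\tD}$ the cardinality of $\Theta_{\alpha}$ could be strictly smaller due to collisions).

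There is no real obstacle here — the lemma is a direct consequence of unwinding Definition 1 and the construction \eqref{eqn:Talpha}. The only point deserving a word of care is that $\mathcal{S}_{\textup{all}}$ is indexed bijectively by $\{0,1,\dots,2^{\tD}-1\}$ via the $\tD$-bit binary expansion, so "$\mathbf{v}_i \neq \mathbf{v}_j$" and "$i \neq j$" are interchangeable when invoking the collision-free hypothesis; and that the nonemptiness/richness of $\mathcal{G}_{\tD}$ (so that the statement is not vacuous) is already guaranteed by Theorem \ref{thm:ident}.
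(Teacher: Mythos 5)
Your proof is correct and takes essentially the same route as the paper: injectivity is read off directly from the definition of $\mathcal{G}_{\tD}$, and surjectivity follows because $\Theta_{\alpha}$ is by construction the image of $\mathcal{S}_{\textup{all}}$ under $\blds{\Phi}_{\alpha}$ (the paper phrases this last step as a cardinality count, $\vert\Theta_{\alpha}\vert\leq 2^{\tD}$ plus injectivity forcing equality, but the content is identical).
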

\begin{proof} 
Since $\alpha\in \mathcal{G}_{\tD}$, from the definition of the set $\mathcal{G}_{\tD}$, it is clear that for any $\mathbf{v}_i,\mathbf{v}_j \in \mathcal{S}_{\text{all}}$, $\mathbf{v}_i\neq \mathbf{v}_j$ we have $\mathbf{h}_{\alpha}{^\top}\mathbf{v}_i\neq\mathbf{h}_{\alpha}{^\top} \mathbf{v}_{j}$. Therefore, the mapping is injective. Furthermore, from \eqref{eqn:Talpha} we also have $\vert\Theta_{\alpha}\vert \leq \vert \mathcal{S}_{\text{all}}\vert=2^{\tD}$. Since
$\blds{\Phi}_{\alpha}(\cdot)$ is injective, we must also have $\vert\Theta_{\alpha}\vert=2^{\tD}$ and hence the mapping $\blds{\Phi}_{\alpha}(.)$ forms a bijection between $\mathcal{S}_{\text{all}}$ and $\Theta_{\alpha}$.
\end{proof}
When $\alpha \in \mathcal{G}_{\tD}$, Lemma \ref{lem:bij} states that the finite beta expansion for every $\theta_k \in \Theta_{\alpha}$ is unique. Lemma \ref{lem:bij} provides a way to avoid exhaustive search over $\mathcal{S}_{\text{all}}$, and yet identify $\xhi^{(n)}$ from $c[n]$ in a computationally efficient way. From Lemma \ref{lem:bij}, we know that each of the $2^{\tD}$ spiking patterns in $\mathcal{S}_{\text{all}}$ maps to a unique element in $\Theta_{\alpha}$, and each element in $\Theta_{\alpha}$ has a corresponding spiking pattern. Hence instead of searching $\mathcal{S}_{\text{all}}$, we can equivalently search the set $\Theta_{\alpha}$ in order to determine the unknown spiking pattern. \bl{Since $\Theta_{\alpha}$ permits ``ordering", searching $\Theta_{\alpha}$ has a distinct computational advantage over searching $\mathcal{S}_{\text{all}}$.} This ordering enables us to employ binary search over (an ordered) $\Theta_{\alpha}$ and find the desired element in a computationally efficient manner. 
To do this, we first sort the set $\Theta_{\alpha}$ (in ascending order) and {\em arrange the corresponding elements of $\mathcal{S}_{\text{all}}$ in the same order}. Given $\Theta_{\alpha}$ as an input, the function $\text{SORT}(\cdot)$ returns a sorted list $\Theta^{\text{sort}}_{\alpha}$, and an index set $\mathcal{I}=\{i_0,i_1,\cdots,i_{2^{\tD}-1}\}$ containing the indices of the sorted elements in the list $\Theta_{\alpha}$. 
\vspace{-0.1cm}
\begin{equation*}
    \smash{\Theta^{\text{sort}}_{\alpha}, \mathcal{I}\leftarrow\text{SORT}(\Theta_{\alpha})}
\end{equation*}
Let us denote the elements of the sorted lists as $\Theta^{\text{sort}}_{\alpha}=\{\tilde{\theta}_0,\cdots,\tilde{\theta}_{2^{\tD}-1}\}$, and  $\mathcal{S}^{\text{sort}}_{\text{all}}=\{\mathbf{\tilde{v}}_0,\cdots,\mathbf{\tilde{v}}_{2^{\tD}-1}\}$ where:
\begin{equation*}
\smash{\tilde{\theta}_0<\tilde{\theta}_1<\dots <\tilde{\theta}_{2^{\tD}-1}} \quad \text{ and } 
\smash{\tilde{\theta}_j=\theta_{i_j},\quad \mathbf{\tilde{v}}_j=\mathbf{v}_{i_j} \quad \forall j}.
\end{equation*}
It is important to note that this sorting step does not depend on the measurements $\mf{c}$, and can therefore be part of a pre-processing pipeline that can be performed offline. However, it does require memory to store the sorted lists. 
 \begin{algorithm}[h]
\caption{Noiseless Spike Recovery}
\begin{algorithmic}[1]
\item \textbf{Input:} Measurement $c[n]$, Sorted list $\Theta^{\text{sort}}_{\alpha}$ and the corresponding (ordered) spike patterns $\mathcal{S}^{\text{sort}}_{\text{all}}$
\item \textbf{Output:} Decoded spike block  $\hat{\mathbf{x}}{\hi}^{(n)}$\\
$i^{\star} \leftarrow \text{BINSEARCH}(\Theta^{\text{sort}}_{\alpha},c[n])$\\
Return $\hat{\mathbf{x}}{\hi}^{(n)} \leftarrow \mathbf{\tilde{v}}_{i^{\star}}$
\end{algorithmic}
 \label{alg:LLS}
\end{algorithm}
In the noiseless setting, we know that every scalar measurement $c[n]=\mf{h}_{\alpha}^{\top}\xhi^{(n)}$ belongs to the set $\Theta^{\text{sort}}_{\alpha}$. Therefore, if we identify its index, say $i^{\star}$, then we can successfully recover $\mf{x}{\hi}^{(n)}$ by returning the corresponding binary vector  $\tilde{\mf{v}}_{i^{\star}}$ from $\mathcal{S}^{\text{sort}}_{\text{all}}$. Therefore, we can formulate the decoding problem as searching for the input $c[n]$ in the sorted list $\Theta^{\text{sort}}_{\alpha}$. This can be efficiently done by using ``Binary Search". The noiseless spike decoding procedure is summarized as Algorithm $1$. 
Since the complexity of performing a binary search over an ordered list of $N$ elements is $O(\log N)$, the complexity of Algorithm $1$ is logarithmic in the cardinality of $\Theta^{\text{sort}}_{\alpha}$, which results in a complexity of $O(\log(2^\tD))=O(\tD)$. We summarize this result in the following Lemma. 
\begin{lem}\label{lem:algo_g}
Assume $\alpha \in \mathcal{G}_{\tD}$. Given the ordered set $\Theta^{\text{sort}}_{\alpha}$ , and an input $c[n]=\mathbf{h}_{\alpha}^{\top}\xhi^{(n)}$, Algorithm $1$ terminates in $O(\tD)$ steps and its output $\mathbf{\hat{x}}{\hi}^{(n)}$ satisfies $\mathbf{\hat{x}}{\hi}^{(n)}=\xhi^{(n)}$.
\end{lem}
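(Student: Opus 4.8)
The plan is to establish two things: (i) the target value $c[n]$ actually appears in the sorted list $\Theta^{\text{sort}}_{\alpha}$, so that the binary search returns a valid index, and (ii) the returned index points to the correct spike block. Both follow quickly from the bijection established in Lemma~\ref{lem:bij}.

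First I would observe that $\xhi^{(n)} \in \{0,A\}^{\tD}$ is a binary block of length $\tD$, hence $\xhi^{(n)} \in \mathcal{S}_{\text{all}}$, since $\mathcal{S}_{\text{all}}$ enumerates all $2^{\tD}$ such blocks. Consequently $c[n] = \mathbf{h}_{\alpha}^{\top}\xhi^{(n)} = \blds{\Phi}_{\alpha}(\xhi^{(n)}) \in \Theta_{\alpha}$, so $c[n]$ equals one of the $2^{\tD}$ entries $\tilde{\theta}_0,\dots,\tilde{\theta}_{2^{\tD}-1}$ of the sorted list $\Theta^{\text{sort}}_{\alpha}$. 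Because $\alpha \in \mathcal{G}_{\tD}$, Lemma~\ref{lem:bij} guarantees these entries are pairwise distinct, so there is exactly one index $i^{\star}$ with $\tilde{\theta}_{i^{\star}} = c[n]$.

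Next I would invoke the standard correctness guarantee of binary search over a sorted array of distinct keys: given a key that is present in the list, $\text{BINSEARCH}$ returns its (unique) position after $O(\log N)$ comparisons, where $N = |\Theta^{\text{sort}}_{\alpha}| = 2^{\tD}$; hence Algorithm~1 halts in $O(\log 2^{\tD}) = O(\tD)$ steps. For correctness of the returned block, note that the preprocessing keeps $\mathcal{S}^{\text{sort}}_{\text{all}}$ aligned with $\Theta^{\text{sort}}_{\alpha}$, i.e.\ $\tilde{\mathbf{v}}_{i^{\star}}$ is precisely the $\blds{\Phi}_{\alpha}$-preimage of $\tilde{\theta}_{i^{\star}}$. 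Since $\blds{\Phi}_{\alpha}$ is a bijection (Lemma~\ref{lem:bij}) and $\blds{\Phi}_{\alpha}(\xhi^{(n)}) = c[n] = \tilde{\theta}_{i^{\star}} = \blds{\Phi}_{\alpha}(\tilde{\mathbf{v}}_{i^{\star}})$, injectivity forces $\mathbf{\hat{x}}{\hi}^{(n)} = \tilde{\mathbf{v}}_{i^{\star}} = \xhi^{(n)}$, which is the claim.

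There is essentially no hard step here: the lemma packages Lemma~\ref{lem:bij} together with the textbook performance and correctness of binary search. The only points requiring a word of care are confirming that the key $c[n]$ is genuinely a member of the list (so the search succeeds rather than terminating in a ``not found'' state) and that the list has no repeated entries (so the returned index is unambiguous) --- and these are exactly what the membership $\xhi^{(n)} \in \mathcal{S}_{\text{all}}$ and the collision-free assumption $\alpha \in \mathcal{G}_{\tD}$ supply.
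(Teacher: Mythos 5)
Your proposal is correct and follows essentially the same route as the paper, which states this lemma as a summary of the preceding discussion (membership of $c[n]$ in $\Theta^{\text{sort}}_{\alpha}$, the bijection from Lemma~\ref{lem:bij}, and the standard $O(\log 2^{\tD})=O(\tD)$ guarantee of binary search over a sorted list) rather than giving a separate formal proof. Your write-up simply makes those same ingredients explicit.
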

\vspace{-0.4cm}
\subsection{Noisy Measurements and $1$ \textup{D} Nearest Neighbor Search}
We demonstrate how binary search can still be useful in presence of noise by formulating noisy spike detection as a {one dimensional} \emph{nearest neighbor search} problem. Suppose $\{z{\lo}[n]\}_{n=0}^{M-1}$ denote noisy $\tD$-fold decimated filter output
\begin{equation}
    \smash{z{\lo}[n]=y{\lo}[n]+w[n],} \quad 0\leq n \leq M-1 \label{eqn:noisy_measurement}
\end{equation}
Here $w[n]$ represents the additive noise term that corrupts the (noiseless) low-rate measurements $y{\lo}[n]$. Similar to \eqref{eqn:c_comp}, we compute $c_{e}[n]$ from $z{\lo}[n]$  
as follows:
\begin{align}
    c_e[n]&=z{\lo}[n]-\alpha^{\tD} z{\lo}[n-1] \label{eqn:c_op_noise}\\
    &=\sum_{i=1}^{\tD}\alpha^{\tD-i}x{\hi}[(n-1)\tD+i]+e[n]\smash{= c[n] + e[n]}\label{eqn:noisy_AR}
\end{align}
where $c[n]=\mathbf{h}_{\alpha}^{\top}\mathbf{x}{\hi}^{(n)}\in \Theta^{\text{sort}}_{\alpha}$, and $e[n]=w[n]-\alpha^{\tD} w[n-1]$. We can interpret $c_e[n]$ as a noisy/perturbed version of an element $c[n] \in \Theta^{\text{sort}}_{\alpha}$, with $e[n]$ representing the noise. This perturbed signal may no longer belong to $\Theta^{\text{sort}}_{\alpha}$ (i.e. $c_e[n]\not\in \Theta^{\text{sort}}_{\alpha}$) and hence, we cannot find an exact match in the set $\Theta^{\text{sort}}_{\alpha}$. Instead, we aim to find the closest element in $\Theta^{\text{sort}}_{\alpha}$ (the nearest neighbor of $c_e[n]$) by solving the following problem:
\begin{equation}
     \smash{\hat{\mf{x}}{\hi}^{(n)}=\arg \min_{\mf{v} \in \mathcal{S}^{\text{sort}}_{\text{all}}} \vert c_e[n]- \mf{h}_{\alpha}^{\top}\mf{v} \vert} \label{eqn:g_nn}
\end{equation}
Solving \eqref{eqn:g_nn} is equivalent to finding the spike sequence $\mf{\tilde{v}}\in \mathcal{S}^{\text{sort}}_{\text{all}}$ that maps to the nearest neighbor of $c_e[n]$ in the set $\Theta^{\text{sort}}_{\alpha}$.
By leveraging the sorted list $\Theta^{\text{sort}}_{\alpha}$, it is no longer necessary to parse the list sequentially (which would incur $O(2^{\tD})$ complexity), instead we can perform a modified binary search as summarized in Algorithm $2$,  
that keeps track of additional indices compared to the vanilla binary search. Finally, we return the unique spiking pattern from $\mathcal{S}^{\text{sort}}_{\alpha}$ that gets mapped to the nearest neighbor of the noisy measurement $c_e[n]$. 
It is well-known that the nearest neighbor for any query could be found in $O(\log (2^{\tD}))=O(\tD)$ steps, instead of the linear complexity of $O(2^{\tD})$. This guarantees a computationally efficient decoding of spikes by solving \eqref{eqn:g_nn}. 

Next, we characterize the error events that lead to erroneous detection of a block of spikes. Recall that the set $\Theta^{\text{sort}}_{\alpha}$ is sorted, and its elements satisfy the ordering:
\begin{equation*}
\smash{0=\tilde{\theta}_0<\tilde{\theta}_1<\dots <\tilde{\theta}_{l_\tD}=1+\alpha+\cdots+\alpha^{\tD-1}}
\end{equation*}
where $l_\tD:=2^{\tD}-1$. We also have  $\tilde{\theta}_k=\mf{h}_{\alpha}^{\top}\mf{\tilde{v}}_k$, where $\mf{\tilde{v}}_k \in \mathcal{S}_{\text{all}}^{\text{sort}}$ is a binary spiking sequence of length $\tD$.

For each $\mathbf{\tilde{v}}_k$ and each $n$, we will determine the error event $\hat{\mathbf{x}}{\hi}^{(n)} \neq \xhi ^{(n)}$, when $\xhi ^{(n)} = \mathbf{\tilde{v}}_k$. First, consider the scenario when $\xhi^{(n)}=\mathbf{\tilde{v}}_k$ for some $0<k<l_\tD$ (excluding $\mathbf{\tilde{v}}_0, \mathbf{\tilde{v}}_{l_D}$). 
The corresponding noiseless measurement is $c[n]=\tilde{\theta}_k=\mathbf{h}_{\alpha}^{\top}\mathbf{\tilde{v}}_k$ which satisfies
$
    \smash{\tilde{\theta}_{k-1}< c[n]=\tilde{\theta}_k <\tilde{\theta}_{k+1}}
$.
Since $\Theta^{\text{sort}}_{\alpha}$ is sorted, it can be easily verified that the nearest neighbor of $c_e[n]$ will be $\tilde{\theta}_k$, if and only if $c_e[n]$ satisfies the following condition:
\begin{equation}
    \smash{(\tilde{\theta}_{k-1}+\tilde{\theta}_{k})/2\leq  c_e[n]\leq (\tilde{\theta}_{k+1}+\tilde{\theta}_{k})/2} \label{eqn:succ_event}
\end{equation}
Since $\tilde{\theta}_k=\mf{h}_{\alpha}^{\top}\tilde{\mf{v}}_k$, the solution to \eqref{eqn:g_nn} is attained at $\tilde{\mf{v}}_k \in \mathcal{S}^{\text{sort}}_{\text{all}}$, and the decoding is successful. Therefore Algorithm $2$ produces an erroneous estimate of $\mathbf{\tilde{v}}_k$ if and only if $c_e[n]$ violates \eqref{eqn:succ_event}. The event $c_e[n] \not\in [\frac{\tilde{\theta}_{k-1}+\tilde{\theta}_{k}}{2}, \frac{\tilde{\theta}_{k+1}+\tilde{\theta}_{k}}{2}]$ is equivalent to $e[n]\in \mathcal{E}_k$ ($e[n]$ is defined earlier in \eqref{eqn:noisy_AR}), where
\begin{align}
    \smash{\mathcal{E}_k=\{e[n] < -\frac{\tilde{\theta}_{k}-\tilde{\theta}_{k-1}}{2},\text{ or } e[n] > \frac{\tilde{\theta}_{k+1}-\tilde{\theta}_{k}}{2}\}} \label{eqn:err_in}
\end{align}
Finally, we characterize the error events for $k=0,l_{\tD}$. The error events for $c[n]=\theta_0=0$ or $c[n]=\theta_{l_{\tD}}$ are given by:
\begin{align}
    \smash{\hspace{-0.1cm}\mathcal{E}_0=\{e[n] \geq \tilde{\theta}_{1}/2\},\ \mathcal{E}_{l
    _{\tD}}=\{e[n]\leq -(\tilde{\theta}_{l_\tD}-\tilde{\theta}_{l_\tD-1})/2\}}\label{eqn:err_bou}
\end{align}
Define the ``minimum distance" between points in $\Theta^{\text{sort}}_{\alpha}$:
\be
\smash{\Delta \theta_{\min}(\alpha,\tD)= \min_{1\leq k \leq l_{\tD}} \vert \tilde{\theta}_{k}-\tilde{\theta}_{k-1}\vert}.
\ee 
This minimum distance depends on $A, \alpha$ and $\tD$. From \eqref{eqn:err_in}, \eqref{eqn:err_bou} it can be verified that if $2\vert w[n]\vert<\Delta \theta_{\min}(\alpha,\tD)/2$ (which would imply $\vert e[n]\vert<\Delta \theta_{\min}(\alpha,\tD)/2$) for all $n$, then $\mathbf{\hat{x}}{\hi}^{(n)}=\xhi^{(n)}$. As summarized in Theorem \ref{thm:rob}, Algorithm $2$ can exactly recover the ground truth spikes from measurements corrupted by bounded adversarial noise, the extent of the robustness is determined by the parameters $A,\alpha, \tD$. 
\begin{algorithm}[h]
\caption{Noisy Spike Recovery}
\begin{algorithmic}[1]
\item \textbf{Input:} Measurement $c_e [n]$, Sorted list $\Theta^{\text{sort}}_{\alpha}$ and the corresponding (ordered) spike patterns $\mathcal{S}^{\text{sort}}_{\text{all}}$
\item \textbf{Output:} Decoded spike block $\hat{\mathbf{x}}{\hi}^{(n)}$\\
Set $l\leftarrow 0, u\leftarrow 2^{\tD}-1$\\
\textbf{ while } $u-l > 1$\\
\quad \quad Set $m \leftarrow l + \lfloor( u - l ) / 2\rfloor$\\
\quad \quad \textbf{ if } $\tilde{\theta}_m >c_e[n]$ \textbf{ then }\\
\quad \quad \quad $u\leftarrow m$\\
\quad \quad \textbf{ else }\\
\quad \quad \quad $l\leftarrow m$\\
\quad \quad \textbf{ end if }\\
\textbf{ end while }\\
Find the nearest neighbor 
$
    i^{\star}=\arg\min_{i\in \{l,u\}} (c_e[n]-\tilde{\theta}_i)^2
$\\
Return $\hat{\mathbf{x}}{\hi}^{(n)} \leftarrow \mathbf{\tilde{v}}_{i^{\star}}$
\end{algorithmic}
 \label{alg:LLS}
\end{algorithm}
\begin{thm}\label{thm:rob}
Assume $\alpha \in \mathcal{G}_{\tD}$. Given the ordered set $\Theta^{\text{sort}}_{\alpha}$, the output of Algorithm $2$ with input $c_e[n]$ exactly coincides with the solution of the optimization problem \eqref{eqn:g_nn} in at most $O(\tD)$ steps. Furthermore, if for all $n$, $\vert w[n]\vert < \Delta \theta_{\min}(\alpha,\tD)/4$,  then the output of Algorithm 2 satisfies $\mathbf{\hat{x}}{\hi}^{(n)}=\xhi^{(n)}$.
\end{thm}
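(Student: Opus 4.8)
The plan is to split the proof along the theorem's two assertions: (i) Algorithm~2 is a correct and $O(\tD)$-time implementation of the one-dimensional nearest-neighbour search \eqref{eqn:g_nn}, and (ii) under $|w[n]|<\Delta\theta_{\min}(\alpha,\tD)/4$ the nearest neighbour of $c_e[n]$ in $\Theta^{\text{sort}}_{\alpha}$ is forced to be the noiseless value $c[n]$, so the decoded block is exact. Throughout I use that $\alpha\in\mathcal{G}_{\tD}$ guarantees the $\tilde\theta_k$ are distinct (hence $\Delta\theta_{\min}(\alpha,\tD)>0$) and, via Lemma~\ref{lem:bij}, that $\blds{\Phi}_{\alpha}$ is a bijection with $\blds{\Phi}_{\alpha}(\tilde{\mathbf{v}}_k)=\tilde\theta_k$.

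For (i), the key is a loop invariant for the main loop of Algorithm~2: after every iteration one has $l<u$, and ($\tilde\theta_l\le c_e[n]$ or $l=0$), and ($c_e[n]\le\tilde\theta_u$ or $u=l_{\tD}$), where $l_{\tD}=2^{\tD}-1$. This holds at initialization since $l=0$, $u=l_{\tD}$. Inside the loop $u-l\ge2$, so the midpoint is strictly interior, $l<m<u$; if $\tilde\theta_m>c_e[n]$ the update $u\leftarrow m$ re-establishes $c_e[n]<\tilde\theta_u$ (and leaves the $l$-clause untouched), otherwise $l\leftarrow m$ re-establishes $\tilde\theta_l\le c_e[n]$, so the invariant propagates. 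Since each update replaces $u-l$ by at most $\lceil(u-l)/2\rceil$, the loop terminates after $O(\log l_{\tD})=O(\tD)$ iterations, each costing $O(1)$, with $u=l+1$. Monotonicity of $\Theta^{\text{sort}}_{\alpha}$ together with the invariant then yields $|\tilde\theta_i-c_e[n]|\ge|\tilde\theta_l-c_e[n]|$ for all $i\le l$ and $|\tilde\theta_i-c_e[n]|\ge|\tilde\theta_u-c_e[n]|$ for all $i\ge u$ (the clauses $l=0$, $u=l_{\tD}$ cover the out-of-range cases $c_e[n]<\tilde\theta_0$, $c_e[n]>\tilde\theta_{l_{\tD}}$), so the global nearest neighbour has index in $\{l,u\}$ and the final line returns exactly it. Finally, by the bijection of Lemma~\ref{lem:bij}, minimizing $|c_e[n]-\mathbf{h}_{\alpha}^{\top}\mathbf{v}|$ over $\mathbf{v}\in\mathcal{S}^{\text{sort}}_{\text{all}}$ equals finding the nearest $\tilde\theta_k$ and returning $\tilde{\mathbf{v}}_k$; hence Algorithm~2's output coincides with the solution of \eqref{eqn:g_nn}.

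For (ii), write $e[n]=w[n]-\alpha^{\tD}w[n-1]$ as in \eqref{eqn:noisy_AR}; by the triangle inequality and $0<\alpha^{\tD}<1$, the hypothesis gives $|e[n]|\le|w[n]|+\alpha^{\tD}|w[n-1]|<\tfrac{1}{4}\Delta\theta_{\min}(\alpha,\tD)+\tfrac{1}{4}\Delta\theta_{\min}(\alpha,\tD)=\tfrac12\Delta\theta_{\min}(\alpha,\tD)$ (and $|e[0]|=|w[0]|$ for the scalar block $x{\hi}^{(0)}\in\{0,A\}$). Suppose $\xhi^{(n)}=\tilde{\mathbf{v}}_k$, so the noiseless measurement is $c[n]=\tilde\theta_k$ and, by part (i) and the success condition \eqref{eqn:succ_event}, decoding of block $n$ fails iff $e[n]\in\mathcal{E}_k$. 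For $0<k<l_{\tD}$, \eqref{eqn:err_in} requires $|e[n]|>\tfrac12\min\{\tilde\theta_k-\tilde\theta_{k-1},\tilde\theta_{k+1}-\tilde\theta_k\}\ge\tfrac12\Delta\theta_{\min}(\alpha,\tD)$; for $k\in\{0,l_{\tD}\}$, \eqref{eqn:err_bou} requires $|e[n]|\ge\tfrac12\tilde\theta_1\ge\tfrac12\Delta\theta_{\min}(\alpha,\tD)$, resp.\ $|e[n]|\ge\tfrac12(\tilde\theta_{l_{\tD}}-\tilde\theta_{l_{\tD}-1})\ge\tfrac12\Delta\theta_{\min}(\alpha,\tD)$, using $\tilde\theta_0=0$ and the definition of $\Delta\theta_{\min}$. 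Since $|e[n]|<\tfrac12\Delta\theta_{\min}(\alpha,\tD)$ in every case, $e[n]\notin\mathcal{E}_k$, so $\hat{\mathbf{x}}{\hi}^{(n)}=\xhi^{(n)}$ for all $n$, proving the claim.

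The robustness bound in (ii) is the routine calculation; the step requiring genuine care is the correctness of the modified binary search in (i) — one must check that the terminal pair $\{l,u\}$ brackets the \emph{global} nearest neighbour rather than a merely local one, which is exactly where the boundary clauses of the invariant and the strict interiority $l<m<u$ of the midpoint come in. I expect this bookkeeping, and in particular the endpoint/out-of-range cases $c_e[n]<\tilde\theta_0$, $c_e[n]>\tilde\theta_{l_{\tD}}$ and $k\in\{0,l_{\tD}\}$, to be the only real subtlety.
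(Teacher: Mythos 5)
Your proposal is correct and follows essentially the same route as the paper: the paper's argument for this theorem is the discussion preceding its statement, which characterizes the error events $\mathcal{E}_k$ in \eqref{eqn:err_in}--\eqref{eqn:err_bou}, notes that $|w[n]|<\Delta\theta_{\min}(\alpha,\tD)/4$ forces $|e[n]|<\Delta\theta_{\min}(\alpha,\tD)/2$ so that no error event can occur, and invokes the standard $O(\log 2^{\tD})=O(\tD)$ complexity of nearest-neighbour binary search on a sorted list. The only difference is that you make explicit the loop invariant certifying that the terminal pair $\{l,u\}$ brackets the global nearest neighbour, a step the paper treats as well known.
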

\bl{
From Theorem \ref{thm:rob}, it is evident that  $\Delta \theta_{\min}(\alpha,\tD)$ plays an important role in characterizing the upper bound on noise. We attempt to gain insight into how $\Delta \theta_{\min}(\alpha,\tD)$ varies as a function of $\alpha$ when $\tD$ is held fixed.
\begin{lem}\label{lem:d_min_decay}
Given $\textup{D}$, $\Delta \theta_{\min}(\alpha,\tD)=\alpha^{\tD-1}$ for $\alpha\in (0,0.5]$.
\end{lem}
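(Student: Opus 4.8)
The plan is to reduce the statement to an elementary fact about finite sets of reals: the smallest gap between consecutive elements of a sorted finite set equals the smallest distance between any two of its members. Hence it suffices to show $\min\{|\theta-\theta'|:\theta\neq\theta'\in\Theta_\alpha\}=\alpha^{\tD-1}$ for $\alpha\in(0,1/2]$. First I would rewrite a generic element of $\Theta_\alpha$ in the form $\theta=\sum_{j=0}^{\tD-1}c_j\alpha^j$ with $c_j\in\{0,1\}$ (substitute $j=\tD-d$ in $\theta_k=\mathbf{h}_\alpha^\top\mathbf{v}_k$, taking $A=1$ as throughout the noise analysis; this is exactly the $\beta$-expansion picture of Section III with radix $\beta=1/\alpha\geq 2$), so that $\Theta_\alpha=\{\sum_{j=0}^{\tD-1}c_j\alpha^j:(c_0,\dots,c_{\tD-1})\in\{0,1\}^{\tD}\}$. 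The engine of the proof is the ``digit dominance'' inequality valid precisely when $\beta=1/\alpha\geq 2$: for every $0\leq t\leq \tD-1$,
\begin{align}
g(t):=\alpha^t-\textstyle\sum_{j=t+1}^{\tD-1}\alpha^j\ \geq\ \alpha^{\tD-1}. \label{eqn:dmin_plan}
\end{align}
I would prove this by noting $g(t)-g(t+1)=\alpha^t(1-2\alpha)\geq 0$ for $\alpha\leq 1/2$, so $g$ is non-increasing in $t$ and attains its minimum $g(\tD-1)=\alpha^{\tD-1}$.

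For the lower bound, take distinct digit strings $(c_j)$ and $(c_j')$ and let $j^\star$ be the smallest index where they differ. Factoring out $\alpha^{j^\star}$ and using $|c_{j^\star}-c_{j^\star}'|=1$ together with $\sum_{i=1}^{\tD-1-j^\star}\alpha^i\leq \alpha/(1-\alpha)\leq 1$ (this is where $\alpha\leq 1/2$ enters), the reverse triangle inequality gives $|\theta-\theta'|\geq \alpha^{j^\star}\bigl(1-\sum_{i=1}^{\tD-1-j^\star}\alpha^i\bigr)=g(j^\star)\geq \alpha^{\tD-1}$ by \eqref{eqn:dmin_plan}; the boundary case $j^\star=\tD-1$ is immediate since the sum is empty. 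For the matching upper bound I would simply exhibit the pair $\theta=\alpha^{\tD-1}$ (only the least-weight digit set to $1$) and $\theta=0$, whose distance is exactly $\alpha^{\tD-1}$. Combining the two bounds yields $\Delta\theta_{\min}(\alpha,\tD)=\alpha^{\tD-1}$ on $(0,1/2]$; as a byproduct one also sees $(0,1/2]\subseteq\mathcal{G}_{\tD}$, consistent with Theorem \ref{thm:ident}.

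The only real care needed is bookkeeping in the lower-bound step: writing the ``factor out $\alpha^{j^\star}$'' estimate so that it holds uniformly over $j^\star\in\{0,\dots,\tD-1\}$ (including the empty-sum boundary case), and being explicit that $\alpha=1/2$ is admissible because the relevant finite geometric sums are strictly dominated by their infinite counterparts. No appeal to the delicate theory of $\beta$-expansions for $\beta\in(1,2)$ is required here, since the hypothesis $\alpha\leq 1/2$ places us in the ``easy/greedy'' regime $\beta\geq 2$.
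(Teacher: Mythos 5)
Your proof is correct, but it takes a genuinely different route from the paper's. The paper proceeds by a recursive construction: it builds $\Theta_{\alpha}$ as a nested family of sets $\mathcal{S}_i=\{\mathcal{S}_{i-1},\mathcal{S}_{i-1}+\alpha^{\tD-1-i}\}$, proves by one induction that each $\mathcal{S}_i$ is already sorted when $\alpha\in(0,0.5]$ (so that the copy $\mathcal{S}_{i-1}+\alpha^{\tD-1-i}$ sits entirely to the right of $\mathcal{S}_{i-1}$), and then by a second induction tracks the recursion $\Delta_{\min}(\mathcal{S}_i)=\min\{\Delta_{\min}(\mathcal{S}_{i-1}),\,\alpha^{\tD-i-1}-\sum_{j=1}^{i}\alpha^{\tD-j}\}$ down to $\alpha^{\tD-1}$. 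You instead collapse the problem to the minimum pairwise distance over all of $\Theta_{\alpha}$ and bound it from below in one shot via the leading-digit dominance inequality $g(j^\star)\geq\alpha^{\tD-1}$, with the matching pair $\{0,\alpha^{\tD-1}\}$ for the upper bound. Both arguments hinge on the same fact in the same place --- for $\alpha\leq 1/2$ the geometric tail $\sum_{j>t}\alpha^j$ is dominated by the leading term $\alpha^t$ (your $g(t)\geq\alpha^{\tD-1}$ is essentially the paper's inequality \eqref{eqn:part_sum} plus the induction step combined) --- but your version dispenses with both inductions and is shorter. What the paper's construction buys in exchange is the explicit sorted structure of $\Theta_{\alpha}$ (the map $k\mapsto\theta_k$ is order-preserving for $\alpha\leq 1/2$), which is not needed for the lemma but mirrors the greedy $\beta$-expansion picture; what yours buys is brevity and the free byproduct $(0,1/2]\subseteq\mathcal{G}_{\tD}$, i.e., injectivity on this range without appealing to the measure-zero argument of Theorem \ref{thm:ident}. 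One small point of care, which you already flag: the reduction from consecutive-gap minimum to all-pairs minimum requires the $2^{\tD}$ elements to be distinct, and your lower bound itself supplies this, so the argument is not circular.
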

\begin{proof}
The proof is in Appendix C.
\end{proof}}
\bl{When $\alpha\in (0,0.5]$, $\Delta \theta_{\min}(\alpha,\tD)$ is monotonically increasing with $\alpha$. However, for $\alpha>0.5$ the trend {fluctuates with $\alpha$ differently for different $\tD$}, and becomes quite challenging to predict. This is also confirmed by the empirical plot in Fig. \ref{fig:my_label_d}. A refined analysis of $\Delta \theta_{\min}(\alpha,\tD)$ to gain insight into desirable filter parameters $\alpha$ is an interesting direction for future work.}
\vspace{-0.3cm}
\subsection{Trade-off between memory and computational complexity}
A crucial aspect of Algorithms 1 and 2 is that they achieve efficient run-time complexity by leveraging the off-line construction of the sorted list $\Theta^{\text{sort}}_{\alpha}$ and $\mathcal{S}^{\text{sort}}_{\text{all}}$. These lists, each with $2^{\tD}$ elements, need to be stored in memory and made available during run-time. 
Since there is no free lunch, the resulting computational efficiency of $O(\tD)$ at run-time is attained at the expense of the additional memory that is required to store the sorted lists $\Theta^{\text{sort}}_{\alpha},  \mathcal{S}^{\text{sort}}_{\text{all}}$.
\vspace{-0.38cm}
\subsection{Parallelizable Implementation} 

Algorithm $2 \ (\text{also Algo. }1)$ only takes $c_e[n] (c[n])$ as input and returns $\hat{\mf{x}}{\hi}^{(n)}$, and is completely de-coupled from any other $\hat{\mf{x}}{\hi}^{(n^{\prime})}$, $n^{\prime}\neq n$. 
Recall that in reality, we are provided with measurements $z{\lo}[n] (y{\lo}[n])$, and $c_e[n] (\text{respectively }c[n])$ needs to be computed. Due to this de-coupling, we can compute $c_e[n]'s$ in parallel using two consecutive low-rate samples $z{\lo}[n],z{\lo}[n-1]$ and perform a nearest neighbor search without waiting for any previously decoded spikes. Therefore, the total decoding complexity can be further improved depending on the available parallel computing resources. 

\vspace{-0.2cm}
\section{Error Analysis for Gaussian Noise}

Algorithm $2$ solves \eqref{eqn:g_nn} without requiring any knowledge of the noise statistics. However, in order to analyze its performance, we will make the following (standard) assumptions on the statistics of the high-rate spiking signal $\xhi$ and the measurement noise $w[n]$ as follows:
\begin{itemize}
    \item (\textbf{A1}) The entries of the binary vector $\xhi \in \{0,A\}^{L}$ are i.i.d random variables distributed as \bl{$x{\hi}[n]\sim A\text{Bern}(p)$}.
    \item (\textbf{A2}) The additive noise $w[n], 0\leq n \leq M-1$ is independent of $x{\hi}[n]$, and distributed as $w[n]\sim \mathcal{N}(0,\sigma^2)$
\end{itemize}
\vspace{-0.2cm}
\subsection{Probability of Erroneous Decoding}\label{sec:prob}
Under assumption (\textbf{A2}), the ML estimate of $\xhi$ is given by the solution to the following problem:  
\begin{equation*}
    \smash{\hat{\mathbf{x}}_{\text{ML}}=\arg\min_{\mf{v} \in \{0,A\}^{L}} \Vert \mathbf{z}{\lo}-\mathbf{S}_{\tD}\mathbf{G}_{\alpha}\mathbf{v}\Vert_2} \quad (P_{\text{NN}})
\end{equation*}

The proposed Algorithm $2$ does not attempt to solve ($P_{\text{NN}}$), which is computationally intractable. Instead, it solves {a set of $M-1$ one dimensional nearest neighbor search problems, by finding the nearest neighbor of $c_e[n]$ for each $n=1,2,\cdots,M-1$.} This scalar nearest neighbor search is implemented in a computationally efficient manner by using parallel binary search on a pre-sorted list. Notice that by the operation \eqref{eqn:c_op_noise}, the variance of the equivalent noise term $e[n]$ gets amplified by a factor of at most $(1+\alpha^{2\tD})<2$. This can be thought of as a price paid to achieve computational efficiency and parallelizability. The following theorem characterizes the dependence of certain key quantities of interest, such as the signal-to-noise ratio (SNR), undersampling factor $\tD$, and filter's frequency response (controlled by $\alpha$) on the performance of Algorithm $2$. 
\begin{thm}\label{thm:full_prob}
Suppose $\alpha\in \mathcal{G}_{\tD}$ and assumptions \textup{\textbf{(A1-A2)}} hold. Given $\delta>0$, if the following condition is satisfied:
\begin{equation}
    \smash{\Delta \theta^2_{\min}(\alpha,\textup{D})/\sigma^2 \geq 4\ln\left(2M/\delta\right)} \label{eqn:snr_cond}
\end{equation}
then Algorithm $2$ can exactly recover the binary signal $\mf{x}_{\textup{hi}}$ with probability at least $1-\delta$.
\end{thm}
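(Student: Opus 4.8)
\emph{Proof plan.} I would prove this by a union bound over the $M$ de-coupled decoding units created by the partition of $\xhi$ --- the scalar $x_{\textup{hi}}^{(0)}$ and the $M-1$ length-$\tD$ blocks $\xhi^{(n)}$, $1\le n\le M-1$ --- reusing the explicit error-event description already established before Theorem~\ref{thm:rob}. Since $\alpha\in\mathcal{G}_{\tD}$, Lemma~\ref{lem:bij} gives $|\Theta_\alpha|=2^{\tD}$, so $\Theta_\alpha^{\text{sort}}$ has distinct entries, $\Delta\theta_{\min}(\alpha,\tD)>0$, and Algorithm~2 returns the nearest neighbour of $c_e[n]$ in $\Theta_\alpha^{\text{sort}}$. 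The first step is to observe that when $\xhi^{(n)}=\tilde{\mathbf{v}}_k$ we have $c_e[n]=\tilde{\theta}_k+e[n]$, and by \eqref{eqn:err_in}--\eqref{eqn:err_bou} this block is mis-decoded exactly when $e[n]\in\mathcal{E}_k$; moreover, whether $k$ is an interior index or a boundary index $k\in\{0,l_{\tD}\}$, every point of $\mathcal{E}_k$ has $|e[n]|\ge\tfrac12\Delta\theta_{\min}(\alpha,\tD)$, a threshold that is \emph{the same for all $k$}. Hence the block-$n$ failure event sits inside the fixed event $\{\,|e[n]|\ge\tfrac12\Delta\theta_{\min}(\alpha,\tD)\,\}$ regardless of which spiking pattern produced the data, and the scalar unit fails only when $|w[0]|\ge A/2$, which --- using $A\ge\Delta\theta_{\min}(\alpha,\tD)$ since the smallest positive entry of $\Theta_\alpha^{\text{sort}}$ is at most $A$ --- also lies inside $\{\,|w[0]|\ge\tfrac12\Delta\theta_{\min}(\alpha,\tD)\,\}$.

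Next I would bound the probabilities of these $M$ threshold events. Under \textbf{(A2)}, $e[n]=w[n]-\alpha^{\tD}w[n-1]\sim\mathcal{N}\!\big(0,(1+\alpha^{2\tD})\sigma^2\big)$ with variance strictly below $2\sigma^2$, while $w[0]\sim\mathcal{N}(0,\sigma^2)$. Applying the two-sided Gaussian tail bound $\prob(|X|\ge t)\le2\exp\!\big(-t^2/(2\,\mathrm{Var}\,X)\big)$ with $t=\tfrac12\Delta\theta_{\min}(\alpha,\tD)$ to each event and summing gives
\[
\prob\!\big(\hat{\mathbf{x}}_{\textup{hi}}\neq\xhi\big)\;\le\;\sum_{n=0}^{M-1}\prob\big(\text{unit }n\text{ mis-decoded}\big)\;\le\;2M\,\exp\!\Big(-\,c_0\,\frac{\Delta\theta^2_{\min}(\alpha,\tD)}{\sigma^2}\Big)
\]
for an absolute constant $c_0>0$ coming out of the error radius $\tfrac12\Delta\theta_{\min}$ and the noise-inflation bound $1+\alpha^{2\tD}<2$; condition \eqref{eqn:snr_cond} is then exactly what is needed to force the right-hand side below $\delta$. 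I would also point out that the per-unit bound is conditional on an arbitrary realisation of $\xhi^{(n)}$ (the threshold being pattern-independent), so \textbf{(A1)} is not actually invoked --- the guarantee is uniform over $\xhi\in\{0,A\}^L$ --- and \textbf{(A2)} is used only for the marginal Gaussian tails, not for any independence across the $w[n]$.

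The hard part is not really hard: everything analytical (the description of $\mathcal{E}_k$, and the near-certainty that $\alpha\in\mathcal{G}_{\tD}$ via Theorem~\ref{thm:ident}) is already in place, so the only points needing care are (i) collapsing the pattern-dependent events $\mathcal{E}_k$ onto the single event $\{\,|e[n]|\ge\tfrac12\Delta\theta_{\min}\,\}$ --- this is precisely where the minimum in the definition of $\Delta\theta_{\min}(\alpha,\tD)$ is used and where the boundary indices $k\in\{0,l_{\tD}\}$ and the scalar unit $n=0$ must be handled separately from the generic interior case --- and (ii) tracking the numerical constant through the two-sided tail and the factor $1+\alpha^{2\tD}$ so that one lands on the exact threshold in \eqref{eqn:snr_cond}. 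I expect the written-out proof to take only a few lines.
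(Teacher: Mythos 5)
Your proposal is correct and follows the same basic route as the paper's Appendix D: a union bound over the $M$ decoding units, containment of each error event $\mathcal{E}_k$ in $\{\vert e[n]\vert \ge \Delta\theta_{\min}(\alpha,\tD)/2\}$, and a Gaussian tail bound on $e[n]$. The one genuine difference is how the pattern-dependence is removed. The paper conditions on $\xhi^{(n)}=\tilde{\mathbf{v}}_k$, writes the per-block error probability as a prior-weighted sum of $Q$-functions using the Bernoulli prior from (\textbf{A1}), bounds each $Q$-term by $Q(\Delta\theta_{\min}(\alpha,\tD)/(2\sigma_1))$, and only then observes that the binomial weights sum to one. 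You instead collapse all the events $\mathcal{E}_k$ onto the single pattern-independent event $\{\vert e[n]\vert\ge\Delta\theta_{\min}(\alpha,\tD)/2\}$ up front, which makes (\textbf{A1}) superfluous and yields a guarantee that is uniform over $\xhi\in\{0,A\}^L$ rather than averaged over the prior --- a mild but real strengthening --- and you also treat the scalar unit $n=0$ explicitly, which the paper's union bound glosses over. Two small caveats. First, your side remark that no independence across the $w[n]$ is used is overstated: the variance formula $(1+\alpha^{2\tD})\sigma^2$ and the Gaussianity of $e[n]=w[n]-\alpha^{\tD}w[n-1]$ do require $w[n]$ and $w[n-1]$ to be jointly Gaussian and uncorrelated, which is the only way the paper reads (\textbf{A2}) as well. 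Second, when you track the constant, note that $\sigma_1^2=(1+\alpha^{2\tD})\sigma^2>\sigma^2$ works against you: the two-sided bound $2\exp(-t^2/(2\sigma_1^2))$ with $t=\Delta\theta_{\min}(\alpha,\tD)/2$ gives $\exp(-\Delta\theta_{\min}^2(\alpha,\tD)/(8\sigma_1^2))$, so landing exactly on the threshold $4\ln(2M/\delta)$ normalized by $\sigma^2$ requires the same generosity with constants that the paper itself exercises in step $(b)$ of \eqref{eqn:pet}; neither issue affects the validity of the approach, only the numerical constant in \eqref{eqn:snr_cond}.
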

\begin{proof} 
The proof follows standard arguments for computing the probability of error for symbol detection in Gaussian noise, followed by certain simplifications and is included in Appendix $D$ for completeness.
\end{proof}

\begin{figure}
\setlength\belowcaptionskip{-1.4\baselineskip}
    \centering
    \includegraphics[width=0.76\linewidth]{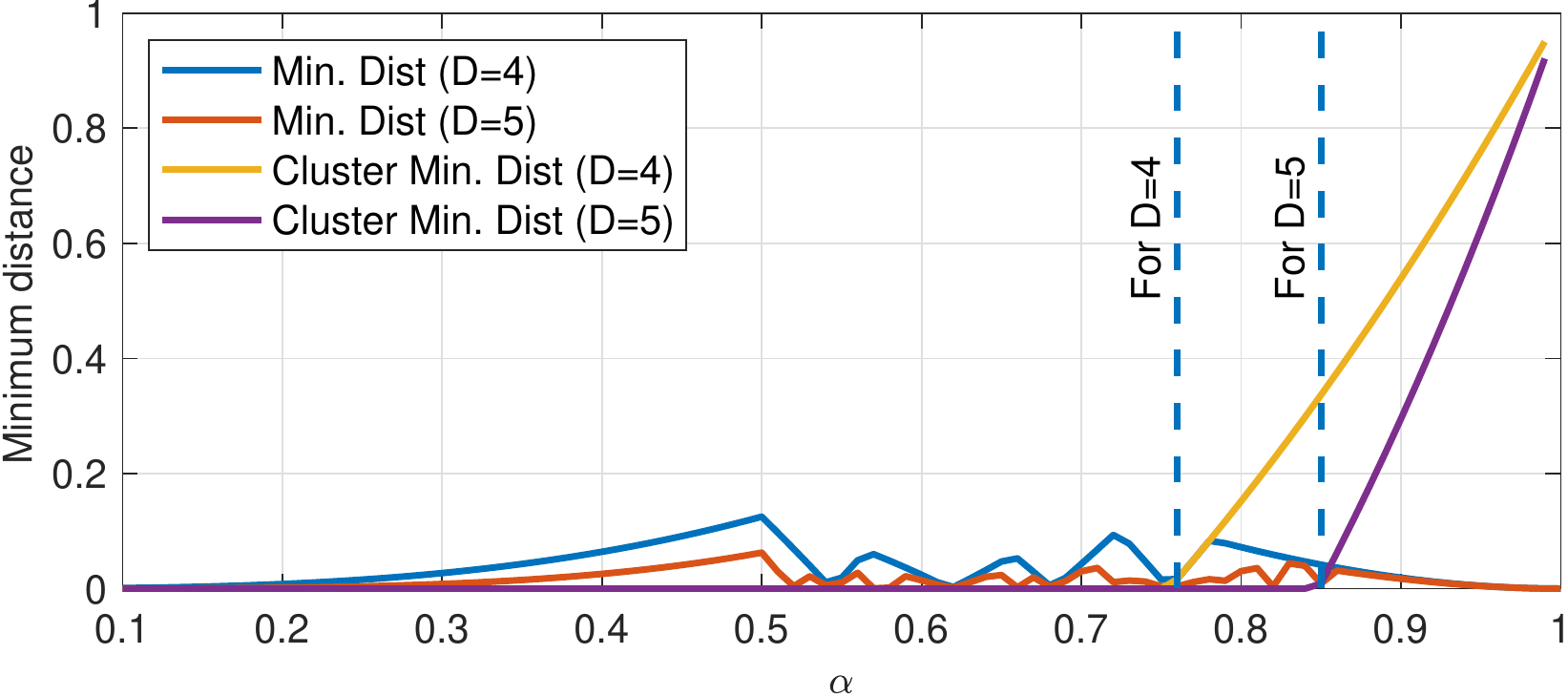}\\
    \includegraphics[width=0.76\linewidth]{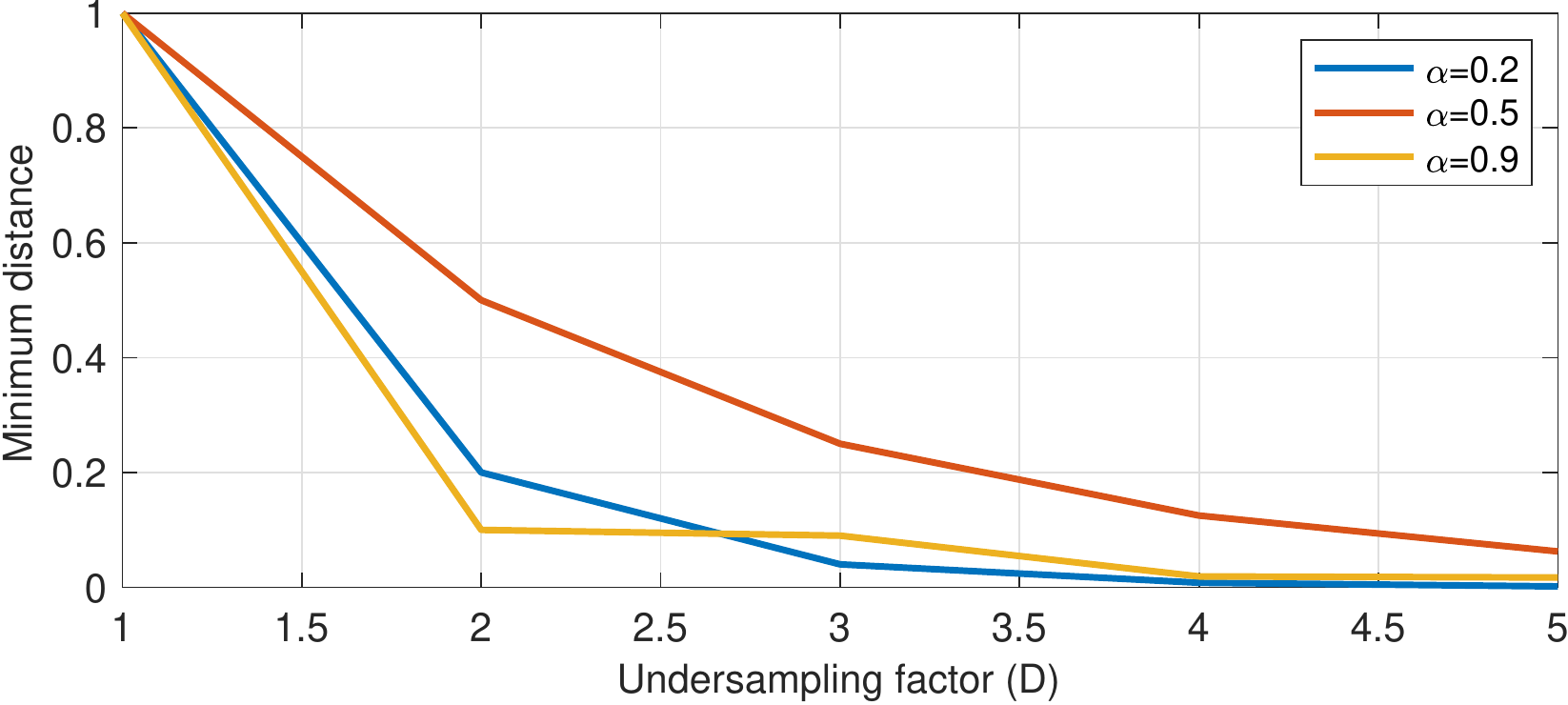}
    \caption{Variation of $\Delta \theta_{\min}(\alpha,\tD)$ as a function of undersampling factor $\tD$ and $\alpha$. The cluster-distance $\Delta^{c}_{\min}(\alpha,\tD)$ vs. $\alpha$ is also overlaid. Each dotted line denotes the start of the interval $\mathcal{F}_{\tD}$.}
    \label{fig:my_label_d}
\end{figure}

In Fig. \ref{fig:my_label_d}, we plot $\Delta \theta_{\min}(\alpha,\tD)$ as a function of $\tD$ for different values of $\alpha$. As expected, $\Delta \theta_{\min}(\alpha,\tD)$ decays as the $\tD$ increases. Understandably, for a fixed $\alpha$, as $\tD$ increases, it becomes harder to recover the spikes exactly, and higher SNR is needed to compensate for the lower sampling rate. This can be interpreted as the price paid for super-resolution in presence of noise. This phenomenon is also reminiscent of the noise amplification effect in super-resolution, where the ability to super-resolve point sources {becomes more severely hindered by noise as the target resolution grid becomes finer\cite{donoho1992superresolution}}. \bl{In Fig. \ref{fig:my_label_d}, we plot $\Delta \theta_{\min}(\alpha,\tD)$ as a function of $\alpha$ and as predicted by Lemma \ref{lem:d_min_decay}, it monotonically increases upto $0.5$, { but for $\alpha > 0.5$}, the behavior becomes much more erratic and a precise characterization becomes challenging. It is to be noted that in Theorem \ref{thm:full_prob}, we aim to {\em exactly recover} $\xhi$. The SNR requirement can be relaxed if our goal is to recover only spike counts instead of the true spikes as discussed in the next subsection. One can define {other notions of approximate recovery, the analysis of which} will be a topic of future research.}
\vspace{-0.6cm}
\bl{\subsection{Relaxed Spike reconstruction: Count Estimation}}
\label{sec:count_rob}
\bl{As shown in Theorem \ref{thm:rob}, exact recovery of spikes is possible under somewhat restrictive condition on the noise in terms of $\Delta \theta_{\min}(\alpha,\tD)$, which becomes quite small as $\tD$ increases. This naturally calls for other relaxed notions of recovery which {can handle larger noise levels}. In neuroscience, it is believed that information is encoded as either the spike timing (temporal code) or the firing rates (rate
coding) of individual neurons in the brain. Therefore, the spike counts over an interval can be informative to understand neural functions, {even when it is impossible to temporally localize the neural spikes}.  For example, neurons in the visual cortex encode stimulus orientations as their firing rates \cite{hubel1959receptive}. We will therefore focus on spike count as an approximate recovery metric, which concerns estimating the number of spikes occurring between two consecutive low-rate measurements instead of resolving the individual spiking activity at a higher resolution.} 

\bl{Let $\gamma[n]$ denote the total number of spikes occurring between two consecutive low-rate samples $z\lo[n]$ and $z\lo[n-1]$. Since $\xhi$ and its estimate $\mf{\widehat{x}}_{\text{hi}}$ are both binary valued (amplitude $A$), the true spike count ($\gamma[n]$) and estimated count ($\hat{\gamma}[n]$) are given by: 
 $
     \gamma[n]=\Vert\xhi^{(n)}\Vert_0, \quad \hat{\gamma}[n]=\Vert\hat{\mathbf{x}}_{\text{hi}}^{(n)}\Vert_0,\ n=1,\cdots,M-1,
 $
$\gamma[0]=x_{\text{hi}}[0]/A$  and $\hat{\gamma}[0]=\hat{x}_{\text{hi}}[0]/A$ since the first block is of size $1$ as described in \eqref{eqn:block_x}.
Define a set $\mathcal{C}_{k}^{\tD}$ as:
\begin{align*}
\mathcal{C}_k^{\tD}:=\{\mathbf{v}\in \{0,A\}^{\tD},  \Vert\mathbf{v}\Vert_0=k\}, \quad 0\leq k \leq \tD
\end{align*} 
It is a collection of all binary vectors (of length $\tD$) with spike count $k$. The ground truth spike block belongs to $\mathcal{C}_{\gamma[n]}^{\tD}$. Any element from $\mathcal{C}_{\gamma[n]}^{\tD}$ will give the true spike count. {Hence, exact recovery of count can be possible even when spikes cannot be recovered. }}

\bl{For a fixed $\tD$, we define a set of $\alpha$ denoted by $\mathcal{F}_{\tD}$: 
\begin{align}
\mathcal{F}_{\tD}:=\{\alpha\in (0,1)\vert\alpha^{\tD}-\alpha^{\tD-k_0-1}-\alpha^{k_0}+1<0\}  \label{eqn:count_int}
\end{align}
where {$k_0=\lfloor\tD/2\rfloor$}. We will obtain a sufficient condition for robust spike count estimation {when $\alpha \in \mathcal{F}_{\tD}$.} It can be shown that for any $\tD$, $\mathcal{F}_{\tD}$ will always be non-empty. 
Define
\begin{align}
    \theta^{k}_{\min}:=\min_{\mf{u}\in \mathcal{C}_{k}^{\tD}} \mf{h}_{\alpha}^{\top}\mf{u} \quad \theta^{k}_{\max}:=\max_{\mf{u}\in \mathcal{C}_{k}^{\tD}} \mf{h}_{\alpha}^{\top}\mf{u} \label{eqn:theta_max_min}
\end{align}
Observe that if 
\begin{equation}
    {\theta^{k+1}_{\min} > \theta^{k}_{\max}, k=0,1,\cdots,\tD-1} \label{eqn:cluster}
\end{equation}
then all spike patterns $\mathbf{u}_i\in \mathcal{C}_{k}^{\tD}$ (with the same spike count $k$)  are clustered together when mapped on to the real line by the transformation $\mf{h}_{\alpha}^{\top}\mathbf{u}$ as shown in Figure \ref{fig:my_label}.  
{ When \eqref{eqn:cluster} holds}, we can define a ``cluster-restricted minimum distance" as:
\begin{equation}
    \smash{\Delta_{\min}^{c}(\alpha,\tD):= \min_{0\leq k \leq \tD-1} \theta^{k+1}_{\min}-\theta^{k}_{\max}} \label{eqn:cluster_d_min}
\end{equation}
Given a noisy observation $c_e [n]=\mf{h}_{\alpha}^{\top}\mf{x}\hi^{(n)}+e[n]$,  the solution to the nearest neighbor problem \eqref{eqn:g_nn} may return an incorrect neighbor $\theta_{j}\neq \mf{h}_{\alpha}^{\top}\mf{x}\hi^{(n)}$. However, { when \eqref{eqn:cluster} holds and if the noisy observation satisfies the following conditions}:
\begin{align}
 {(\theta^{\gamma[n]}_{\min}+\theta^{\gamma[n]-1}_{\max})/2<  c_e[n] < (\theta^{\gamma[n]+1}_{\min}+\theta^{\gamma[n]}_{\max})/2} \label{eqn:band_count}
\end{align}
{then the nearest-neighbor decision rule in Algorithm $2$ will still ensure that $\theta_j \in \mathcal{C}_{\gamma[n]}^{\tD}$.} This has also been visualized in Fig. \ref{fig:my_label} where each colored band represents the ``safe-zone" for each count and the black dotted-line denotes the boundary. {This will result in correct identification of the spike count but will incur error in terms of spiking pattern.} We formally summarize this in the following Theorem that provides robustness guarantee for exact count recovery from measurements corrupted by adversarial noise (similar to Theorem $2$ for spike recovery). 
\begin{thm}\label{thm:spk_count}
Assume $\alpha \in \mathcal{F}_{\tD}$. Given the ordered set $\Theta^{\text{sort}}_{\alpha}$, let $\hat{\gamma}[n]$ be the estimated spike count obtained from Algorithm $2$ with input $c_e[n]$. If for all $n$, $\vert w[n]\vert < \Delta_{\min}^{c}(\alpha,\tD)/4$,  then the count can be exactly recovered, i.e., $\hat{\gamma}[n]=\gamma[n]$.
\end{thm}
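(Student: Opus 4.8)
The plan is to mirror the structure of the proof of Theorem~\ref{thm:rob}, replacing the exact-recovery decision regions by the coarser count-recovery decision regions, and then controlling the noise term $e[n]$ by the amplification bound already noted in the text. First I would establish that the hypothesis $\alpha\in\mathcal{F}_{\tD}$ implies the clustering condition \eqref{eqn:cluster}, i.e. $\theta^{k+1}_{\min}>\theta^{k}_{\max}$ for $k=0,1,\dots,\tD-1$. To see this, observe that for a fixed count $k$, the maximum of $\mf{h}_\alpha^\top\mf{u}$ over $\mf{u}\in\mathcal{C}_k^{\tD}$ is achieved by placing the $k$ ones in the positions with the largest weights, namely the last $k$ coordinates of $\mf{h}_\alpha=[\alpha^{\tD-1},\dots,\alpha,1]^\top$, giving $\theta^{k}_{\max}=1+\alpha+\cdots+\alpha^{k-1}$; similarly the minimum over $\mathcal{C}_{k+1}^{\tD}$ puts the $k+1$ ones in the positions with the smallest weights, the first $k+1$ coordinates, giving $\theta^{k+1}_{\min}=\alpha^{\tD-k-1}+\alpha^{\tD-k}+\cdots+\alpha^{\tD-1}$. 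Then $\theta^{k+1}_{\min}-\theta^{k}_{\max}>0$ is equivalent, after multiplying through by $(1-\alpha)$ and simplifying the geometric sums, to $\alpha^{\tD}-\alpha^{\tD-k-1}-\alpha^{k}+1<0$. The worst (largest) value of the left-hand side over $k\in\{0,\dots,\tD-1\}$ is attained near $k=\lfloor\tD/2\rfloor$ by a convexity/symmetry argument on the map $k\mapsto \alpha^{\tD-k-1}+\alpha^{k}$, so the single inequality defining $\mathcal{F}_{\tD}$ in \eqref{eqn:count_int} (with $k_0=\lfloor\tD/2\rfloor$) guarantees \eqref{eqn:cluster} for all $k$. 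This also shows $\mathcal{F}_{\tD}$ is nonempty since the left-hand side $\to 1-0-0+1>0$ is false but as $\alpha\to 1^-$ it tends to $0$ from below for the relevant $k_0$ when $\tD\ge 2$ — I would verify the limiting sign carefully here, as it is the one genuinely delicate spot.

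Next I would invoke Theorem~\ref{thm:rob} (whose computational claim applies verbatim since $\mathcal{F}_{\tD}\subseteq\mathcal{G}_{\tD}$ by the clustering/injectivity just established) to assert that Algorithm~$2$ returns, in $O(\tD)$ steps, the spike pattern $\tilde{\mf{v}}_{i^\star}\in\mathcal{S}^{\text{sort}}_{\text{all}}$ whose image $\tilde\theta_{i^\star}=\mf{h}_\alpha^\top\tilde{\mf{v}}_{i^\star}$ is the nearest neighbor of $c_e[n]$ in $\Theta^{\text{sort}}_\alpha$. The estimated count is $\hat\gamma[n]=\Vert\tilde{\mf{v}}_{i^\star}\Vert_0$, and $\hat\gamma[n]=\gamma[n]$ whenever $\tilde\theta_{i^\star}$ lies in the cluster $\mf{h}_\alpha^\top\big(\mathcal{C}^{\tD}_{\gamma[n]}\big)$. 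Since the clusters are separated (by \eqref{eqn:cluster}) and ordered, the nearest-neighbor output falls in the correct cluster precisely when $c_e[n]$ lies strictly between the midpoint of $[\theta^{\gamma[n]-1}_{\max},\theta^{\gamma[n]}_{\min}]$ and the midpoint of $[\theta^{\gamma[n]}_{\max},\theta^{\gamma[n]+1}_{\min}]$ — this is exactly condition \eqref{eqn:band_count} and I would justify it by the same order-based argument used for \eqref{eqn:succ_event}, with the obvious one-sided modifications when $\gamma[n]=0$ or $\gamma[n]=\tD$.

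Finally I would translate \eqref{eqn:band_count} into a bound on $|w[n]|$. Writing $c_e[n]=c[n]+e[n]$ with $c[n]=\mf{h}_\alpha^\top\xhi^{(n)}\in\mathcal{C}^{\tD}_{\gamma[n]}$, the distance from $c[n]$ to either boundary midpoint in \eqref{eqn:band_count} is at least $\tfrac12(\theta^{\gamma[n]+1}_{\min}-\theta^{\gamma[n]}_{\max})$ on the right and at least $\tfrac12(\theta^{\gamma[n]}_{\min}-\theta^{\gamma[n]-1}_{\max})$ on the left (using $\theta^{\gamma[n]}_{\min}\le c[n]\le\theta^{\gamma[n]}_{\max}$), hence at least $\tfrac12\Delta^{c}_{\min}(\alpha,\tD)$ in both directions by the definition \eqref{eqn:cluster_d_min}. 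Therefore \eqref{eqn:band_count} holds as soon as $|e[n]|<\tfrac12\Delta^{c}_{\min}(\alpha,\tD)$. Since $e[n]=w[n]-\alpha^{\tD}w[n-1]$, we have $|e[n]|\le(1+\alpha^{\tD})\max_n|w[n]|<2\max_n|w[n]|$, so the hypothesis $|w[n]|<\Delta^{c}_{\min}(\alpha,\tD)/4$ for all $n$ forces $|e[n]|<\Delta^{c}_{\min}(\alpha,\tD)/2$ for all $n$, and thus $\hat\gamma[n]=\gamma[n]$ for every $n=1,\dots,M-1$; the $n=0$ case is immediate since $c_e[0]=x_{\text{hi}}[0]+w[0]$ and $A>\Delta^c_{\min}/2$. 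I expect the main obstacle to be the first paragraph: pinning down that the extremal patterns for $\theta^k_{\min},\theta^k_{\max}$ are the "all-ones-at-the-ends" configurations and that $k_0=\lfloor\tD/2\rfloor$ really is the worst index, together with the nonemptiness of $\mathcal{F}_{\tD}$ — the rest is a routine adaptation of the Theorem~\ref{thm:rob} argument.
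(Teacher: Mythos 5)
Your proposal is correct and follows essentially the same route as the paper's proof in Appendix E: identify the extremal patterns giving $\theta^{k}_{\max}=A(1+\alpha+\cdots+\alpha^{k-1})$ and $\theta^{k+1}_{\min}=A(\alpha^{\tD-k-1}+\cdots+\alpha^{\tD-1})$, reduce \eqref{eqn:cluster} to $\alpha^{\tD}-\alpha^{\tD-k-1}-\alpha^{k}+1<0$, show $k_0=\lfloor\tD/2\rfloor$ is the worst index (the paper does this via the discrete monotonicity $r_k(\alpha)-r_{k-1}(\alpha)=(1/\alpha-1)(\alpha^{k}-\alpha^{\tD-k})\geq0$ plus the symmetry $\mathcal{G}_k=\mathcal{G}_{\tD-k-1}$, which is the same content as your convexity/symmetry argument), and then deduce \eqref{eqn:band_count} from $\theta^{\gamma[n]}_{\min}\leq c[n]\leq\theta^{\gamma[n]}_{\max}$ together with $\vert e[n]\vert\leq(1+\alpha^{\tD})\max_n\vert w[n]\vert<\Delta^{c}_{\min}(\alpha,\tD)/2$. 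The only quibble is your parenthetical claim that clustering establishes $\mathcal{F}_{\tD}\subseteq\mathcal{G}_{\tD}$ — inter-cluster separation says nothing about collisions within a cluster — but this is not load-bearing, since correct count recovery only requires the nearest neighbor of $c_e[n]$ to land in the cluster $\mathcal{C}^{\tD}_{\gamma[n]}$, which your argument establishes.
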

\begin{proof}
Proof is in Appendix E.
\end{proof}
{It is clear that when \eqref{eqn:cluster} holds, $\Delta_{\min}^{\text{c}}(\alpha,\tD)$ is no smaller than $\Delta\theta_{\min}(\alpha,\tD)$, since the former is computed over neighboring elements of the cluster whereas $\Delta\theta_{\min}(\tD,\alpha)$ computes the minimum distance over all consecutive elements (both inter-cluster as well as intra-cluster) in $\Theta_{\alpha}^{\text{sort}}$.} This essentially suggests that estimation of counts (for this range of $\alpha$ and $\tD$) can be more robust compared to inferring the individual spiking patterns. {We also illustrate this numerically in Figure \ref{fig:my_label_d} (top), where we plot both $\Delta^{c}_{\min}$ and $\Delta \theta_{\min}$ as a function of $\alpha$ and the start of the interval $\mathcal{F}_{\tD}$ (computed numerically) is denoted using dotted lines. For both values of $\tD$, we can see that $\Delta^{c}_{\min}>\Delta \theta_{\min}$ and the gap grows as $\alpha$ increases.}}
\begin{figure}
\setlength\belowcaptionskip{-1.3\baselineskip}
    \centering
    \includegraphics[width=0.8\linewidth]{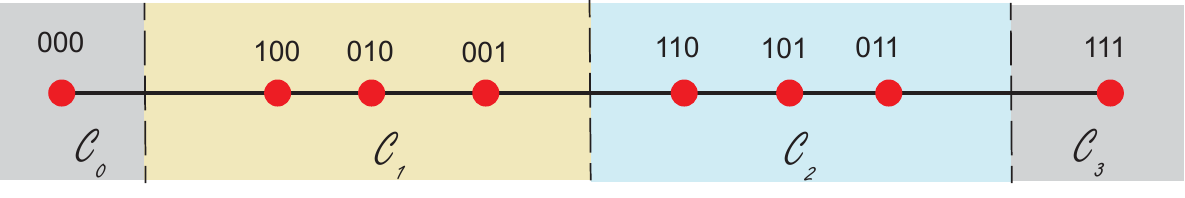}
    \caption{Visualization of the sets $\mathcal{C}_k^{\tD}$ for $\tD=3$. In this scenario, the spiking patterns corresponding to the same count are clustered together and hence, are favorable for robust count estimation.}
    \label{fig:my_label}
\end{figure}
\vspace{-0.3cm}
\section{Numerical Experiments}
We conduct numerical experiments to evaluate the performance of the proposed super-resolution spike decoding algorithm on both synthetic and real calcium imaging datasets. 
\vspace{-0.5cm}
\begin{figure}[h]
\setlength\belowcaptionskip{-1.4\baselineskip}
    \centering
     \includegraphics[width=0.76\linewidth]{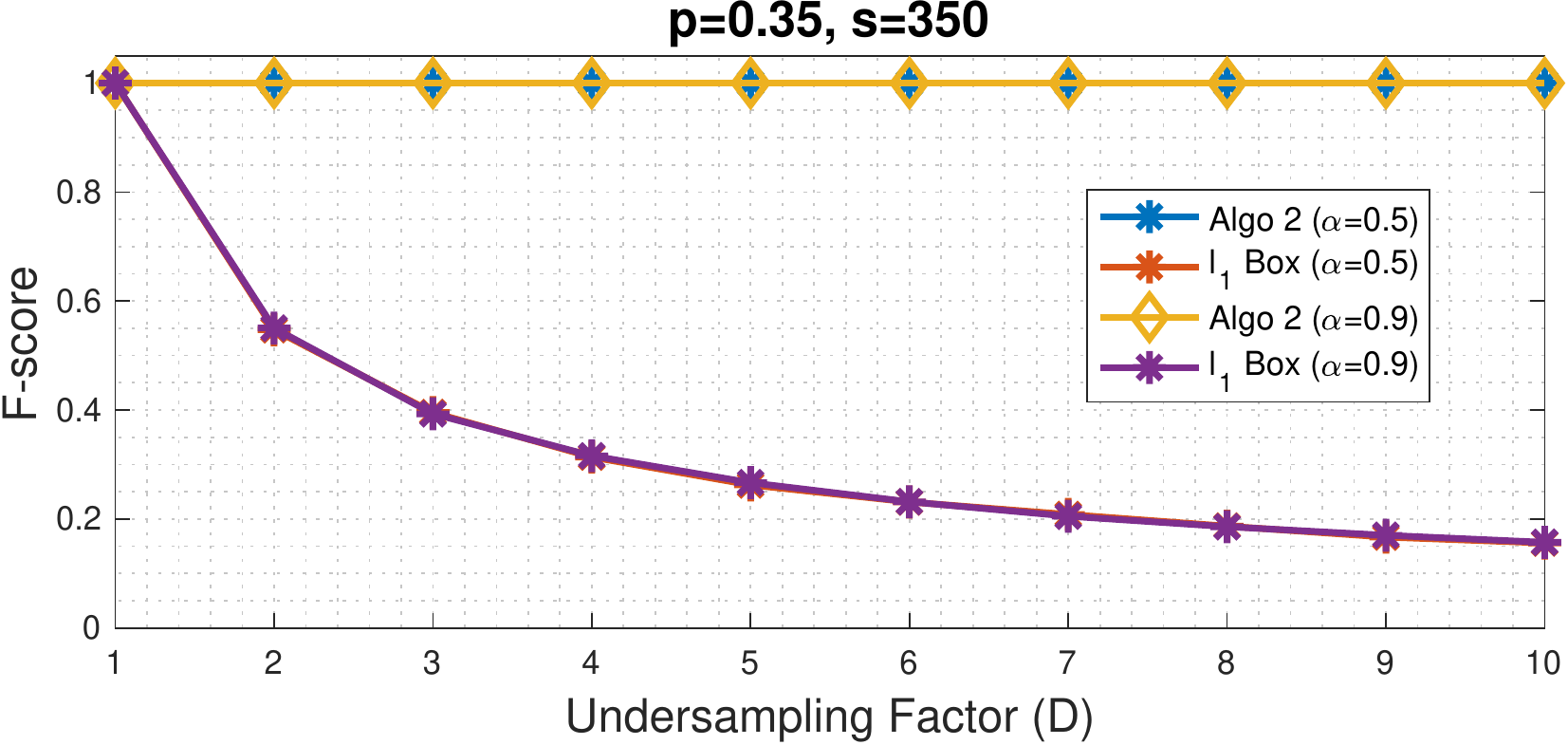}\\
     \vspace{0.1cm}
     \includegraphics[width=0.76\linewidth]{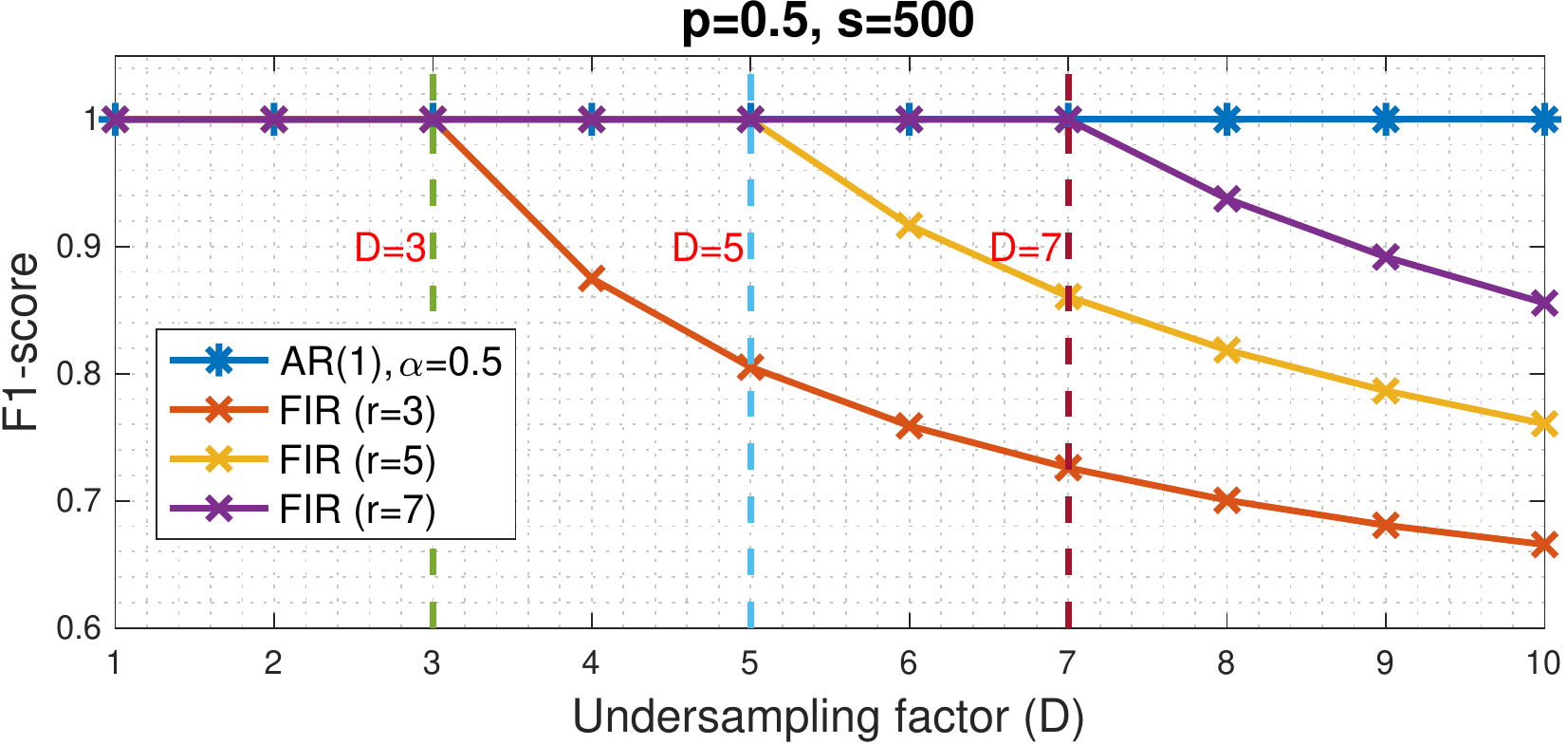}
     \vspace{0.2cm}
    \caption{(Top) Quantitative comparison of Algorithm $2$ against box-constrained $l_1$ minimization method with noiseless measurements (with tolerance $t_0=0$). (Bottom) (Role of Filter Memory): Average F-score vs. $\tD$ for FIR and IIR (AR(1)) filters. Each dotted line indicates the corresponding theoretical transition point ($\tD=r$).}
    \label{fig:noiseless_m}
\end{figure}
\begin{figure*}[h]
\setlength\belowcaptionskip{-1.3\baselineskip}
\centering
\resizebox{!}{9.0cm}{
\begin{tikzpicture}
\begin{scope}[spy using outlines={rectangle,magnification=2, width=5cm, height=9cm,,connect spies}]
    \node (image1) at (0,0){\includegraphics[width=0.5\linewidth]{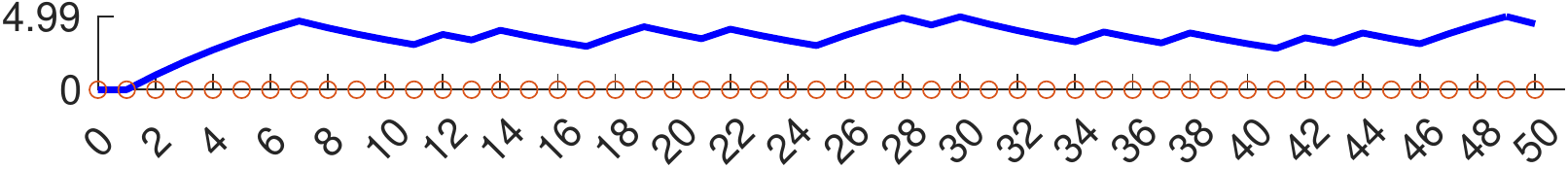}};
    \coordinate (a) at (-4.4,0.2);
    \node[label={left:{\scriptsize$y{\hi}[n]$}}] at (a)  {};
    \coordinate (aa) at (0.5,1);
    \node[label={left:{$\tD=5$}}] at (aa)  {};
    \coordinate (aab) at (5,1.7);
    \node[label={south:{\large (Top)}}] at (aab)  {};
    \coordinate (aab2) at (5,-4.7);
    \node[label={south:{\large (Bottom)}}] at (aab2)  {};
    \node (image2) at (0,-1.1){\includegraphics[width=0.5\linewidth]{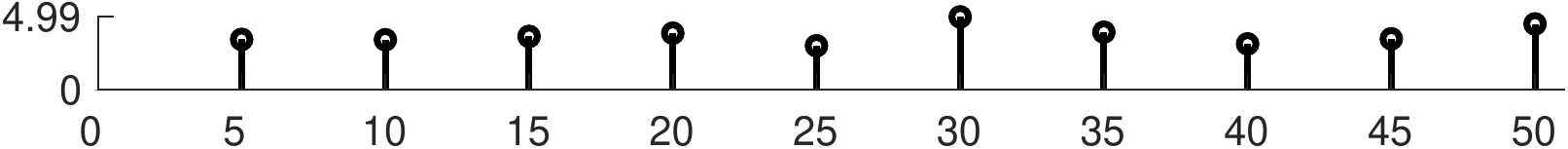}};
    \coordinate (b) at (-4.4,-1.0);
    \node[label={left:{\scriptsize$y{\lo}[n]$}}] at (b)  {};
    \node (image3) at (0,-2.2){\includegraphics[width=0.5\linewidth]{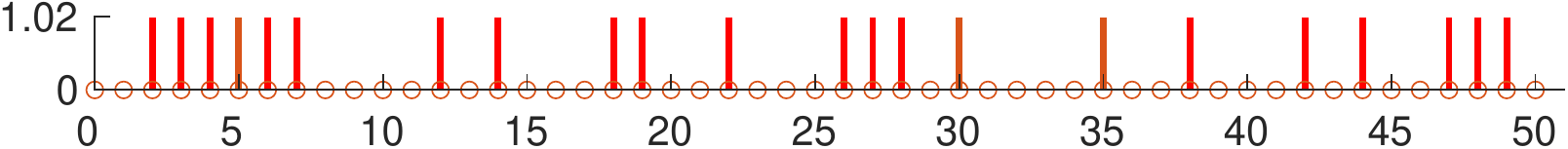}};
     \coordinate (c) at (-4.4,-2.1);
    \node[label={left:{\scriptsize$x{\hi}[n]$}}] at (c)  {};
    \node (image3) at (0,-3.3){\includegraphics[width=0.5\linewidth]{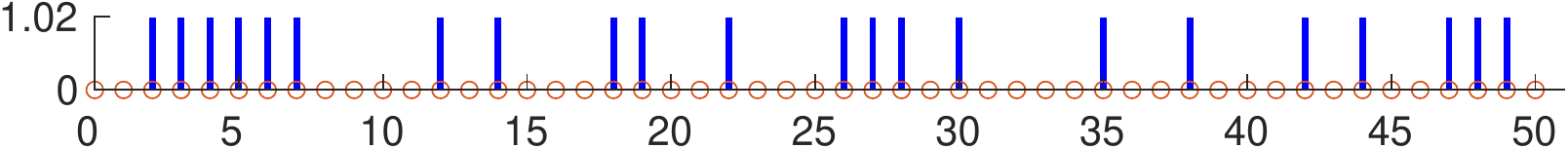}};
     \coordinate (d) at (-4.4,-3.1);
    \node[label={left:{\scriptsize$\hat{x}{\hi}[n]$}}] at (d)  {};
    \node (image) at (0,-4.4){\includegraphics[width=0.5\linewidth]{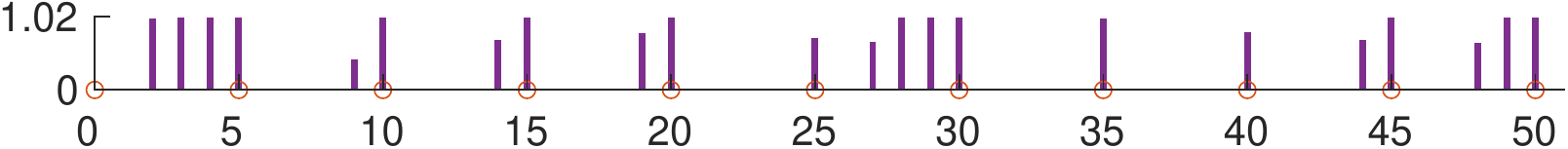}};
    \coordinate (e) at (-4.4,-4.3);
    \node[label={left:{\scriptsize$\hat{x}_{\text{l}_1}[n]$}}] at (e)  {};
    \coordinate (D1) at (-2.3,0.8);
    \coordinate (D2) at (-2.3,-5);
    \draw [red,dashed] (D1) -- (D2);
    
    \coordinate (D1h) at (-3.14,0.8);
    \coordinate (D2h) at (-3.14,-5);
    \draw [red,dashed] (D1h) -- (D2h);
    
    \coordinate (D3p) at (-1.475,0.8);
    \coordinate (D4p) at (-1.475,-5);
    \draw [red,dashed] (D3p) -- (D4p);
    
    \coordinate (D3) at (-0.63,0.8);
    \coordinate (D4) at (-0.63,-5);
    \draw [red,dashed] (D3) -- (D4);

    \coordinate (D3h) at (  0.19,0.8);
    \coordinate (D4h) at (  0.19,-5);
    
    \draw [red,dashed] (D3h) -- (D4h);
    
    \coordinate (D5) at (1.02,0.8);
    \coordinate (D6) at (1.02,-5);
    \draw [red,dashed] (D5) -- (D6);

    \coordinate (D5p) at (1.85,0.8);
    \coordinate (D6p) at (1.85,-5);
    \draw [red,dashed] (D5p) -- (D6p);
    
    \coordinate (D7) at (2.68,0.8);
    \coordinate (D8) at (2.68,-5);
    \draw [red,dashed] (D7) -- (D8);

    \coordinate (D7p) at (3.52,0.8);
    \coordinate (D8p) at (3.52,-5);
    \draw [red,dashed] (D7p) -- (D8p);
    \node (image11) at (10.5,0){\includegraphics[width=0.5\linewidth]{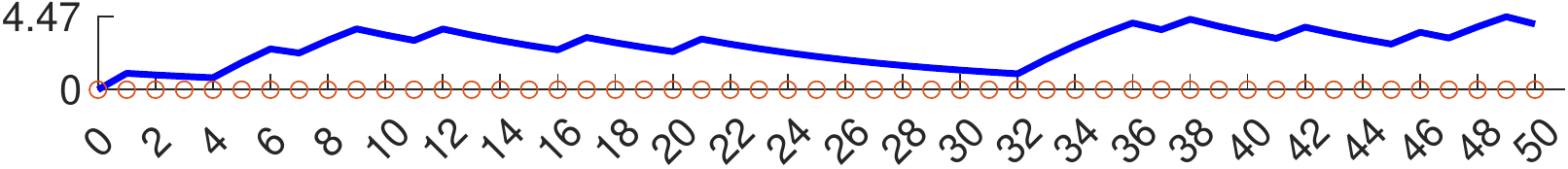}};
    \coordinate (ba) at (6.2,0.2);
    \node[label={left:{\scriptsize$y{\hi}[n]$}}] at (ba)  {};
    \coordinate (baa) at (9.9,1);
    \node[label={left:{$\tD=10$}}] at (baa)  {};
    \node (image21) at (10.5,-1.1){\includegraphics[width=0.5\linewidth]{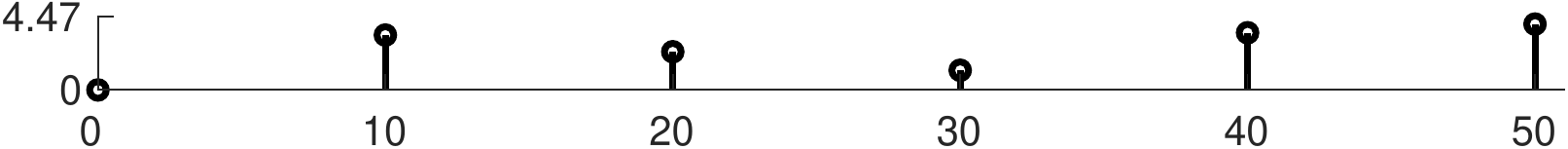}};
    \coordinate (bb) at (6.2,-1.0);
    \node[label={left:{\scriptsize$y{\lo}[n]$}}] at (bb)  {};
    \node (image31) at (10.5,-2.2){\includegraphics[width=0.5\linewidth]{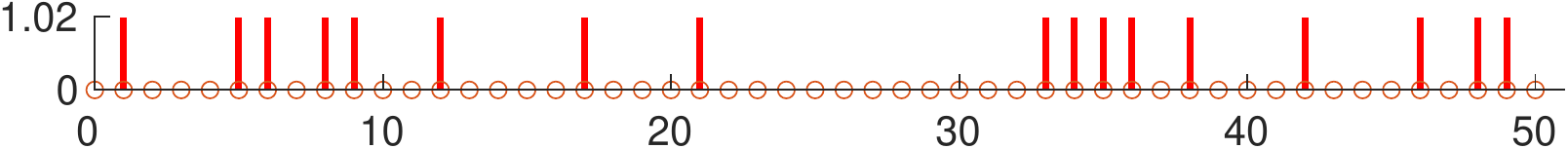}};
     \coordinate (bc) at (6.2,-2.1);
    \node[label={left:{\scriptsize$x{\hi}[n]$}}] at (bc)  {};
    \node (image31) at (10.5,-3.3){\includegraphics[width=0.5\linewidth]{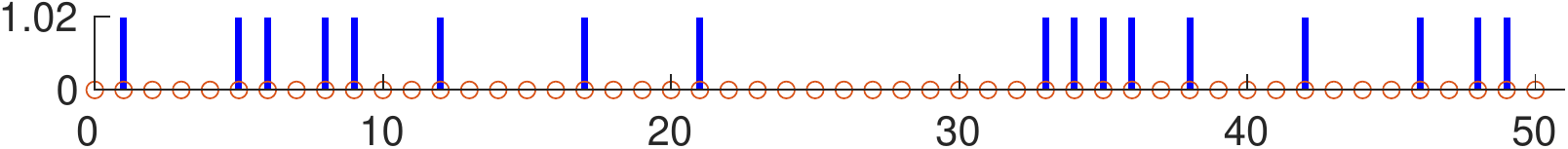}};
     \coordinate (bd) at (6.2,-3.1);
    \node[label={left:{\scriptsize$\hat{x}{\hi}[n]$}}] at (bd)  {};
    \node (imageb) at (10.5,-4.4){\includegraphics[width=0.5\linewidth]{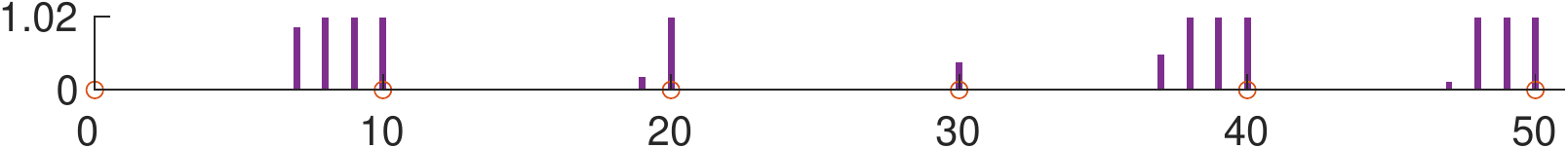}};
    \coordinate (be) at (6.2,-4.3);
    \node[label={left:{\scriptsize$\hat{x}_{\text{l}_1}[n]$}}] at (be)  {};
    \coordinate (D11) at (8.21,0.8);
    \coordinate (D21) at (8.21,-5);
    \draw [red,dashed] (D11) -- (D21);
    \coordinate (D31) at (9.85,0.8);
    \coordinate (D41) at (9.85,-5);
    
    \draw [red,dashed] (D31) -- (D41);
    
    \coordinate (D51) at (11.54,0.8);
    \coordinate (D61) at (11.54,-5);
    
    \draw [red,dashed] (D51) -- (D61);
    
    \coordinate (D71) at (13.18,0.8);
    \coordinate (D81) at (13.18,-5);
    \draw [red,dashed] (D71) -- (D81);
\end{scope}
\begin{scope}[spy using outlines={rectangle,magnification=2, width=5cm, height=9cm,,connect spies}]
     \node (image2) at (0,-6.0){\includegraphics[width=0.5\linewidth]{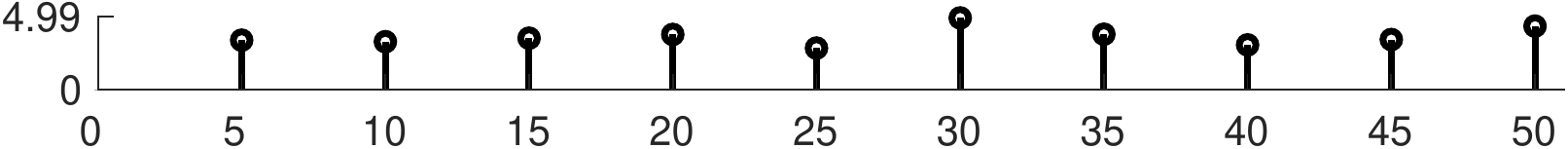}};
    \coordinate (b) at (-4.4,-5.8);
    \node[label={left:{\scriptsize$y{\lo}[n]$}}] at (b)  {};
     \node (image3) at (0,-6.9){\includegraphics[width=0.5\linewidth]{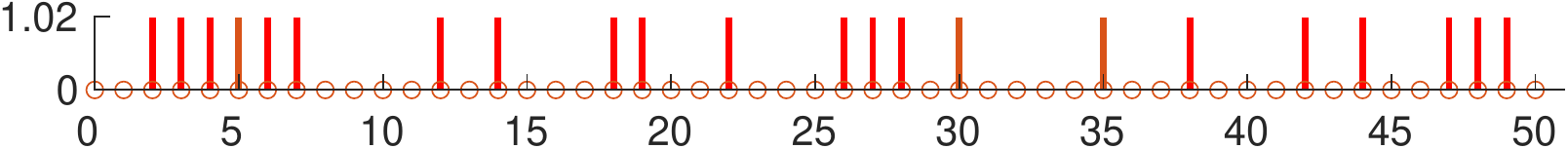}};
     \coordinate (c) at (-4.4,-6.8);
    \node[label={left:{\scriptsize$x{\hi}[n]$}}] at (c)  {};
    \node (image3) at (0,-8.0){\includegraphics[width=0.5\linewidth]{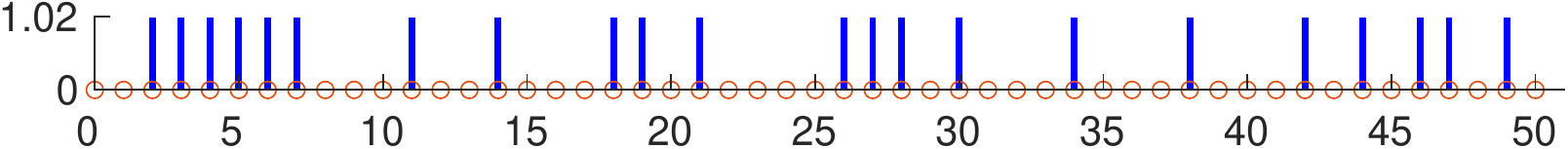}};
     \coordinate (d) at (-4.4,-7.9);
    \node[label={left:{\scriptsize$\hat{x}{\hi}[n]$}}] at (d)  {};
    \node (image) at (0,-9.2){\includegraphics[width=0.5\linewidth]{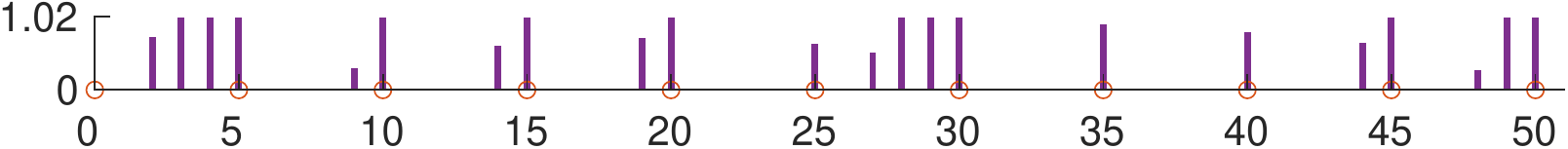}};
    \coordinate (e) at (-4.4,-9.1);
    \node[label={left:{\scriptsize$\hat{x}_{\text{l}_1}[n]$}}] at (e)  {};
  \coordinate (D1) at (-2.31,-5.4);
    \coordinate (D2) at (-2.31,-9.8);
    \draw [red,dashed] (D1) -- (D2);
    
    \coordinate (D1h) at (-3.15,-5.4);
    \coordinate (D2h) at (-3.15,-9.8);
    \draw [red,dashed] (D1h) -- (D2h);
    
    \coordinate (D3p) at (-1.475,-5.4);
    \coordinate (D4p) at (-1.475,-9.8);
    \draw [red,dashed] (D3p) -- (D4p);
    
    \coordinate (D3) at (-0.64,-5.4);
    \coordinate (D4) at (-0.64,-9.8);
    \draw [red,dashed] (D3) -- (D4);

    \coordinate (D3h) at (  0.16,-5.4);
    \coordinate (D4h) at (  0.16,-9.8);

    \draw [red,dashed] (D3h) -- (D4h);
    
    \coordinate (D5) at (0.99,-5.4);
    \coordinate (D6) at (0.99,-9.8);
     
    \draw [red,dashed] (D5) -- (D6);

    \coordinate (D5p) at (1.85,-5.4);
    \coordinate (D6p) at (1.85,-9.8);
    \draw [red,dashed] (D5p) -- (D6p);
    
    \coordinate (D7) at (2.65,-5.4);
    \coordinate (D8) at (2.65,-9.8);
    \draw [red,dashed] (D7) -- (D8);

    \coordinate (D7p) at (3.52,-5.4);
    \coordinate (D8p) at (3.52,-9.8);
    \draw [red,dashed] (D7p) -- (D8p);
    \node (image21) at (10.5,-6.0){\includegraphics[width=0.5\linewidth]{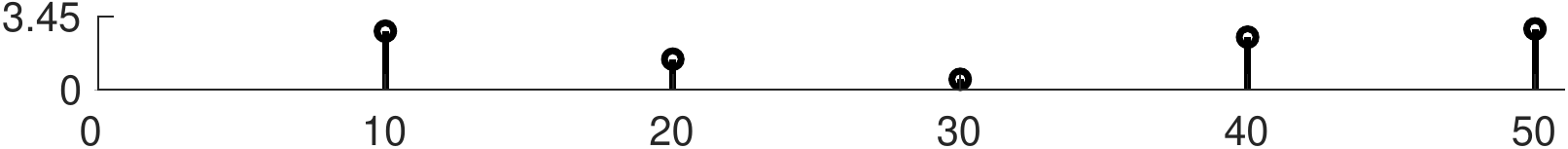}};
    \coordinate (bb) at (6.2,-5.8);
    \node[label={left:{\scriptsize$y{\lo}[n]$}}] at (bb)  {};
    \node (image31) at (10.5,-6.9){\includegraphics[width=0.5\linewidth]{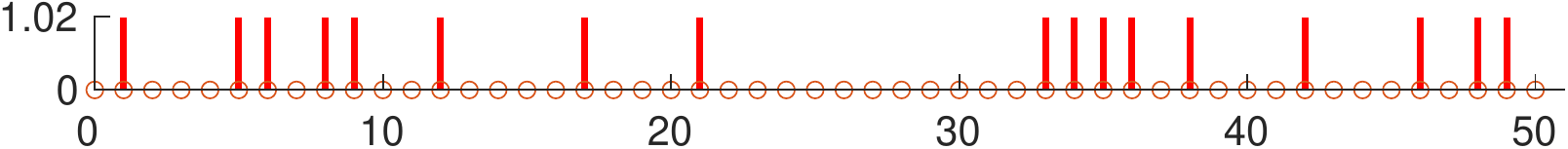}};
     \coordinate (bc) at (6.2,-6.9);
    \node[label={left:{\scriptsize$x{\hi}[n]$}}] at (bc)  {};
    \node (image31) at (10.5,-8.0){\includegraphics[width=0.5\linewidth]{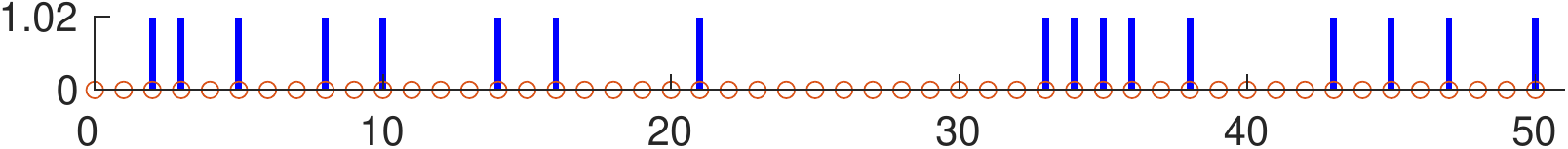}};
     \coordinate (bd) at (6.2,-7.9);
    \node[label={left:{\scriptsize$\hat{x}{\hi}[n]$}}] at (bd)  {};
    \node (imageb) at (10.5,-9.1){\includegraphics[width=0.5\linewidth]{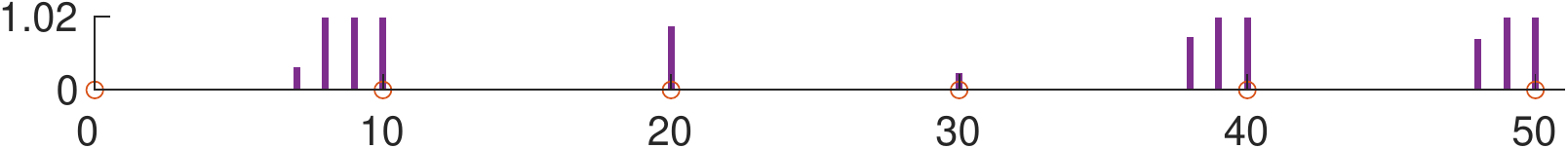}};
    \coordinate (be) at (6.2,-8.9);
    \node[label={left:{\scriptsize$\hat{x}_{\text{l}_1}[n]$}}] at (be)  {};
        \coordinate (D11) at (8.17,-5.4);
    \coordinate (D21) at (8.17,-9.8);
    \draw [red,dashed] (D11) -- (D21);
    \coordinate (D31) at (9.85,-5.4);
    \coordinate (D41) at (9.85,-9.8);
    
    \draw [red,dashed] (D31) -- (D41);
    
    \coordinate (D51) at (11.52,-5.4);
    \coordinate (D61) at (11.52,-9.8);
    
    \draw [red,dashed] (D51) -- (D61);
    
    \coordinate (D71) at (13.15,-5.4);
    \coordinate (D81) at (13.15,-9.8);
    \draw [red,dashed] (D71) -- (D81);
    \coordinate (aab3) at (-3.5,-9.7);
    \node[label={south:{$x{\hi}[n]$: Ground Truth Spikes,}}] at (aab3)  {};
    \coordinate (aab33) at (3.5,-9.7);
    \node[label={south:{$\hat{x}{\hi}[n]$: Output of Algorithm $2$, {$\hat{x}_{\text{l}_1}[n]$}: Output of $l_1$ minimization,}}] at (aab33)  {};
    \coordinate (aab4) at (12.4,-9.7);
    \node[label={south:{$y{\hi}[n]$: High rate waveform,\hspace{0.1cm} $y{\lo}[n]$: Low rate samples }}] at (aab4)  {};
\end{scope}
\end{tikzpicture}
}
\vspace{0.05cm}
\vspace{-0.05cm}
\caption{Qualitative comparison of Algorithm $2$ and box-constrained $l_1$ minimization on simulated data. For each simulation noisy measurements are generated with $\alpha=0.9$ such that the noise realization (Top) obeys the bound $\vert w[n]\vert\leq \Delta\theta_{\min}$ (from Theorem \ref{thm:rob}) and (Bottom) violates the bound. For larger noise (Bottom), the spike recovery is imperfect but the spike count can still be exactly recovered using Algorithm $2$.}
\label{fig:my_simulated}
\end{figure*}
\subsection{Synthetic Data Generation and Evaluation Metrics}
We create a synthetic dataset by generating high-rate binary spike sequence $\xhi \in \{0,1\}^{L}$ ($A=1$ and $L=1000$) that satisfies assumption (\textbf{A1}). The spiking probability $p$ controls the average sparsity level given by $s:=\mathbb{E}[\Vert \xhi \Vert_0]=Lp$. We aim to reconstruct $\xhi$ from $M\approx L/\tD$ low-rate measurements $z{\lo}[n]$ defined in \eqref{eqn:noisy_measurement}. Notice that we operate in a regime where 
the expected sparsity is greater than the total number of low-rate measurements, i.e., $s>M$. We employ the widely-used \emph{F-score} metric to evaluate the accuracy of spike detection \cite{schiebinger2017superresolution,deneux2016accurate}. The F-score is computed by first matching the estimated and ground truth spikes. An estimated spike is considered a ``match" to a ground truth spike if it is within a distance of $t_0$ of the ground truth (many-to-one matching is not allowed) \cite{schiebinger2017superresolution,deneux2016accurate}. Let $K$ and $K'$ be the total number of ground truth and estimated spikes, respectively. The number of spikes declared as true positives is denoted by $T_p$. After the matching procedure, we compute the recall $(R=\frac{T_p}{K})$ which is defined as the ratio of true positives ($T_p$) and the total number of ground truth spikes ($K$). Precision ($P=\frac{T_p}{K^{\prime}}$) measures the fraction of the total detected spikes which were correct. Finally, the F-score is given by the harmonic mean of recall and precision $\text{F-score} = 2PR/(P+R)$.

\vspace{-0.4cm}
\subsection{Noiseless Recovery: Role of Binary priors and memory}
We first consider the noiseless setting ($w[n]=0$ in \eqref{eqn:noisy_measurement}). We compare the performance of Algorithm $2$ against \bl{box-constrained $l_1$ minimization method \cite{stojnic2010recovery,keiper2017compressed}}, where we solve:
\begin{equation*}
    \smash{\min_{\mathbf{x}\in \mathbb{R}^L}\ \Vert \mathbf{x} \Vert_1 \text{ s.t. } \Vert\mathbf{y}{\lo}-\mathbf{S}_{\tD}\mf{G}_{\alpha} \mathbf{x}\Vert_2\leq \epsilon, \mf{0}\leq \mathbf{x} \leq A\mathbf{1}} \tag{P1}\label{eqn:P1_noisy}
\end{equation*}
\bl{For synthetic data, $\epsilon$ is chosen using the norm of the noise term $\Vert \mf{w}\Vert_2$. This \emph{oracle} choice ensures most favorable parameter tuning for the \eqref{eqn:P1_noisy}, although a more realistic choice would be to set $\epsilon=\sqrt{M}\sigma$ according to the noise power ($\sigma$).} In the noiseless setting, we choose $\epsilon=0$. \bl{The problem \eqref{eqn:P1_noisy} is a standard convex relaxation of \eqref{eqn:P0} which promotes sparsity as well as tries to impose the binary constraint via the box-relaxation (introduced in Section II-C).} In Fig. \ref{fig:noiseless_m} (Top), we plot the F-score ($t_0=0$) as a function of $\tD$. As can be observed, Algorithm $2$ consistently achieves an F-score of $1$, whereas the F-score of $l_1$ minimization shows a decay as $\tD$ increases.  \bl{This confirms Lemma \ref{lem:sp_l1_box} that for $\tD>1$, using box-constraints with $l_1$ norm minimization  is not enough to enable exact recovery from low rate measurements.} \bl{In absence of noise, the performance of Algorithm $2$ is not affected by the filter parameter $\alpha$ as shown in Fig. \ref{fig:noiseless_m} (Top).} 

Next, we compare the reconstruction from the decimated output of (i) an AR(1) filter and (ii) an FIR filter of length $r$ driven by the same input $\xhi\in \{0,1\}^{1000}$. We choose the FIR filter $\mathbf{h}=[1,\alpha,\cdots,\alpha^{r-1}]^{\top}$ (truncation of the IIR filter) with $\alpha=0.5$.  Algorithm $2$ is applied to the low-rate AR(1) measurements, whereas the algorithm proposed in \cite{sarangi2021no} is used for the FIR case. \bl{The algorithm applied for the FIR case can provably operate with the optimal number of measurements when $\alpha=0.5$ and hence, we chose this specific value for the filter parameter.}
In Figure \ref{fig:noiseless_m} (Bottom), \bl{we again compare the average F-score as a function of $\tD$, averaged over $10000$ Monte Carlo runs, for $p=0.5$}. As predicted by Lemma \ref{lem:fir}, despite utilizing binary priors, the error for the FIR filter shows a phase transition when $\tD>r$. This demonstrates the critical role played by the infinite memory of the AR(1) filter in achieving exact recovery with arbitrary $\tD$.  
\vspace{-0.3cm}
\subsection{Performance of noisy spike decoding}
We generate noisy measurements of the form \eqref{eqn:noisy_measurement}, where $w[n]$ and $x{\hi}[n]$ satisfy assumptions \textbf{(A1-A2)}. 
We illustrate some representative examples of recovered spikes on synthetic data. In Fig. \eqref{fig:my_simulated}, we display the recovered super-resolution estimates on synthetically generated measurements for two undersampling factors $\tD=5 \ (\text{left}),10\ (\text{right})$. For each $\tD$, the top plots show the spikes recovered using Algorithm $2$ and \bl{$l_1$ minimization with box-constraint where the noise realization obeys the bound in Theorem \ref{thm:rob}, while the bottom plots show the same for noise realization violating the bound. The output of $l_1$ minimization with box-constraint is inaccurate, and the spikes are clustered towards the end of each block of length $\tD$. This bias is consistent with the prediction made by our theoretical results in Lemma \ref{lem:sp_l1_box}. When the noise is small enough (top), Algorithm $2$ exactly decodes the spikes, including the ones occurring between two consecutive low-rate samples as predicted by Theorem \ref{thm:rob}. In presence of larger noise (violating the bound), the spikes estimated using $l_1$ minimization continue to be biased to be clustered towards the end of the block. Although the spikes recovered using Algorithm $2$ are not exact, {most of the detected spikes are within a tolerance window of ground truth spikes}. In fact, the spike count estimation is perfect as predicted by Theorem \ref{thm:spk_count}.}
\begin{figure}[h]
    \centering
    \includegraphics[width=0.75\linewidth]{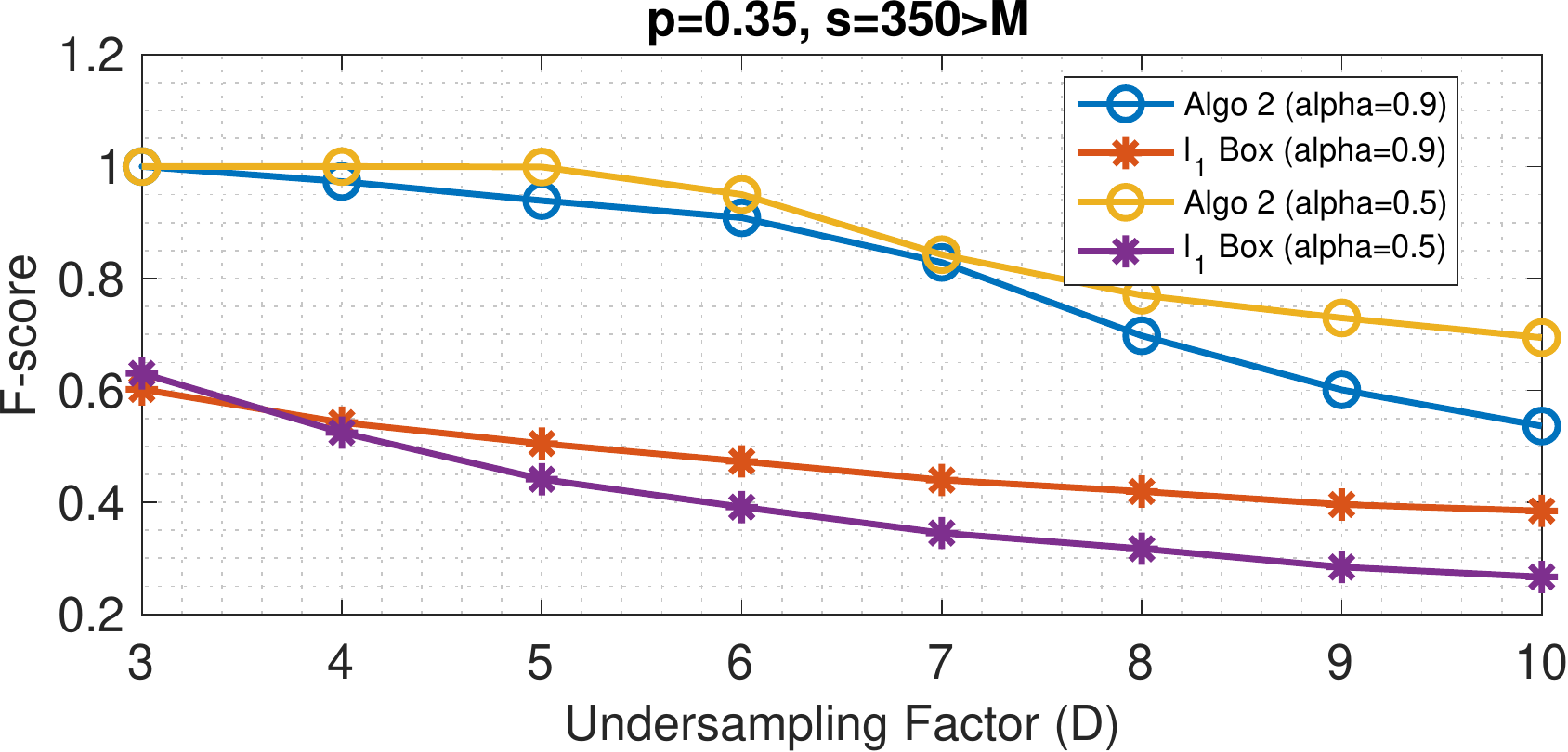}\\ 
    \caption{Spike detection performance with noisy measurements. (Top) F-score vs. $\tD$ for different  filter parameters $\alpha$ ($\sigma=0.01$). Here, $L=1000$ and expected sparsity $s=350$ where we operate in the regime $s>M$. The F-score is computed with a tolerance of $t_0=2$.}
    \label{fig:my_label_m_vary}
\end{figure}
We next quantitatively evaluate the performance in presence of noise, where the metrics are computed with $t_0=2$. \bl{In Fig. \ref{fig:my_label_m_vary} (Top), we plot the F-score as a function of $\tD$ for different values of $\alpha$.} For a fixed $\alpha$, the F-score of both methods decays with increasing $\tD$, but Algorithm $2$ consistently attains a higher F-score compared to $l_1$ minimization. \bl{We observe that $\alpha=0.5$ leads to a higher F-score potentially due to having a larger $\Delta \theta_{\min}(\alpha,\tD)$ compared to $\alpha=0.9$.} 
Next, in Fig. \ref{fig:my_label_prob}, we study the behavior of spike detection as a function of the spiking probability $p$, while keeping $\tD$ fixed at $\tD=5$. When $\sigma$ is fixed, the performance trend is not significantly affected by the spiking probability. At first, this may seem surprising as the expected sparsity is growing while the number of measurements is unchanged. However, since our algorithm exploits the binary nature of the spikes (and not just sparsity), it can handle larger sparsity levels. The spikes reconstructed using $l_1$ minimization achieve a much lower F-score than Algorithm $2$ since the former fails to succeed when the sparsity is large. As expected, smaller $\sigma$ leads to higher F-scores.

In Fig. \ref{fig:my_label_bound}, we study the probability of erroneous spike detection as a function of $\tD$ and validate the upper bound derived in Theorem \ref{thm:full_prob}. Recall that the decoding is considered successful if ``every" spike is detected correctly. Therefore, it becomes more challenging to ``exactly super-resolve" all the spikes in presence of noise as the desired resolution becomes finer. We calculate the empirical probability of error and overlay the corresponding theoretical bound. As shown in Fig. \ref{fig:my_label_bound}, the empirical probability of error is indeed upper bounded by the bound computed by our analysis. The empirical probability of error increases as a function of undersampling factor $\tD$. 

\begin{figure}[h]
\setlength\belowcaptionskip{-1.0\baselineskip}
    \centering
    \includegraphics[width=0.75\linewidth]{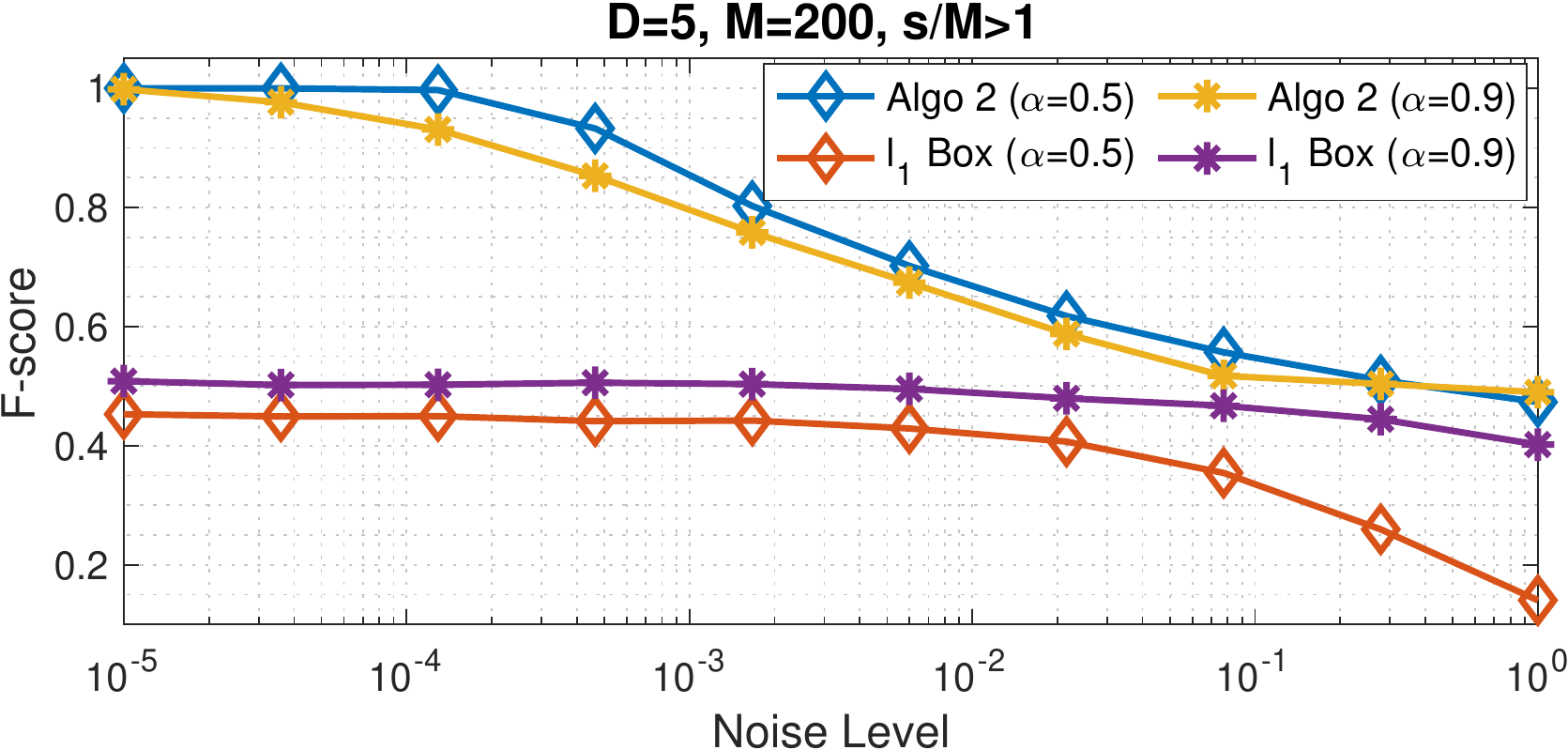}\\ 
    \vspace{0.1cm}
    \includegraphics[width=0.75\linewidth]{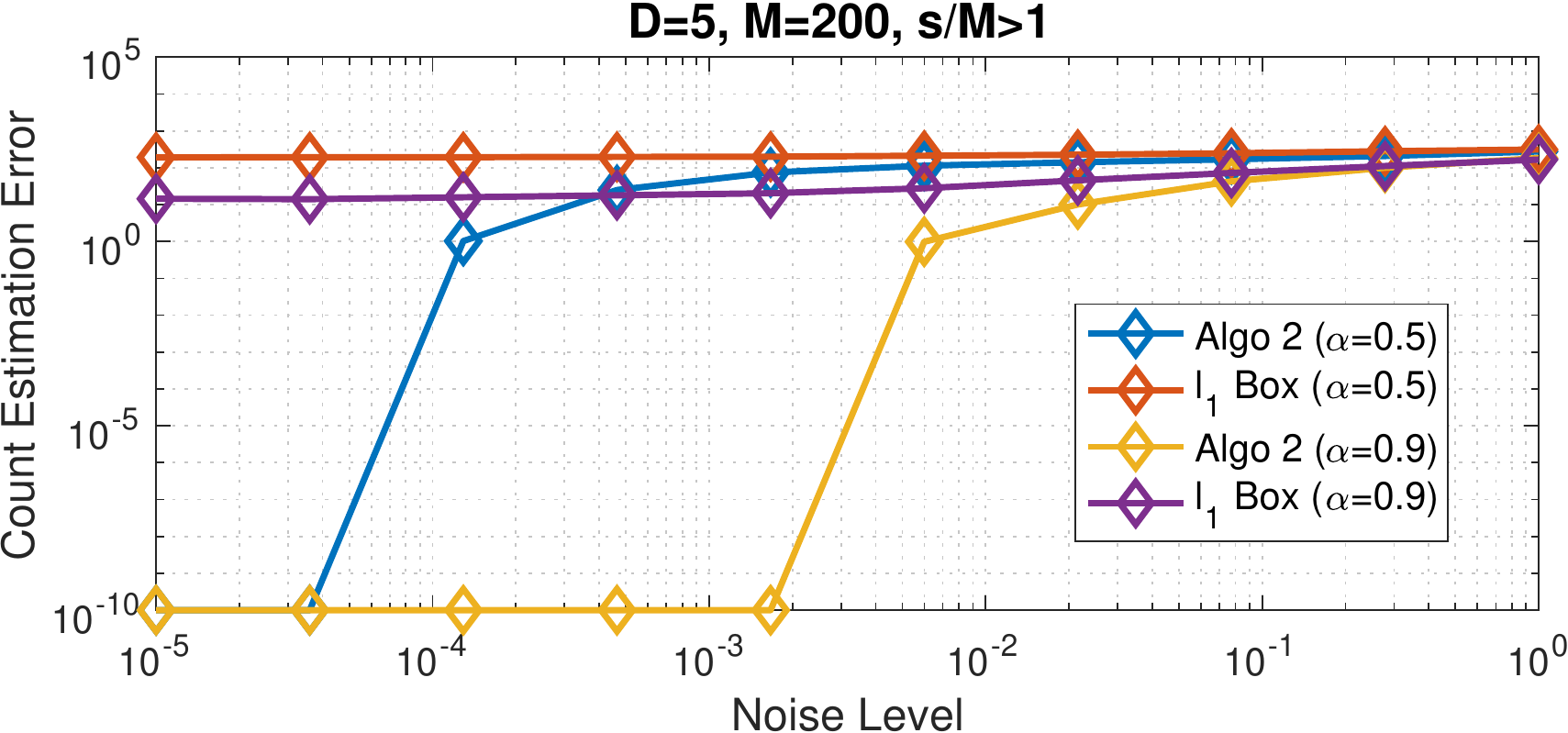}\\
    \caption{Spike detection performance with noisy measurements for different filter parameters $\alpha$. (Top) F-score vs. noise level ($\sigma$)  (Bottom) Count estimation error vs. noise level. Here, $L=1000$ and expected sparsity is fixed at $s=350$ where we operate in the regime $s>M$. The F-score is computed with a tolerance of $t_0=2$.}
    \label{fig:my_label_vary_noise_new}
\end{figure}

\bl{Finally, we evaluate the noise tolerance of the proposed methodology by comparing the average F-score as a function of the noise level $\sigma$, {while} keeping the spiking rate and undersampling factor fixed at $p=0.35$ and $\tD=5$, respectively. As seen in Fig. \ref{fig:my_label_vary_noise_new} (Top), the performance of both algorithms degrades with increasing noise level and this is also consistent with the intuition that it becomes harder to super-resolve spikes with more noise. However, for both filter parameters considered in this experiment Algorithm $2$ has a higher F-score compared to box-constrained $l_1$ minimization. {For large noise levels} (comparable to spike amplitude $A=1$), the performance gap decreases for $\alpha=0.9$ but Algorithm $2$ achieves a much higher F-score for $\alpha=0.5$ at all noise levels.} 

\bl{As discussed {in Section IV-B, we next study a relaxed notion of spike recovery which focuses on the spike counts occurring between two consecutive low-rate samples}. Let $\blds{\Gamma}=[\gamma[0],\cdots,\gamma[M-1]]^{\top}$ be the vector of counts and $\blds{\widehat{\Gamma}}$ be its estimate. In Fig. \ref{fig:my_label_vary_noise_new} (Bottom) we plot the average $l_1$ distance $\Vert\blds{\Gamma}-\blds{\widehat{\Gamma}}\Vert_1$ as a function of the noise level. {We observe that for $\alpha=0.9$ (it can be verified from Fig. \ref{fig:my_label_d} (Top) that $0.9\in \mathcal{F}_5$)}, it is possible to exactly recover the spike counts at higher noise even though the F-score (for timing recovery) has dropped below $1$. However, this is not the case for $\alpha=0.5$, since $0.5\not\in\mathcal{F}_{5}$. This is consistent with the conclusion of Theorem $4$ which states that {when $\alpha\in \mathcal{F}_{\tD}$, the noise tolerance for exact count recovery can be much larger than exact spike recovery since $\Delta^{c}_{\min}(\alpha,\tD)>\Delta \theta_{\min}(\alpha,\tD)$.}}
\vspace{-0.3cm}
\begin{figure}[h]
    \centering
    \includegraphics[width=0.76\linewidth]{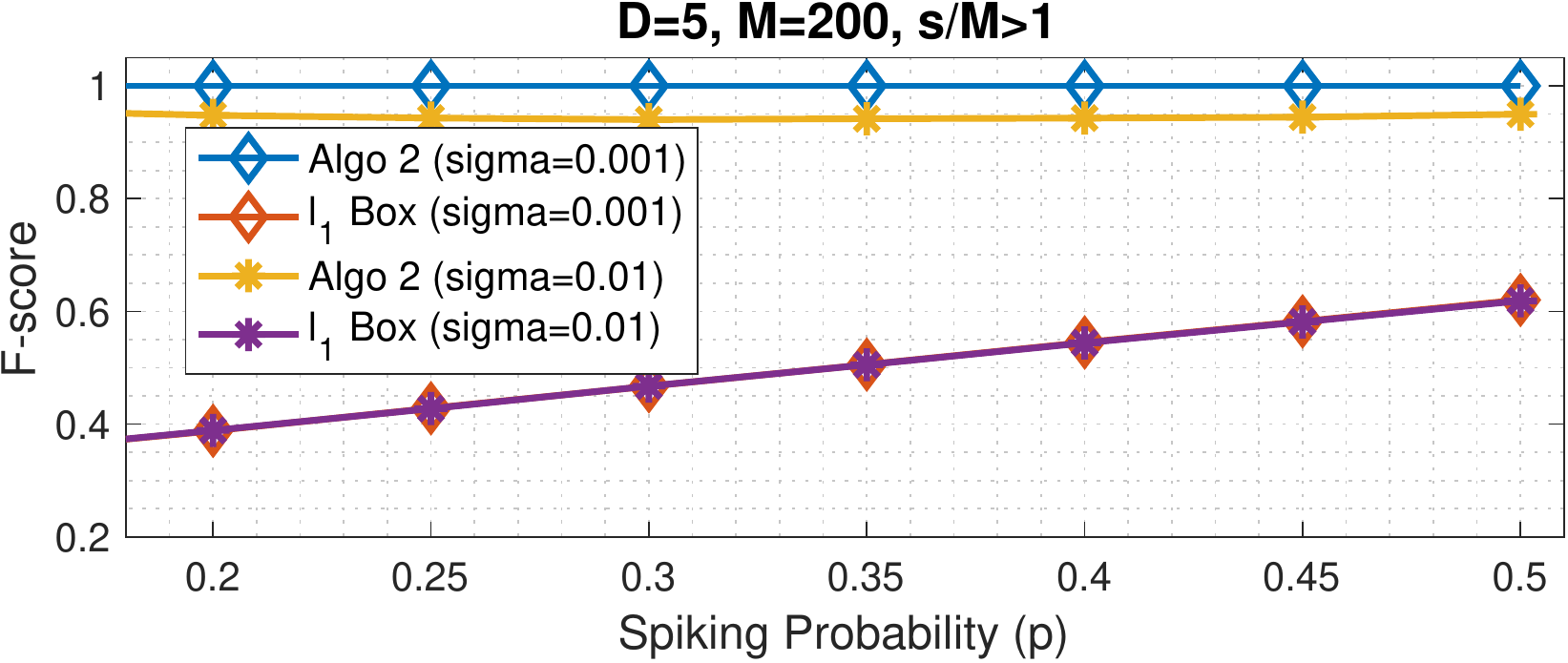}\\
    \caption{Spike detection performance with noisy measurements. F-score vs. spiking probability ($p$) for different noise levels $\sigma$ (fix $\alpha=0.9$, $\tD=5,L=1000$) in the extreme compression regime $s>M$.}
    \label{fig:my_label_prob}
\end{figure}
\vspace{-0.6cm}
\begin{figure}[h]
\setlength\belowcaptionskip{-1.4\baselineskip}
    \centering
    \begin{tabular}{c}
         \includegraphics[width=0.75\linewidth]{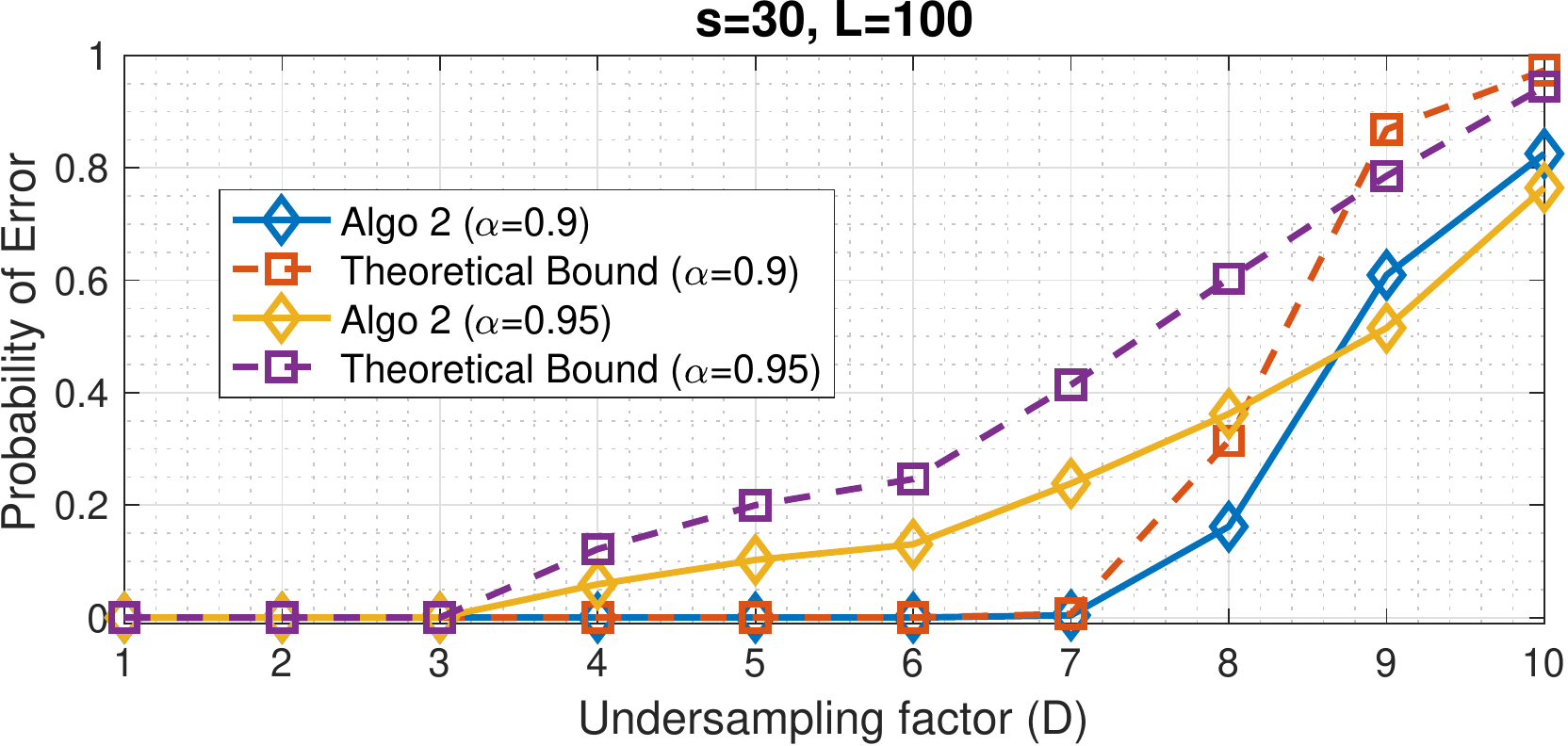}
    \end{tabular}
    \caption{Probability of erroneous detection of high-rate spikes $\xhi \in \{0,1\}^{L}$ as a function of the  undersampling factor $\tD$. Theoretical upper bounds are overlaid using dotted lines. Here, $L=100$.}
    \label{fig:my_label_bound}
\end{figure}
\vspace{-0.15cm}
\subsection{Spike Deconvolution from Real Calcium Imaging Datasets}
\bl{{We now discuss how the mathematical framework developed in this paper can be used for super-resolution spike deconvolution in calcium imaging. Two-photon calcium imaging is a widely used imaging technique for large scale recording of neural activity with high spatial but poor temporal resolution.} In calcium imaging, the signal $\xhi$ corresponds to the underlying neural spikes which is modeled to be binary valued on a finer temporal scale \cite{brette2012handbook,rupasinghe2020robust}. Each neural spike results in a sharp rise in Ca$^{2+}$ concentration followed by a slow exponential decay, leading to superposition of the responses from nearby spiking events \cite{brette2012handbook,vogelstein2009spike,deneux2016accurate}. This calcium transient can be modeled by the first order autoregressive model introduced in Section II. The decay time constant depends on the calcium indicator and essentially determines the filter parameter $\alpha$. The signal $y{\hi}[n]$ is an unobserved signal corresponding to sampling the calcium fluorescence at a high sampling rate (at the same rate as the underlying spikes). The {observed} calcium signal $y{\lo}[n]$ corresponds to downsampling $y\hi[n]$ at an interval determined by the frame rate of the microscope. The frame rate of a typical scanning microscopy system (that captures the changes in the calcium fluorescence) is determined by the amount of time required to spatially scan the desired field of view, which makes it significantly slower compared to the temporal scale of the neural spiking activity. We model this discrepancy by the downsampling operation (by a factor $\tD$). Therefore, the mathematical framework developed in this paper {can be
directly applied} to reconstruct the underlying spiking activity at a temporal scale finer than the sampling rate of the calcium signal.} {Using real calcium imaging data, we demonstrate a way to fuse our algorithm with a popular spike deconvolution algorithm called OASIS \cite{friedrich2017fast}.} \bl{OASIS solves an $l_1$ minimization problem similar to \eqref{eqn:P1_noisy} with only the non-negativity constraint, in order to exploit the sparse nature of the spiking activity. Unlike our approach where we wish to obtain spikes representation on a finer temporal scale, OASIS returns the spike estimates on the low-resolution grid. This is typically {used to infer the} spiking rate over a temporal bin equal to the sampling interval. We demonstrate that our proposed framework {can be
integrated with OASIS and improve its performance}. As we saw in the synthetic experiments, the noise level is an important consideration. By augmenting Algorithm $2$ with OASIS, referred as ``B-OASIS", the denoising power of $l_1$ minimization can be leveraged.}\bl{Let $\mf{\hat{x}}_{\text{l1}}\in \mathbb{R}^{M}$ be the estimate obtained on a low-resolution grid by solving the $l_1$ minimization problem such as the one implemented in OASIS. We can obtain an estimate of the denoised calcium signal as $\hat{y}{\lo}[n]=\alpha^{\tD}\hat{y}{\lo}[n]+\hat{x}_{\text{l1}}[n], n\geq 1$ and $\hat{y}{\lo}[0]=\hat{x}_{\text{l1}}[0]$. We can now utilize the denoised calcium signal $\hat{y}{\lo}[n]$ generated by OASIS to obtain the estimate $c_e[n]$ indirectly.} Due to the non-linear processing done by OASIS, it is difficult to obtain the resulting noise statistics. An important advantage of Algorithm $2$ is that it does not rely on the knowledge of the noise statistics. Hence, we can directly apply Algorithm $2$ on $\hat{c}_e[n]=\hat{y}{\lo}[n]-\alpha^{\tD} \hat{y}{\lo}[n-1]$ (instead of $c_e[n]$) to obtain a binary ``fused super-resolution spike estimate". 
\vspace{-0.3cm}
\begin{figure}[h]
\setlength\belowcaptionskip{-1.4\baselineskip}
    \centering
    \begin{tabular}{cc}
      \includegraphics[width=0.3\linewidth]{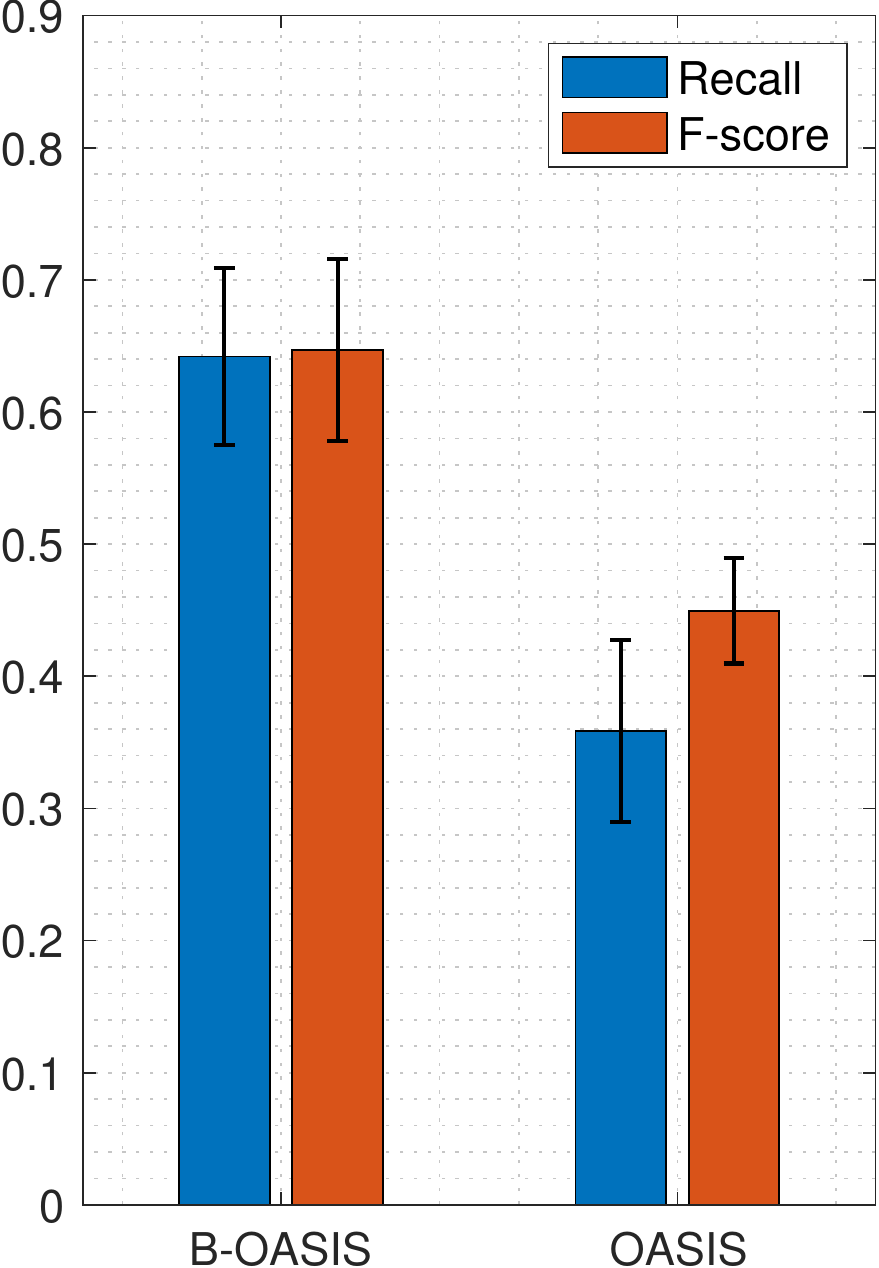}&\includegraphics[width=0.3\linewidth]{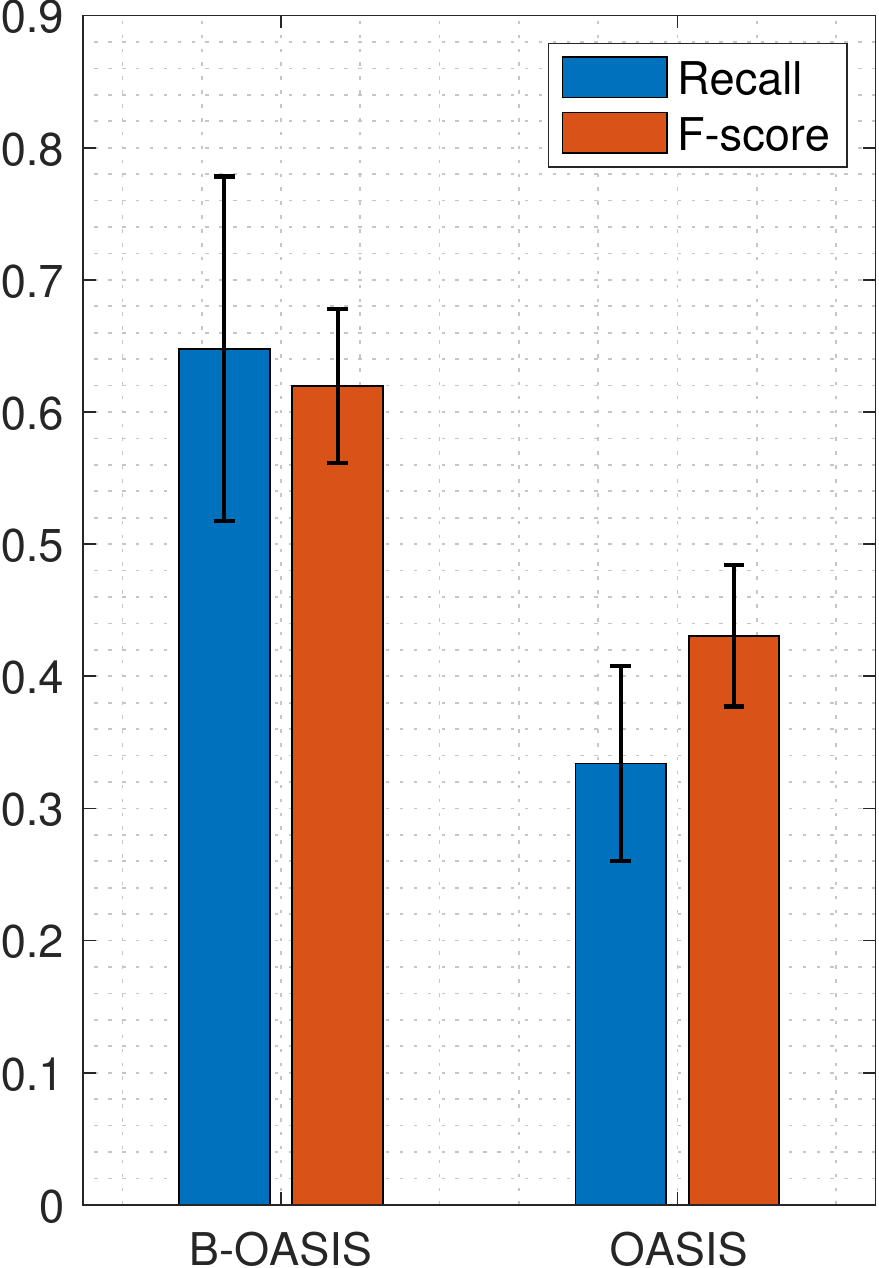}
    \end{tabular}
    \caption{Spike detection performance of OASIS and B-OASIS on GCaMP6f dataset sampled at (Left) $60$ Hz and (Right) $30$ Hz. We compare the average F-score of data points where the F-score of OASIS is $<0.5$. Standard deviation is depicted using the error bars.}
    \label{fig:quant}
\end{figure}



 
\vspace{-0.2cm}
\subsection{Results}
\vspace{-0.1cm}
We evaluate the algorithms on the publicly available GENIE dataset\cite{chen2013ultrasensitive,genie2015simultaneous} which consists of simultaneous calcium imaging and \textit{in vivo} cell-attached recording from the mouse visual cortex using genetically encoded GCaMP6f calcium indicator GCaMP6f\cite{chen2013ultrasensitive,genie2015simultaneous}. The calcium images were acquired at a frame rate of $~60$ Hz and the ground truth electrophysiology signal was digitized at $10$ KHz and synchronized with the calcium frames. In addition to using the original data, we also synthetically downsample it to emulate the effect of a lower frame rate of $30$ Hz, and evaluate how the performance changes by this downsampling operation. 
\begin{figure}[t!]
\setlength\belowcaptionskip{-1.5\baselineskip}
\begin{center}
\resizebox{!}{3.5cm}{
\begin{tikzpicture}
    \node (image1) at (0,0){\includegraphics[width=0.9\linewidth]{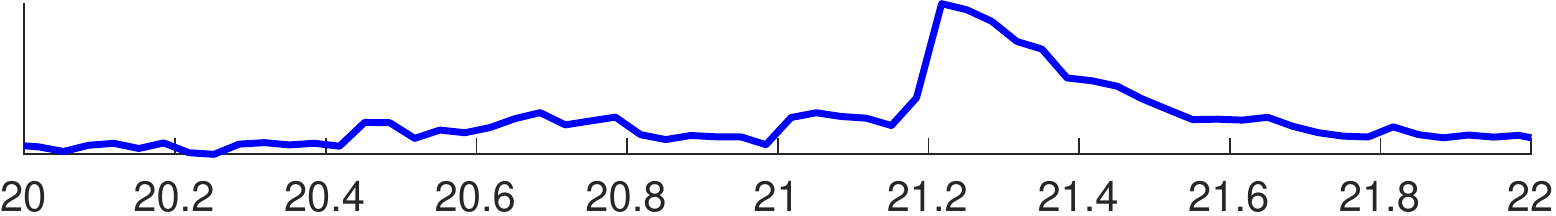}};
    \coordinate (a) at (-3.7,0.2);
    \node[label={left:{\scriptsize$y{\lo}[n]$}}] at (a)  {};
    \node (image2) at (0,-1.1){\includegraphics[width=0.9\linewidth]{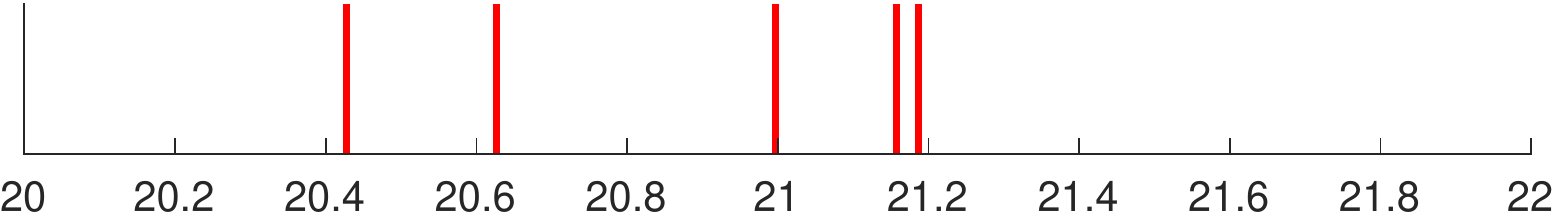}};
    \coordinate (b) at (-3.7,-1.0);
    \node[label={left:{\scriptsize$x{\hi}[n]$}}] at (b)  {};
    \node (image3) at (0,-2.2){\includegraphics[width=0.9\linewidth]{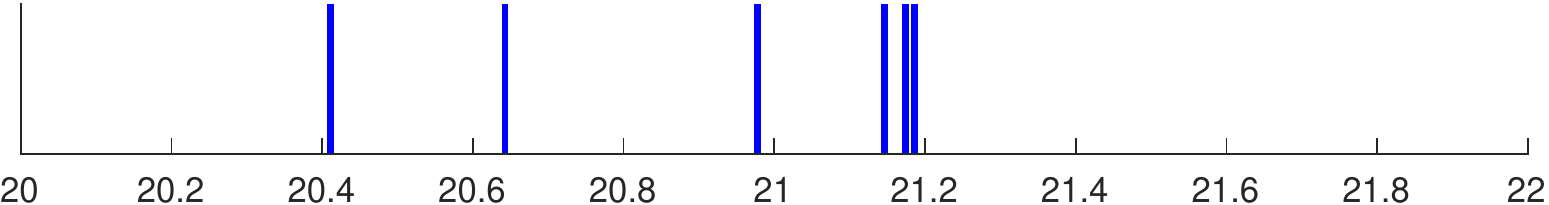}};
     \coordinate (c) at (-3.7,-2.1);
    \node[label={left:{\scriptsize$\hat{x}{\hi}^{\tiny\text{B-OA}}[n]$}}] at (c)  {};
    \node (image3) at (0,-3.3){\includegraphics[width=0.9\linewidth]{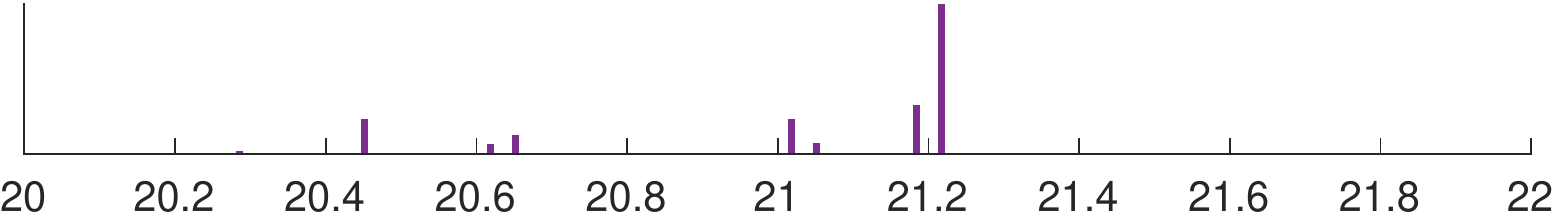}};
     \coordinate (d) at (-3.7,-3.1);
    \node[label={left:{\scriptsize$\hat{x}{\hi}^{\tiny\text{OA}}[n]$}}] at (d)  {};
    \coordinate (D1) at (-2.2,0.8);
    \coordinate (D2) at (-2.2,-4);
    \draw [red,dashed] (D1) -- (D2);
    
    
    \coordinate (D3p) at (-1.45,0.8);
    \coordinate (D4p) at (-1.45,-4);
    \draw [red,dashed] (D3p) -- (D4p);
    
    
    \coordinate (D3h) at (  -0.02,0.8);
    \coordinate (D4h) at (  -0.02,-4);
    \draw [red,dashed] (D3h) -- (D4h);
    
    \coordinate (D5) at (0.6,0.8);
    \coordinate (D6) at (0.6,-4);
    \draw [red,dashed] (D5) -- (D6);
    
    \coordinate (D5p) at (0.71,0.8);
    \coordinate (D6p) at (0.71,-4);
    \draw [red,dashed] (D5p) -- (D6p);

\end{tikzpicture}
}
\end{center}
\caption{Example of spike reconstruction on GENIE dataset (GCaMP6f indicator) using OASIS and B-OASIS (binary augmented) with calcium signal sampled at $30$Hz.}
\label{fig:real_raw_plot}
\end{figure}
\bl{In Fig. \ref{fig:real_raw_plot}, we extract an interval of $\sim 2$ sec ({from the neuron $1$ of the GCaMP6f indicator dataset}) and qualitatively compare the detected spikes with the ground truth. 
We downsample the data by a factor of $2$ to emulate frame rate of $30$ Hz, the low-rate grid becomes coarser. As a result of which, we observe an offset between ground truth spikes and estimate produced by OASIS. However, with the help of binary priors (B-OASIS), we can output spikes that are not restricted to be on the coarser scale, and this mitigates the offset observed in the raw estimates obtained by OASIS.}

We quantify the improvement in the performance by comparing the F-scores of OASIS and B-OASIS at both sampling rates ($60$ and $30$ Hz). Since the output of OASIS is non-binary, the estimated spikes are binarized by thresholding. To ensure a fair comparison, we select the threshold by a $80-20$ cross-validation scheme that maximizes the average F-score on a held-out validation set (averaged over $3$-random selections of the validation set). The tolerance for the F-score was set at $100$ ms. The dataset consisted of $34$ traces of length $\sim 234$ s. \bl{The OASIS algorithm has an automated routine to estimate the parameter $\alpha$, {which we utilize for our experiments}. {The amplitude $A$} is estimated using the procedure described in Appendix F. We use $\tD=12$ to obtain the spike representation for B-OASIS.} In order to quantify the performance boost achieved by augmentation, we isolate the traces where the $F-$score of OASIS drops below $0.5$ and compare the average F-score and recall for these data points. As shown in Fig. \ref{fig:quant}, at both sampling rates, we see a significant improvement in the average F-score of B-OASIS over OASIS, attributed to an increase in recall while keeping the precision unchanged. Additionally, despite downsampling, the spike detection performance is not significantly degraded with binary priors, although the detection criteria were unchanged. 
\vspace{-0.5cm}
\section{Conclusion}
\vspace{-0.12cm}
We theoretically established the benefits of binary priors in super-resolution, and showed that it is possible to achieve significant reduction in sample complexity over sparsity-based techniques. Using an AR(1) model, we developed and analyzed an efficient algorithm that can operate in the extreme compression regime ( $M\ll K$) by exploiting the special structure of measurements and trading memory for computational efficiency at run-time. We also demonstrated that binary priors can be used to boost the performance of existing neural spike deconvolution algorithms. In the future, we will develop algorithmic frameworks for incorporating binary priors into different neural spike deconvolution pipelines and evaluate the performance gain on diverse datasets. \bl{ The extension of this binary framework for higher-order AR filters is another exciting future direction.}
\vspace{-0.4cm}
\begin{appendix}
\section*{\small{Appendix A: Proof of Theorem $1$}}
\vspace{-0.2cm}
\begin{proof}
We show that for any $\alpha$ in $0<\alpha <1$, except possibly for a set consisting of only a finite number of points, \eqref{eqn:binary_eq} always has a unique binary solution. Consider all possible $\tD-$dimensional ternary vectors with their entries chosen from $\{-1,0,1\}$, and denote them as  $ \mathbf{v}^{(i)}=[v^{(i)}_1, v^{(i)}_2, \cdots, v^{(i)}_{\tD}]^T\in \{-1,0,1\}^{\tD}, 0\leq i \leq 3^{\tD}-1. $ We use the convention that $\mathbf{v}^{(0)}=0$. For every $i>0$, we define a set $\mathcal{Z}_{\mf{v}^{(i)}}$ determined by $\mf{v}^{(i)}$ as 
$
    {\mathcal{Z}_{\mf{v}^{(i)}}:={\big\{}x\in (0,1){\big\vert} \sum_{k=1}^{\tD} v^{(i)}_k x^{\tD-k}=0{\big\}}}.
$
Notice that $p_i(x):= \sum_{k=1}^{\tD} v^{(i)}_k x^{\tD-k}$ denotes a polynomial (in $x$) of degree at most $\tD\!-\!1$, whose coefficients are given by the ternary vector $\mf{v}^{(i)}$. The set $\mathcal{Z}_{\mf{v}^{(i)}}$ denotes the set of zeros of $p_i(x)$ that are contained in $(0,1)$. Since the degree of $p_i(x)$ is at most $\tD\!-\!1$,  $\mathcal{Z}_{\mf{v}^{(i)}}$ is a finite set with cardinality at most $\tD\!-\!1$.

Now suppose that the binary solution of \eqref{eqn:binary_eq} is non-unique, i.e., there exist $\mathbf{u},\mathbf{w}\in \{0,A \}^{L}$, $\mf{u}\neq \mf{w}$, such that \ben \smash{\mathbf{H}_{\tD}(\alpha) \mathbf{u}=\mathbf{H}_{\tD}(\alpha) \mathbf{w}\Rightarrow \mathbf{H}_{\tD}(\alpha) \mathbf{u}-\mathbf{H}_{\tD}(\alpha)\mathbf{w}=\mathbf{0}} \label{eqn: Huv}\een By partitioning $\mf{u,w}$ into blocks $\mf{u}^{(n)},\mf{w}^{(n)}$ in the same way as in \eqref{eqn:block_x}, we can re-write \eqref{eqn: Huv} as $u^{(0)} = w^{(0)}$ and 
\begin{align}
\smash{\sum_{i=1}^{\tD}\frac{1}{A}([\mf{u}^{(j)}]_i-[\mf{w}^{(j)}]_i) \alpha^{\tD-i} =0, \quad 1\leq j \leq M-1} \label{eqn:poly}
\end{align}
Since $\mf{u}\neq \mf{w}$, they differ at least at one block, i.e., there exists some $j_0, 1\leq j_0\leq M-1$ such that $\mf{u}^{(j_0)}\neq \mf{w}^{(j_0)}$. Define $\mf{b}:=\frac{1}{A}(\mf{u}^{(j_0)} -  \mf{w}^{(j_0)})$. Then, $\mf{b}$ is a non-zero ternary vector, i.e., $\mf{b} \in \{-1,0,1\}^{\tD}$. Now from \eqref{eqn:poly}, we have \ben \smash{\sum_{i=1}^{\tD} [\mf{b}]_i\alpha^{\tD-i} =0,} \label{eqn:polyb} \een
which implies that $\alpha \in \mathcal{Z}_{\mf{b}}$. Since $\mf{b}$ can be any one of the $3^{\tD}-1$ ternary vectors $\{\mf{v}^{(i)} \}_{i=1}^{3^{\tD}-1}$, \eqref{eqn:polyb} holds if and only if $\alpha \in \mathbb{S}:=\bigcup_{i=1}^{3^{\tD}-1} \mathcal{Z}_{\mf{v}^{(i)}}$, i.e., $\alpha$ is a root of at least one of the polynomials {$p_i(x)$ defined by the vectors $\mf{v}^{(i)}$ as their coefficients}. For each $\mf{v}^{(i)}$, since the cardinality of $\mathcal{Z}_{\mf{v}^{(i)}}$ is at most $\tD-1$, $\mathbb{S}$ is a finite set (of cardinality at most $(\tD-1)(3^{\tD}-1)$), and therefore its Lebesgue measure is $0$. This implies that $\eqref{eqn:binary_eq}$ has a non-unique binary solution only if $\alpha$ belongs to the measure zero set $\mathbb{S}$, thereby proving the theorem. 
\end{proof}
\vspace{-0.5cm}
\section*{Appendix B: Proof of Lemma $2$ and Lemma $3$}
\vspace{-0.1cm}
\begin{proof}
    (i) Let $s_n$ denote the sparsity (number of non-zero elements) of the $n^{\text{th}}$ block $\xhi^{(n)}$ of $\xhi$. 
    Then, the total sparsity is $\Vert \xhi \Vert_0=\sum_{n=0}^{M-1}s_n$. We will construct a vector 
    $\mathbf{v} \in \mathbb{R}^L$, $\mf{v}\neq \xhi$ that satisfies $\mathbf{c}=\mathbf{H}_{\tD}(\alpha)\mathbf{v}$ and $\Vert \xhi \Vert_0 \geq \Vert \mathbf{v}\Vert_0$. Following \eqref{eqn:block_x}, consider the partition of  $\mathbf{v}$ 
    $  
    \mf{v}=[v^{(0)},\mf{v}^{(1)\top},\cdots,\mf{v}^{(M-1)\top}]^{\top}
    $.
Firstly, we assign $v^{(0)}=c[0]=x{\hi}^{(0)}$. We construct $\mathbf{v}^{(n)}$ as follows. For each $n\geq 1$, there are three cases: 

{\bf Case I:} $s_n=0$. In this case, $\xhi^{(n)} =\mathbf{0}$ and hence $c[n]=0$. Therefore, we assign $\mathbf{v}^{(n)} = \xhi^{(n)} =\mathbf{0}$.

{\bf Case II:} $s_n =1$. First suppose that $[\xhi^{(n)}]_{\tD}=0$. We construct $\mathbf{v}^{(n)}$ as follows:
\begin{equation}
    \smash{[\mathbf{v}^{(n)}]_k =\begin{cases}
    c[n], &\text{ if }  k=\tD\\
    0, &\text{ else }
    \end{cases}} \label{eqn:construct1}.
\end{equation}
Next suppose that $[\xhi^{(n)}]_{\tD}\neq 0$. Since $s_n=1$, this implies that $[\xhi^{(n)}]_k= 0, k=1,\cdots, \tD-1$. In this case, we construct $\mathbf{v}^{(n)}$ as follows:
\begin{equation}
    \smash{[\mathbf{v}^{(n)}]_k =\begin{cases}
    c[n]/\alpha, &\text{ if } k=\tD-1\\
    0, &\text{ else }
    \end{cases}} \label{eqn:construct2}.
\end{equation} 
Notice that both \eqref{eqn:construct1} and \eqref{eqn:construct2} ensure that $\mathbf{v}^{(n)}\neq \xhi^{(n)}$ and $c[n] = \mathbf{h}^{T}_{\alpha}\mathbf{v}^{(n)}$. Moreover, $\Vert \mathbf{v}^{(n)}\Vert_0 = s_n.$ 

{\bf Case III:} $s_n\geq 2$. In this case, we follow the same construction as $\eqref{eqn:construct1}$. As before $\mathbf{v}^{(n)}$ satisfies $c[n]=\mathbf{h}^{\top}_{\alpha}\mathbf{v}^{(n)}$. Since $\Vert \xhi^{(n)}\Vert_0 \geq 2$ and $\Vert\mathbf{v}^{(n)}\Vert_0 =1$, we automatically have $\mathbf{v}^{(n)}\neq \xhi^{(n)}$, and  $\Vert \mathbf{v}^{(n)}\Vert_0 < s_n$.
Therefore, combining the three cases, we can construct the desired vector $\mathbf{v}$ that satisfies   $\mathbf{v}\neq \xhi$, $\mathbf{c}=\mathbf{H}_{\tD}(\alpha)\mathbf{v}$, and $\Vert \mathbf{v}\Vert_0 \leq \sum_{n=0}^{M-1} s_n = \Vert \xhi^{(n)}\Vert_0$. Therefore, the solution $\mathbf{x}^{\star}$ to \eqref{eqn:P0} satisfies $\Vert \mathbf{x}^{\star} \Vert_0 \leq \Vert \mathbf{v}\Vert_0 \leq  \Vert \xhi^{(n)}\Vert_0$.

(ii) In this case, we construct $\mathbf{v}^{(n_0)}$ according to Case III. Since $\Vert\mathbf{v}^{(n_0)} \Vert_0 <s_{n_0}$, and $\Vert\mathbf{v}^{(n)} \Vert_0 \leq s_{n}, n\neq n_0$, we have $\Vert \mathbf{v} \Vert_0 < \Vert \xhi \Vert_0$, implying $\Vert \mathbf{x}^{\star} \Vert_0 \leq \Vert \mathbf{v} \Vert_0 < \Vert \xhi \Vert_0$.
\end{proof}
\vspace{-0.4cm}
\subsection{Proof of Lemma $3$}
\begin{proof}
\bl{We will construct a vector $\mathbf{v}\in \mathbb{R}^{L}$ whose support is of the form \eqref{eqn:supp_l1_box}, that is feasible for \eqref{eqn:P1_box}, and
we will prove that it has the smallest $l_1$ norm. {Using the block structure given by \eqref{eqn:block_x}, we choose $\mf{v}^{(0)}=c[0]$. For each $n\geq 1$, we construct $\mf{v}^{(n)}$ based on the following two cases:} \\
\textbf{Case I:} $c[n] \geq A$. 
Let $k_n$ be the largest integer such that the following holds:
$
\mu[n]:=A(1+\alpha+\cdots+\alpha^{k_n-1})\leq c[n],
$
where $1\leq k_n\leq \tD$. Note that $k_n=1$ {always produces a valid lower bound. However, we are interested in the largest lower bound on $c[n]$ of the above form.} We choose 
\begin{equation*}
   { [\mf{v}^{(n)}]_k=\begin{cases}
    A, \quad \text{ if } \tD-k_n+1\leq k \leq \tD\\
    (c[n]-\mu[n])/\alpha^{k_n},\text{ if } k=\tD-k_n \\
    0, \text{ else }
    \end{cases}}
\end{equation*}
{ It is easy to verify that  
$\mf{h}_{\alpha}^{\top}\mf{v}^{(n)}=c[n]$.} From the definition of $k_n$, it follows that {$\mu[n]\leq c[n]<\mu[n]+A\alpha^{k_n}$ and hence, $0\leq (c[n]-\mu[n])/\alpha^{k_n}<A$}, which ensures that $\mf{v}$ obeys the box-constraints { in \eqref{eqn:P1_box}}. Now, let $\mf{v}_f\in \mathbb{R}^{L}$ be any feasible point { of \eqref{eqn:P1_box}}  which {must be of the form } $\mf{v}_f^{(0)}=c[0], \mf{v}_f^{(n)}=\mf{v}^{(n)}+\mf{r}^{(n)}$, where $\mf{r}^{(n)}\in \mathcal{N}(\mf{h}_{\alpha}^{\top})$ is a vector in the null-space of $\mf{h}_{\alpha}^{\top}$. It can be verified that the {following vectors $\{ \mf{w}_t\}_{t=1}^{D-1}$} form a basis for  $\mathcal{N}(\mf{h}_{\alpha}^{\top})$:
\begin{equation*}
    [\mf{w}_t]_k=\begin{cases}
    1, \quad &k=t\\
    -\alpha, \quad &k=t+1\\
    0, \quad &\text{ else }
    \end{cases}, 
\end{equation*}
Therefore, $\exists$ $\{\beta_t^{(n)}\}_{t=1}^{\tD-1}$ {such that $\mf{r}^{(n)}=\sum_{t=1}^{\tD-1}\beta_t^{(n)}\mf{w}_t$.} {We further consider two scenarios: (i) $1\leq k_n \leq \tD-2.$ In this case $[\mathbf{v}^{(n)}]_1=0$, and for $k=1,2,\cdots D$, $[\mathbf{v}_f^{(n)}]_k$ satisfies \footnote{In the definition of $\mathbf{v}_f^{(n)}$, an assignment will be ignored if the specified interval for $k$ is empty.} }
\begin{align*}
[\mf{v}_f^{(n)}]_k=\begin{cases}
 \beta_k^{(n)},\text{ if } k=1\\
 \beta_k^{(n)}-\alpha \beta_{k-1}^{(n)}, \text{ if } 2\leq k \leq \tD-k_n-1\\
 [\mf{v}^{(n)}]_k+ \beta_k^{(n)}-\alpha \beta_{k-1}^{(n)}, \text{ if }k=\tD-k_n\\
 A+ \beta_k^{(n)}-\alpha \beta_{k-1}^{(n)}, \text{ if }\tD-k_n+1\leq k \leq \tD-1\\
 A-\alpha \beta_{k-1}^{(n)}, \text{ if } k=\tD
 \end{cases}
\end{align*}
To ensure $\mf{v}_f^{(n)}$ is a feasible point for \eqref{eqn:P1_box}, the following must hold: $0\leq \beta_{\tD-1}^{(n)}\leq A/\alpha$ and $0 \leq \beta_1^{(n)}\leq A$. For $2\leq k \leq \tD-k_n-1$, the constraint $[\mf{v}_f^{(n)}]_k \geq 0$ implies $\beta_{k}^{(n)}\geq \alpha\beta_{k-1}^{(n)}$. Since $\beta_{1}^{(n)}\geq 0$, it follows that $\beta_k^{(n)}\geq 0$ for all $2\leq k \leq \tD-k_n-1$. For $\tD-k_n+1\leq k \leq \tD-1$, the constraint $[\mf{v}_f^{(n)}]_k \leq A$ implies $\beta_{k-1}^{(n)}\geq \beta_{k}^{(n)}/\alpha$. Since $\beta_{\tD-1}^{(n)}\geq 0$, it follows that $\beta_k^{(n)}\geq 0$ for all $\tD-k_n \leq k\leq \tD-1$.
(ii) $k_n \in \{\tD-1,\tD\}.$ {In this case, for $k=1,2,\cdots, D$, $[\mathbf{v}_f^{(n)}]_k$ satisfies}
\begin{align*}[\mf{v}_f^{(n)}]_k=\begin{cases}
[\mf{v}^{(n)}]_1+\beta_1^{(n)}, \text{ if } k=1\\
A+\beta_k^{(n)}-\alpha \beta_{k-1}^{(n)}, \text{ if } 2\leq k \leq \tD-1\\
A-\alpha\beta^{(n)}_{k-1}, \text{ if } k=\tD
\end{cases}
\end{align*}
For $2\leq k \leq \tD-1$, the box-constraint $[\mf{v}_f^{(n)}]_k \leq A$ implies $\beta_{k-1}^{(n)}\geq \beta_{k}^{(n)}/\alpha$. Since $\beta_{\tD-1}^{(n)}\geq 0$, it follows that $\beta_k^{(n)}\geq 0$ for all $1 \leq k\leq \tD-1$. Summarizing, we have established that $\beta_{i}^{(n)}\geq 0, \forall i.$\\
\textbf{Case II:} $c[n]<A$.
In this case, $\mf{v}^{(n)}$ is constructed following \eqref{eqn:construct1}, and hence $\mf{v}_f^{(n)}$ has the following structure:
\begin{align*}
    {[\mf{v}_f^{(n)}]_k=\begin{cases}
    \beta_k^{(n)},\text{if } k=1\\
    -\alpha \beta_{k-1}^{(n)}+\beta_k^{(n)},\text{ if } 2\leq k \leq \tD-1\\
    c[n]-\alpha \beta_{k-1}^{(n)}, \text{ if }k=\tD
    \end{cases}}
\end{align*}
To ensure $\mf{v}_f^{(n)}$ is a feasible point, it must hold that
$
  \beta_1^{(n)}\geq 0, \beta_k^{(n)}\geq\alpha \beta_{k-1}^{(n)}\geq 0 \text{ for } 2\leq k \leq \tD-1.$ {Hence, in both Cases I and II, we established that {$\beta_{k}^{(n)}\geq 0$}. For each case, since $\mf{v}_f^{(n)}$ is a non-negative vector $\forall n$, it can be verified that}
\begin{align*}
  \Vert \mf{v}_f\Vert_1 &=\sum_{n=0}^{M-1}\Vert \mf{v}_f^{(n)}\Vert_1={v}_f^{(0)}+\sum_{n=1}^{M-1}\sum_{k=1}^{\tD}[ \mf{v}_f^{(n)}]_k\\
  &=\underbrace{c[0]+\sum_{n=1}^{M-1}\sum_{k=1}^{\tD}[ \mf{v}^{(n)}]_k}_{\Vert \mf{v}\Vert_1}+\sum_{n=1}^{M-1}\sum_{k=1}^{\tD-1}(1-\alpha)\beta_k^{(n)}
\end{align*}
We used the fact that $\sum_{k=1}^D \sum_{t=1}^{D-1}\beta_t^{(n)} [\mathbf{w}_t]_k = \sum_{t=1}^{\tD-1}(1-\alpha)\beta_t^{(n)}$. 
If $\mf{v}_f \neq \mf{v}$, we must have  $\beta_k^{(n)} \neq 0$ for some $k$ and $n>0$. This implies that $\Vert \mf{v}_f\Vert_1>\Vert \mf{v}\Vert_1$. {It is easy to see that the support of the constructed vector is of the form \eqref{eqn:supp_l1_box}. Moreover, based on the above argument, $\mathbf{v}$ is the only vector that has the minimum $l_1$ norm among all possible feasible points of \eqref{eqn:P1_box}.}}
\end{proof}
\vspace{-0.6cm}

\section*{\small{Appendix C: Proof of Lemma $7$}}
\vspace{-0.1cm}
\begin{proof}
\bl{For any $0<\alpha \leq 0.5$, we begin by showing that for an integer $p\geq 1$ the following inequality holds:
\begin{align}
   \sum_{k=1}^{p}\alpha^{\tD-k}=\alpha^{\tD-p-1}\left(\frac{1-\alpha^p}{1/\alpha-1}\right) < \alpha^{\tD-p-1} \label{eqn:part_sum}
\end{align}
since $1/\alpha-1\geq 1$ and $1-\alpha^{p}<1$ in the regime $0<\alpha \leq 0.5$. 
Let $\mathcal{S}_1=\{0,\alpha^{\tD-1},\alpha^{\tD-2},\alpha^{\tD-1}+\alpha^{\tD-2}\}$. Notice that the elements of $\mathcal{S}_1$ are sorted in ascending order for any $\alpha$ and $\tD$. Now, we recursively define the sets $\mathcal{S}_i$ as follows:
\begin{align}
 \mathcal{S}_{i}:=\{\mathcal{S}_{i-1},\mathcal{S}_{i-1}+\alpha^{\tD-1-i}\}, \ 2\leq i \leq \tD-1 \label{eqn:S_i}
\end{align}
Our hypothesis is that for {every $2\leq i \leq \tD-1$ $\alpha \in (0,0.5]$ and $\tD$, the set $\mathcal{S}_{i}$ as defined in \eqref{eqn:S_i}, is automatically sorted in ascending order.} We prove this via induction. For $i=2$, the sets $\mathcal{S}_{1}$ and $\mathcal{S}_1+\alpha^{\tD-3}$ are individually sorted. Moreover from \eqref{eqn:part_sum}, we can show that:
$
    \max_{a\in \mathcal{S}_{1}} a =\alpha^{\tD-1} + \alpha^{\tD-2} < \alpha^{\tD-3} = \min_{b\in \mathcal{S}_1+\alpha^{\tD-3}} b.
$
This shows that $\mathcal{S}_2$ is ordered,  establishing the the base case of our induction. Now, assume $\mathcal{S}_{i}$ is ordered for some $2\leq i \leq \tD-2$. We need to show that $\mathcal{S}_{i+1}$ is also ordered. As a result of the induction hypothesis, both  $\mathcal{S}_{i}$ and $\mathcal{S}_{i}+\alpha^{\tD-2-i}$ are ordered. Using the ordering of $\mathcal{S}_{i}$, we have:
$
  \max_{a\in \mathcal{S}_{i}} a = \sum_{j=1}^{i+1} \alpha^{\tD-j},  \min_{b\in \mathcal{S}_{i}+\alpha^{\tD-2-i}} b = \alpha^{\tD-(i+1)-1}.
$
From \eqref{eqn:part_sum}, we can conclude that $\max_{a\in \mathcal{S}_{i}} a < \min_{b\in \mathcal{S}_{i}+\alpha^{\tD-2-i}} b$ and hence, $\mathcal{S}_{i+1}$ is also ordered. This completes the induction proof. Also, note that 
for $\alpha \in (0,0.5]$, we have  $\Theta^{\text{sort}}_{\alpha}=\mathcal{S}_{\tD-1}$.\\ Let $\Delta_{\min}(\mathcal{S}_i)$ be the min. distance between the elements of the set $\mathcal{S}_i$. It is easy to see that $\Delta_{\min}(\mathcal{S}_{i})=\Delta_{\min}(\mathcal{S}_{i}+\alpha^{\tD-2-i})$. {Since $\mathcal{S}_{i}$ is sorted for $\alpha \in (0,0.5]$}, $\Delta_{\min}(\mathcal{S}_{i})$ is given by: 
\begin{align}
    \Delta_{\min}(\mathcal{S}_i)&=\min(\Delta_{\min}(\mathcal{S}_{i-1}),\min_{x\in \mathcal{S}_{i-1}+\alpha^{\tD-1-i}} x-\max_{y\in \mathcal{S}_{i-1}}y)\notag\\
&=\min \{ \Delta_{\min}(\mathcal{S}_{i-1}),\alpha^{\tD-i-1}-\sum_{j=1}^{i}\alpha^{\tD-j}\}.\label{eqn:d_min_recur}
\end{align}
Now, we use induction to establish the following conjecture:
\begin{align}
\Delta_{\min}(\mathcal{S}_i)=\alpha^{\tD-1},\ 1\leq i\leq \tD-1\label{eqn:d_min_ind}
\end{align}
For the base case $i=1$,
$
  \Delta_{\min}(\mathcal{S}_1)=\min(\alpha^{\tD-1}, \alpha^{\tD-2}-\alpha^{\tD-1})=\alpha^{\tD-1},
$
where the last equality holds since $\alpha\in (0,0.5]  \Rightarrow\alpha^{\tD-1}(1/\alpha-1)\geq \alpha^{\tD-1}$. Suppose \eqref{eqn:d_min_ind} holds for some $1\leq i \leq \tD-2$. {From the definition of $\Delta_{\min}(\mathcal{S}_{i+1})$ and the induction hypothesis that $\Delta_{\min}(\mathcal{S}_{i}) = \alpha^{D-1}
$, it follows that} 
 $\Delta_{\min}(\mathcal{S}_{i+1})=\min\{\alpha^{\tD-1},\alpha^{\tD-(i+1)-1}-\sum_{j=1}^{i+1}\alpha^{\tD-j}\}
$.
{Again, from the definition of $\Delta_{\min}(\mathcal{S}_{i})$ in \eqref{eqn:d_min_recur}, and the induction hypothesis we also have $\alpha^{\tD-i-1} - \sum_{j=1}^{i} \alpha^{\tD-j} \geq \Delta_{\min}(\mathcal{S}_{i})=\alpha^{\tD-1}$.} Using this and the fact that $\alpha \leq 0.5$, we can show:
\begin{eqnarray*}
 \alpha^{\tD-i-2}&\hspace{-0.3cm}-\alpha^{\tD-i-1}-\sum_{j=1}^{i} \alpha^{\tD-j}\geq \alpha^{\tD-i-2}-2\alpha^{\tD-i-1}+\alpha^{\tD-1} \\
 &\smash{\geq\alpha^{\tD-1}+\alpha^{\tD-i-1}(1/\alpha-2)\geq \alpha^{\tD-1}}
\end{eqnarray*}
Therefore { $\Delta_{\min}(\mathcal{S}_{i+1})\!=\!\min\{\alpha^{\tD-1},\alpha^{\tD-i-2}\!-\!\sum_{j=1}^{i+1}\alpha^{\tD-j}\}=\alpha^{D-1}$}. Thus, we can conclude that  $\Delta_{\min}(\alpha,
\tD)\!=\!\Delta_{\min}(\mathcal{S}_{\tD-1})\!=\!\alpha^{\tD-1}$. 
}
\end{proof}
\vspace{-0.5cm}
\section*{\small{Appendix D: Proof of Theorem $3$}}
\vspace{-0.1cm}
\begin{proof}
The probability of incorrectly identifying $\xhi^{(n)}$ from a single measurement $c_e[n]$ is given by  
\begin{align*}
    &\qquad\qquad p_e:={\mathbb{P}(\hat{\mf{x}}{\hi}^{(n)}\neq \mathbf{x}{\hi}^{(n)})}\\
    &=\smash{\sum_{k=0}^{l_{\tD}} 
    \mathbb{P}(\hat{\mf{x}}{\hi}^{(n)}\neq \xhi^{(n)}\vert\xhi^{(n)}
    =\mathbf{\tilde{v}}_k)\mathbb{P}(\xhi^{(n)}=\mathbf{\tilde{v}}_k)} 
\end{align*}
Given a binary vector $\mathbf{z} \in \{0,1\}^{\tD}$, define the function 
$\psi(\mathbf{z}):=\sum_{k=1}^{\tD}z_k$, which denotes the count of ones in $\mathbf{z}$. Since the noisy observations are given by $c_e[n]=c[n]+e[n]$, where $e[n] = w[n]-\alpha^D w[n-1]$, it follows from  assumption (A2) that $e[n] \sim\mathcal{N}(0,\sigma_1^2)$ where $\sigma_1^2=(1+\alpha^{2\tD})\sigma^2$. 
From \eqref{eqn:err_bou}, we obtain $\mathbb{P}(\hat{\mathbf{x}}{\hi}^{(n)}\neq \xhi^{(n)}|\xhi^{(n)}=\mathbf{\tilde{v}}_0) = \mathbb{P}(e[n] \in \mathcal{E}_0)= Q(\alpha^{\tD-1}/(2\sigma_1)) $.
Similarly, 
$\mathbb{P}(\hat{\mathbf{x}}{\hi}^{(n)}\neq \xhi^{(n)}|\xhi^{(n)}=\mathbf{\tilde{v}}_{l_\tD})= \mathbb{P}(e[n]\in \mathcal{E}_{l_D})  =Q((\tilde{\theta}_{l_\tD}-\tilde{\theta}_{l_\tD-1})/(2\sigma_1))
   =Q(\alpha^{\tD-1}/(2\sigma_1)).$
The last equality follows from the fact that  $\tilde{\theta}_{l_\tD}-\tilde{\theta}_{l_\tD-1}=\alpha^{\tD-1}$. Finally, when conditioned on $\mathbf{x}{\hi}^{(n)}=\mathbf{\tilde{v}}_k$ for $0<k<l_\tD$, from \eqref{eqn:err_in}, we obtain  
$
    \mathbb{P}(\hat{\mathbf{x}}^{(n)}\neq \xhi^{(n)}|\xhi^{(n)}=\mathbf{\tilde{v}}_k) = \mathbb{P}(e[n]\in \mathcal{E}_k) =Q(\frac{\tilde{\theta}_{k}-\tilde{\theta}_{k-1}}{2\sigma_1})
   +Q(\frac{\tilde{\theta}_{k+1}-\tilde{\theta}_{k}}{2\sigma_1}).
$
Due to Assumption (\textbf{A1}) on $\xhi$, we have $
\mathbb{P}(\xhi^{(n)}=\mathbf{\tilde{v}}_k)=p^{\psi(\mathbf{\tilde{v}}_k)}(1-p)^{\tD-\psi(\mathbf{\tilde{v}}_k)}$.
Therefore, $p_e$ is given by
\begin{align}
    &p_e=Q(\alpha^{\tD-1}/(2\sigma_1))(1-p)^\tD+Q(\alpha^{\tD-1}/(2\sigma_1))p^\tD+\notag\\
    &\sum_{k=1}^{l_{\tD}-1}\left(Q(\frac{\tilde{\theta}_{k}-\tilde{\theta}_{k-1}}{2\sigma_1})+Q(\frac{\tilde{\theta}_{k+1}-\tilde{\theta}_{k}}{2\sigma_1})\right)p^{\psi(\mathbf{v}_k)}(1-p)^{\tD-\psi(\mathbf{v}_k)} \label{eqn:prob_E}
\end{align}
The spike train $\xhi$ is incorrectly decoded if at least one of the blocks are decoded incorrectly, hence, the total {probability of error is given by:
\begin{align}
    &\smash{\mathbb{P}(\bigcup_{n=0}^{M-1}\hat{\mathbf{x}}^{(n)}\neq \xhi^{(n)})\leq \sum_{n=0}^{M-1}\mathbb{P}(\hat{\mathbf{x}}^{(n)}\neq \xhi^{(n)})= M p_e} \notag\\
    &\overset{(a)}{\leq} 2M Q(\Delta \theta_{\min}(\alpha,\tD)/(2\sigma_1))\sum_{j=0}^{\tD} p^{j}(1-p)^{\tD-j}{\tD \choose j} \notag\\
    &\overset{(b)}{\leq}2M\exp(-\Delta \theta^2_{\min}(\alpha,\tD)/(4\sigma^2_1))\label{eqn:pet}
\end{align}
where the first inequality follows from union bound and second equality is a consequence of \eqref{eqn:prob_E}.} The inequality $(a)$ follows from the monotonically decreasing property of $Q(.)$ function and the sum can be re-written by grouping all terms with the same count, i.e., $\psi(\mathbf{v}_k)=j$. 
The inequality $(b)$ follows from the inequality $Q(x)\leq \exp(-x^2/2)$ for $x>0$. If the SNR condition \eqref{eqn:snr_cond} holds then from \eqref{eqn:pet} the total probability of error is bounded by $\delta$.
\end{proof}

\vspace{-0.5cm}
\section*{\small{Appendix E: Proof of Theorem $4$}}
\begin{proof}
\bl{We first begin by showing that $\alpha\in \mathcal{F}_{\tD}$ implies that \eqref{eqn:cluster} holds and hence the mapping of spikes with the same counts are clustered. {Notice that for $k=0$, $\theta_{\max}^{k}=\theta_{\min}^{k}=0$. For $k\geq 1$, it is easy to verify that}  $\theta_{\max}^{k}$ and $\theta_{\min}^{k}$ are attained by the spiking patterns $00...1111$ (with $k$ consecutive spikes at the indices $\tD-k+1$ to $\tD$)  and $111...000$ (with consecutive spikes at the indices $1$ to $k$), which allows us to simplify \eqref{eqn:cluster} as $\alpha^{\tD-1}>0$ for $k=0$ and $
 \sum_{i=1}^{k+1}\alpha^{\tD-i}>\sum_{j=0}^{k-1} \alpha^{j}, \ k=1,\cdots,\tD-1$. {The values of $\alpha$ that satisfy each of these relations can be described by the following sets:} 
\begin{equation*}
\smash{\mathcal{G}_0=\{\alpha\in(0,1)\vert \alpha^{\tD-1}>0\},\smash{\mathcal{G}_{k}=\{\alpha\in (0,1)\vert r_k(\alpha)<0\}},} 
\end{equation*}
where $r_k(\alpha)=\alpha^{\tD}-\alpha^{\tD-k-1}-\alpha^{k}+1$ for $1 \leq k\leq \tD-1 $. {It is easy to see that  $\mathcal{F}_{\tD}=\mathcal{G}_{k_0}$.} 
Observe that the relations are symmetric, i.e., $\mathcal{G}_k=\mathcal{G}_{\tD-k-1}$. Furthermore, {for $1\leq k\leq\tD/2$}, we show that $\mathcal{G}_{k} \subseteq \mathcal{G}_{k-1}$ as follows. {Trivially, $\mathcal{G}_1 \subset \mathcal{G}_0$. For $2\leq k \leq D/2$, observe that }
$
    r_k(\alpha)-r_{k-1}(\alpha)=\alpha^{\tD-k}(1-1/\alpha)-\alpha^{k}(1-1/\alpha)=(1/\alpha-1)(\alpha^{k}-\alpha^{\tD-k})\geq 0.
$
Therefore, $\alpha \in \mathcal{G}_{k} \Rightarrow \alpha \in \mathcal{G}_{k-1}$, {$k=1,2\cdots, k_0$. Moreover, since $\mathcal{G}_k=\mathcal{G}_{\tD-k-1}$, it follows that $\mathcal{F}_{\tD}=\mathcal{G}_{k_0} = \cap_{k=0}^{\tD-1} \mathcal{G}_{k}.$ Hence, 
 $\alpha \in \mathcal{F}_{\tD} \Rightarrow \alpha \in \mathcal{G}_i$ for all $0\leq i \leq \tD-1$, which implies that \eqref{eqn:cluster} holds.}
If the noise perturbation satisfies $\vert w[n]\vert < \Delta_{\min}^{\text{c}}(\alpha,\tD)/4$, it implies $\vert e[n] \vert< \Delta_{\min}^{\text{c}}(\alpha,\tD)/2$. For any block $\mf{x}\hi^{(n)}\in \mathcal{C}^{\tD}_k$, $\theta_{\min}^{k}\leq \mf{h}_{\alpha}^{\top}\mf{x}\hi^{(n)} \leq \theta^{k}_{\max}$. If $\vert e[n] \vert < \Delta_{\min}^{\text{c}}(\alpha,\tD)/2$, we have
\begin{align*}
     &\mf{h}_{\alpha}^{\top}\mf{x}\hi^{(n)}+e[n] < \theta^{k}_{\max}+\frac{\Delta^c_{\min}(\alpha,\tD)}{2}< \theta^{k}_{\max}+\frac{\theta^{k+1}_{\min}-\theta^{k}_{\max}}{2}\\
     &\mf{h}_{\alpha}^{\top}\mf{x}\hi^{(n)}+e[n] > \theta^{k}_{\min}-\frac{\Delta^c_{\min}(\alpha,\tD)}{2}> \theta^{k}_{\min}-\frac{\theta^{k}_{\min}-\theta^{k-1}_{\max}}{2}
\end{align*}
This shows that { whenever $\alpha \in \mathcal{F}_{\tD}$, the condition $\vert e[n]\vert <\Delta_{\min}^{\text{c}}(\alpha,\tD)/2$ is sufficient for \eqref{eqn:band_count} to hold  $\forall \ \gamma[n]$} and hence the spike count can be exactly recovered. 
}   
\end{proof}
\vspace{-0.5cm}
\section*{\small{Appendix F: Amplitude Estimation}}
We suggest a procedure to estimate the binary amplitude $A$, if it is unknown. We first evaluate the signal $c[n]$ from different time instants $n=1,2,\cdots,M-1$. For some $1\leq n_0 \leq M-1$, we estimate a set $\mathcal{A}=\{A_k\}$ of candidate amplitudes:
$
    \smash{A_k: = c[n_0]/\mathbf{h}_{\alpha}^T \mathbf{v}_k \text{ where } \mathbf{v}_k\in \mathcal{S}_{\text{all}}}.
$
Only a certain amplitudes can generate $c[n_0]$ from a valid binary spiking pattern $\mathbf{v}_k \in \mathcal{S}_{\text{all}}$. Our goal is to prune $\mathcal{A}$ by sequentially eliminating  certain candidate amplitudes from the set based on a consistency test across the remaining measurements $c[n]$. At the $t^{\text{th}}$ stage ($t=2,3,\cdots$), for every remaining candidate amplitude $A_k \in \mathcal{A}$, we perform the following consistency test with $c[n]$, to identify if a candidate amplitude can potentially generate the corresponding measurement $c[n]$. Suppose there exists a spiking pattern $\mathbf{v}_l \in \mathcal{S}_{\text{all}}$ such that 
\begin{equation}
    \smash{c[n] = A_k \mathbf{h}_{\alpha}^T\mathbf{v}_l} \label{eqn:amp_check1}
\end{equation}
then $A_k$ remains a valid candidate. If we cannot find a corresponding $\mathbf{v}_{l} \in \mathcal{S}_{\text{all}}$ for an amplitude $A_k$, we remove it, $\mathcal{A}= \mathcal{A}\setminus A_k$. In presence of noise, \eqref{eqn:amp_check1} can be modified to allow a tolerance $\gamma$ as we may not find an exact match. The tolerance $\gamma$ is chosen to be $0.5$ in the experiments on the GENIE dataset. This procedure prunes out possible values for the amplitude by leveraging the shared amplitude across multiple measurements $c[n]$. 
\end{appendix}
\vspace{-0.3cm}
\section{\small{Acknowledgement}}
\vspace{-0.1cm}
The authors would like to thank Prof. Nikita Sidorov, Department of Mathematics at the University of Manchester, for helpful discussions regarding 
computational challenges in finding finite $\beta$-expansion in the range $\beta\in (1,2)$. 
This work was supported by Grants ONR N00014-19-1-2256,
DE-SC0022165, NSF 2124929, and NSF CAREER ECCS 1700506.
\vspace{-0.3cm}
\bibliographystyle{IEEEtran}
\bibliography{ref}
\end{document}